\documentclass[11pt]{article}

\usepackage[margin=1in]{geometry}
\usepackage{amsmath,amssymb,amsthm,mathtools}
\usepackage{enumitem}
\usepackage{booktabs}
\usepackage{float}
\usepackage{microtype}
\usepackage{xcolor, xspace}
\usepackage{hyperref}
\usepackage[nameinlink,capitalize,noabbrev]{cleveref}
\usepackage{thmtools}
\usepackage{thm-restate}
\usepackage{algorithm}

\newtheorem{theorem}{Theorem}[section]
\newtheorem{lemma}[theorem]{Lemma}
\newtheorem{proposition}[theorem]{Proposition}
\newtheorem{corollary}[theorem]{Corollary}

\theoremstyle{definition}

\newtheorem{remark}[theorem]{Remark}

\newcommand{\R}{\mathbb{R}}
\newcommand{\Z}{\mathbb{Z}}

\newcommand{\cut}{\delta}

\newcommand{\FAIL}{\operatorname{FAIL}}

\def\final{0}  
\def\iflong{\iffalse}
\ifnum\final=0  
\newcommand{\knote}[1]{{\color{blue}[{Karthik: \bf #1}]\marginpar{\color{blue}*}}}
\newcommand{\todo}[1]{{\color{red}[{\tiny TODO: \bf #1}]\marginpar{\color{red}*}}}
\newcommand{\znote}[1]{{\color{purple}[{Weihao: \bf #1}]\marginpar{\color{purple}*}}}
\newcommand{\cnote}[1]{{\color{magenta}[{Chandra: \bf #1}]\marginpar{\color{magenta}*}}}
\else 
\newcommand{\knote}[1]{}
\newcommand{\todo}[1]{}
\newcommand{\znote}[1]{}
\newcommand{\cnote}[1]{}
\fi  

\title{Multiobjective Hypergraph Min-Cut in Quasi-Polynomial Time\thanks{Grainger College of Engineering, University of Illinois, Urbana-Champaign, Email: {\tt\{karthe, chekuri, weihaoz3\}@illinois.edu}. Supported in part by NSF grant CCF-2402667. }}
\author{Karthekeyan Chandrasekaran \and Chandra Chekuri \and Weihao Zhu}
\date{}

\begin{document}
\maketitle

\begin{abstract}
We study the multiobjective hypergraph min-cut problem: Given a hypergraph \(H=(V,E)\) and $k$ cost functions $c_1, c_2, \ldots, c_k:E\to\Z_{\ge 0}$, the goal is to find a non-empty proper subset $U\subsetneq V$ of vertices with minimum $\max_{i\in [k]} c_i(\delta(U))$. 
When $k$ is part of input, the problem is NP-hard (even in graphs). We focus on fixed-constant $k$ setting (e.g., $k=1, 2, 3, \ldots$). Single-objective hypergraph min-cut as well as multiobjective graph min-cut for a constant number of objectives admit polynomial-time algorithms. In contrast to these special cases, the complexity of multiobjective hypergraph min-cut remains open even for $k=2$. Known techniques fail to extend due to structural differences between graphs and hypergraphs. For $k$-objective hypergraph min-cut when $k$ is a fixed constant, we design a randomized PTAS, and two different randomized quasi-polynomial time algorithms. As an application of our $2$-objective hypergraph min-cut results, we obtain a quasi-polynomial time approximation scheme (QPTAS) for hypergraph connectivity interdiction. AI tools were used to iterate and refine the algorithmic ideas underlying this work. 
\end{abstract}
\newpage
\setcounter{page}{1}
\section{Introduction}
We consider the multiobjective min-cut problem in hypergraphs. 
A hypergraph $H=(V, E)$ is specified by a vertex set $V$ and a hyperedge set $E$ where each hyperedge $e\in E$ is a subset of $V$. If all hyperedges have size at most $2$, then the hypergraph is simply a graph. 
The \emph{size} of a hypergraph is the sum of the sizes of its hyperedges and the number of vertices, denoted $p:=|V|+\sum_{e\in E}|e|$. 
For a non-empty proper subset $U\subsetneq V$, we denote $\delta_H(U):=\{e\in E: e\cap U\neq \emptyset \text{ and } e\setminus U\neq \emptyset\}$ to be the set of hyperedges crossing $U$. 
A subset $F\subseteq E$ of hyperedges is a \emph{cut} 
if $F=\delta_H(U)$ for some non-empty proper subset $U\subseteq V$. For a hyperedge-cost function $c: E\rightarrow \Z_{\ge 0}$ and a subset $F$ of hyperedges, we denote $c(F):=\sum_{e\in F}c(e)$. 
We define the multiobjective min-cut problem in hypergraphs below. 
\begin{center}
\fbox{%
\begin{minipage}{0.95\textwidth}
\textbf{$k$-Objective Hypergraph Min-Cut for Fixed Constant $k$.}

\textbf{Given:} Hypergraph \(H=(V,E)\) and costs \(c_1,c_2, \ldots, c_k:E\to\Z_{\ge0}\).

\textbf{Goal:} Find a cut $F$ with minimum $\max_{i\in [k]}c_i(F)$, i.e., 
\begin{equation}
        \min_{F\subseteq E:\ F \text{ is a cut}}
        \max_{i\in [k]} c_i(F).
        \tag{Min-Max-Cut}
        \label{eq:mm-cutset}
\end{equation}

\end{minipage}}
\end{center}

The problem can equivalently be phrased as a cut-shore formulation: find a non-empty proper subset $U\subsetneq V$ with minimum $\max_{i\in [k]}c_i(\delta(U))$, i.e., 
\begin{equation}
        \min_{\emptyset\ne U\subsetneq V}
        \max_{i\in [k]} c_i(\cut_H(U)).
        \tag{Min-Max-Cut-Shore}
        \label{eq:mm-shore}
\end{equation}
Several different cut-shores may have the same cut in a hypergraph. 
We work with the \eqref{eq:mm-cutset} formulation instead of the \eqref{eq:mm-shore} formulation, partly due to the status of the single-objective hypergraph min-cut problem. For the single-objective problem, it is well-known that the number of optimum solutions to \eqref{eq:mm-cutset} is at most $\binom{n}{2}$ (if the hypergraph is connected and all hyperedges have non-zero cost) and they can all be enumerated in polynomial time, but the number of optimum solutions to \eqref{eq:mm-shore} could be exponential in $n$, where $n$ is the number of vertices in the input hypergraph. 
The cut-shore formulation allows a different interpretation of the problem that is useful in applications. Suppose we have $k$ different hyperedge-weighted hypergraphs $H_1=(V,E_1,c_1),\ldots,H_k=(V,E_k,c_k)$ on the same vertex set; the hyperedge-sets are different and the cost $c_i : E_i \rightarrow \mathbb{R}$ is defined only on $E_i$ for each $i\in [k]$. We think of these hypergraphs as overlaid on each other. Then, the goal is to find  a non-empty proper subset $U \subsetneq V$ that minimizes $\max_{i\in [k]} c_i(\delta(U))$. It is not hard to see that this formulation is equivalent to the one we gave.

We choose to work with non-negative integral costs for ease of presentation in the introduction---our results hold for non-negative rational costs as well. 
When $k$ is part of the input, the problem is strongly NP-hard even in graphs \cite{AZ06}. We focus on the case of fixed constant $k$ throughout this work. 
We design a quasi-polynomial-time algorithm and a PTAS for $k$-objective hypergraph min-cut for every fixed constant $k$. 
We denote the special case of $k=1$ as Hypergraph Min-Cut and the case of $k=2$ as Bi-objective Hypergraph Min-Cut. We will mostly focus on Bi-objective Hypergraph Min-Cut throughout the introduction to illustrate the ideas, since that is the main bottleneck. 

\paragraph{Motivation.} In optimization, it is natural to have multiple different objectives, and for this reason, multiobjective combinatorial optimization arises in diverse settings, and has many applications. A classic example is the $s$-$t$ shortest path with hop and length constraints, and is motivated by applications in telecommunication networks; this corresponds to $k=2$ objectives. Some of the multiobjective problems are more naturally phrased as budgeted problems and we discuss this view subsequently. Length and hop-constrained network design is a very active area of recent research --- see \cite{hop_distance21,filtser22,hershkowitz2026planar,chekuri2026polylogarithmic} and references therein. When $k$ is larger, the $s$-$t$ shortest path with multiple objectives is also called the robust $s$-$t$ path problem --- see \cite{li2023polylogarithmic} for some recent work. 
Papadimitriou and Yannakakis connect multiobjective optimization to approximate Pareto optimal sets \cite{PY00} and also discuss several important examples and applications. Another motivation for multiple objectives comes from fairness considerations and there is a substantial recent literature in this direction with many different problems; we do not address them here. 
Armon and Zwick \cite{AZ06} were the first to consider multiobjective graph global min-cut. Two significant applications of the tractability of bi-objective min-cut in graphs are to: (i) graph connectivity interdiction problem--- Zenklusen obtained a PTAS for this problem by reducing it to bi-objective graph budgeted-cut \cite{Zen14} and (ii) flexible network design problem---verifying whether a solution is feasible to this problem reduces to bi-objective graph budgeted-cut \cite{AHM2022}. 
In this vein, multiobjective min-cut in hypergraphs is a fundamental problem for which tractability results will likely lead to interesting applications and fresh directions. We will derive some concrete applications of our results later.

\paragraph{Graphs.} 
Single-objective graph min-cut is a classical problem and is polynomial-time solvable via network flow techniques and also via submodular function minimization --- these techniques solve the min-$s$-$t$-cut problem to which one can reduce the (global) min-cut problem. 
However, bi-objective min-$s$-$t$-cut is NP-Hard \cite{PY00}, so these techniques do not apply for bi-objective graph min-cut. 
The graph cut function is symmetric submodular which leads to another approach to solve single-objective graph min-cut via a vertex-ordering technique \cite{NI92, Que98}, but this connection does not seem to be useful to solve bi-objective graph min-cut. 
Armon and Zwick \cite{AZ06} identified a scalarization approach to solve bi-objective graph min-cut in polynomial time. They showed that if $U\subseteq V$ is an optimum cut-shore, then $U$ is a $2$-approximate min-cut under the scalarized cost-function $c=c_1+c_2$, i.e., $c(\delta_H(U))\le 2\lambda_{H, c}$, where $\lambda_{H, c}:=\min\{c(\delta(A)): \emptyset\neq A\subsetneq V\}$. 
It is well-known that the number of $2$-approximate min-cuts in a connected graph is $O(n^4)$ and they can all be enumerated in polynomial time \cite{Kar93, NNI97, CQX20, BCW23}. Consequently, bi-objective graph min-cut can be solved in polynomial time. Moreover, the number of optimum solutions is $O(n^4)$ if the support graph of the scalarized cost function $c$, i.e., the graph $(V, \{e\in E: c(e)>0\})$ is connected. This approach extends naturally to $k$-objective graph min-cut to yield an $n^{O(k)}$-time algorithm. 

\paragraph{Hypergraphs.} 
The scalarization approach for multiobjective graph-min-cut extends to \emph{constant-rank}\footnote{The rank of a hypergraph is the size of the largest hyperedge in the hypergraph. Thus, graphs are rank-$2$ hypergraphs. A hypergraph is \emph{constant rank} if all hyperedges have size at most a constant that is independent of the number of vertices.} hypergraphs leading to a run-time that is exponential in the rank \cite{KK15, BCX23}. Our focus here is on arbitrary-rank hypergraphs. 
The status of bi-objective min-cut in arbitrary-rank hypergraphs has been open. 
The central reason for this is that the scalarization approach does not extend to Bi-objective Hypergraph Min-Cut:  If $F$ is an optimum cut, then it still follows that $F$ is a $2$-approximate min-cut under the scalarized cost-function $c=c_1+c_2$.  However, in contrast to graphs, the number of $2$-approximate min-cuts in a hypergraph can be exponential in the number of vertices/hyperedges (see \cite{CXY21} for an example). 

Bi-objective hypergraph min-cut differs substantially from single-objective hypergraph min-cut structurally too:  the number of optimum solutions to \eqref{eq:mm-cutset} could be exponential in the number of vertices. We describe a simple example to illustrate this: 
Consider the hypergraph on $n$ vertices obtained from the complete graph $K_n$ on $n$ vertices by adding a hyperedge $f:=V$ (i.e., it contains all vertices). Define costs as follows: $c_1(f)=\lfloor n^2/4\rfloor, c_2(f)=0$ and $c_1(e)=0, c_2(e)=1$ for every edge $e$ of size $2$. Then, the optimum value for \eqref{eq:mm-cutset} is $\lfloor n^2/4\rfloor$ and $\delta_H(U)$ for every non-empty proper subset $U$ of vertices is an optimum solution; thus, the number of optimum cuts is $2^{n-1}-1$. 
In contrast, it is well-known that the number of optimum solutions to single-objective hypergraph min-cut is at most $\binom{n}{2}$ (if the hypergraph is connected with all hyperedge-costs being non-zero) \cite{CX18, GKP17, CXY21, BCW23-enumhkcut-j}. 

The combinatorics of the Pareto-front of multiobjective min-cut in graphs and constant-rank hypergraphs has been extensively investigated in the last decade. 
\cite{AMMQ15} showed a strongly polynomial bound on the number of \emph{parametric min-cuts} for constant-rank hypergraphs---see the paper for the definition of parametric min-cuts. 
A recent series of works \cite{AMR17, Kar16, BCW23} designed variants of the uniform random contraction approach to solve and enumerate all Pareto-optimal cuts in graphs and constant-rank hypergraphs. This line of work naturally raises the question of whether multiobjective min-cut in arbitrary rank hypergraphs is tractable. Our work is a positive step towards answering this question. 

The existence of an instance with exponentially many optimum solutions is a key obstruction that illustrates the challenges in designing a sub-exponential time algorithm for bi-objective hypergraph min-cut. In fact, all known algorithms for bi-objective graph min-cut have also been powerful enough to solve the enumeration problem (i.e., enumerate all optimum solutions) along with the search problem. 
However, for bi-objective hypergraph min-cut, the existence of instances with exponentially many optimum solutions is a strong indication 
that algorithms for graphs cannot be adapted directly to solve the problem in hypergraphs. In particular, any sub-exponential-time algorithm for hypergraphs cannot afford to solve the enumeration problem. 
The budgeted perspective of bi-objective hypergraph min-cut that we define next helps focus on the search problem by side-stepping the enumeration problem. 

\paragraph{$k$-Objective Budgeted-Cut.} Let $H=(V, E)$ be a hypergraph with $k$ non-negative hyperedge-cost functions $\bar{c}=(c_1\in \Z^E_{\ge 0}, c_2\in \Z^E_{\ge 0}, \ldots, c_k \in \Z^E_{\ge 0})$ and budgets $\bar{b}=(b_1, b_2, \ldots, b_k)\in \Z^k_{\ge 0}$ that we collectively denote as $(H, \bar{c}, \bar{b})$. 
We recall that a cut is a subset $F\subseteq E$ of hyperedges where $F=\delta_H(U)$ for some non-empty proper subset $U$ of vertices. 
A cut $F\subseteq E$ is a \emph{budget-feasible cut} for $(H, \bar{c}, \bar{b})$ if $c_i(F)\le b_i$ for every $i\in [k]$ and is an \emph{$\alpha$-approximate budget-feasible cut} for $(H, \bar{c}, \bar{b})$ for some positive real value $\alpha$ if $c_i(F)\le \alpha b_i$ for every $i\in [k]$. 
We define the $k$-Objective Hypergraph Budgeted-Cut problem below. 
\begin{center}
\fbox{%
\begin{minipage}{0.95\textwidth}
\textbf{$k$-Objective Hypergraph Budgeted-Cut for Fixed Constant $k$.}

\textbf{Given:} Hypergraph \(H=(V,E)\), 

\quad \quad \quad \quad hyperedge-cost functions $\bar{c}=(c_1\in \Z^E_{\ge 0},c_2\in \Z^E_{\ge 0}, \ldots, c_k\in\Z^E_{\ge0})$, and 

\quad \quad \quad \quad budget \(\bar{b}=(b_1,b_2, \ldots, b_k)\in \Z^k_{\ge 0}\).

\textbf{Goal:} Return a budget-feasible cut for $(H, \bar{c},\bar{b})$ if one exists. 
\end{minipage}}
\end{center}
We denote the input instance of $k$-Objective Hypergraph Budgeted-Cut as $(H, \bar{c}, \bar{b})$. 
$k$-Objective Hypergraph Min-Cut reduces to $k$-Objective Hypergraph Budgeted-Cut by binary searching for the smallest $\lambda$ for which the diagonal budget vector $\bar{b}=(b_1=\lambda, b_2=\lambda, \ldots, b_k=\lambda)$ admits a budget-feasible cut. It is easy to construct graph instances where the number of single-objective budget-feasible cuts is exponential in the number of vertices (e.g., complete graph with unit cost function and budget being $\binom{n}{2}$). Thus, polynomial-time algorithms to solve $k$-Objective Hypergraph/Graph Budgeted-Cut (for fixed $k$) cannot afford to solve the enumeration problem.

\subsection{Results}
Our main algorithmic results are a randomized quasi-polynomial time algorithm and a randomized PTAS for $k$-Objective hypergraph budgeted-cut. 

\begin{restatable}[Quasi-polynomial Time Algorithm]{theorem}{MultiQPTheorem}\label{thm:k-objective-qp}
There is a randomized algorithm that takes as input an instance $(H, \bar{c}, \bar{b})$ of $k$-Objective Hypergraph Budgeted-Cut and has the following properties: (i) if the instance has a budget-feasible cut, then the algorithm returns a budget-feasible cut with probability at least $1/n^{2k}$, and otherwise, returns FAIL and 
(ii) the expected run-time of the algorithm is $(m+1)^{5k\log_2{n}}$, where $m$ and $n$ are the number of hyperedges and the number of vertices of the input hypergraph. 
\end{restatable}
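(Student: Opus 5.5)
The plan is a recursive contract-or-delete algorithm. Its recursion depth is $O(\log_2 n)$ phases, and in each phase it makes at most $5k$ ``guesses'', each chosen among $m+1$ options. That gives the $(m+1)^{5k\log_2 n}$ bound on the number of recursion paths and hence on the expected run-time. Randomness enters only through a contraction step, which will cost a factor of roughly $n^{-2k}$ in success probability over the whole run. Fix a budget-feasible cut $F^*=\delta(U^*)$. The invariant is that the current instance $(H',\bar c',\bar b')$, obtained by contracting hyperedges outside $F^*$, deleting hyperedges of $F^*$ and lowering budgets accordingly, still has a budget-feasible cut corresponding to $F^*$ minus the deleted hyperedges. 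Contracted hyperedges may lie inside either shore. If the recursion ever reaches two vertices, the algorithm outputs the unique cut there, together with the deleted hyperedges, and checks budget feasibility. This check ensures that the algorithm never returns an infeasible cut, and returns FAIL otherwise.

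First, the preprocessing and scalarization. Any hyperedge $e$ with $c_i(e)>b_i$ for some $i$ cannot be in $F^*$, so I will contract it. Next I will normalize with $c:=\sum_{i} c_i/b_i$, so that $c(F^*)\le k$ and every remaining hyperedge has $c(e)\le k$. A phase consists of up to $5k$ branching steps. In each step the algorithm either guesses a specific hyperedge $e\in F^*$ and deletes it, paying $c_i(e)$ from each budget (one of $m$ options), or declares ``no more guesses'' (the $(m+1)$-st option). The aim of the guesses is to remove from $F^*$ every hyperedge that is ``large'' in a suitable sense. Here large means of size exceeding about $|V(H')|/2$, or of $c$-weight at least a constant fraction of the total. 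Since $c(F^*)\le k$ and these hyperedges are heavy or large relative to a halving threshold, only $O(k)$ of them can lie in $F^*$, which matches the $5k$ guess budget.

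After the guesses, the phase performs random contraction. The algorithm picks a hyperedge with probability proportional to $c(e)$, restricted to hyperedges that are not large, and contracts it. It repeats this until the number of vertices has dropped by a factor of $2$. Because large hyperedges of $F^*$ have been deleted, every hyperedge of $F^*$ still present has small weight relative to the total. In the spirit of the Armon--Zwick and Karger style analyses, the probability that $F^*$ survives one contraction is then at least $1-O(k)\cdot(\text{shrinkage in }|e|)/|V(H')|$. Multiplying over the contractions gives survival probability of order $(n'/n)^{2k}$ across a phase, and these factors telescope to $n^{-2k}$ over all phases.

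The main obstacle is this survival bound in the presence of hyperedges of arbitrary size. A single contraction can remove many vertices at once, and $c(E)$ need not be large compared with $c(F^*)$, since no min-cut lower bound is available for the budgeted problem. I would first try to prove that after the guessed deletions, $c(E')\ge \Omega(|V'|)\cdot c(F^*)/k$ whenever no hyperedge of $F^*$ remains large. This would come from a degree-type argument: contracting a single vertex's star does not beat $F^*$ in the scalarized cost. Then I would choose contraction probabilities proportional to $c(e)$ weighted by the vertex reduction $|e|-1$ of each hyperedge, so that the expected loss per vertex removed is $O(k/|V'|)$. If that fails, the fallback is to enlarge what counts as large and let the $5k$ guesses absorb the resulting extra hyperedges of $F^*$.
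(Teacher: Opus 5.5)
\textbf{Gap 1: the fixed-target survival bound is false, not just unproven.} Your central step is that a fixed budget-feasible $F^*$ survives each phase with probability of order $(n'/n)^{2k}$. Your analysis is target-specific: on the correct guess path, $F^*$ survives all contractions with probability at least $n^{-2k}$ and is then output exactly. A run has at most $(m+1)^{5k\log_2 n}$ leaves, each outputting one cut. Linearity of expectation would then bound the number of budget-feasible cuts by $n^{2k}(m+1)^{5k\log_2 n}$.

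But $K_n$ with unit costs and budget $\binom{n}{2}$ has $2^{n-1}-1$ budget-feasible cuts (Remark~\ref{rem:uni-not-counting}). There, no edge is large or heavy in your sense, and uniform edge contraction keeps a balanced cut alive with probability $2^{-\Omega(n)}$. The root cause is the one you flagged: a budget-feasible cut has no min-cut property. So the ``degree-type argument'' that a vertex star does not beat $F^*$ is unavailable; in this example every singleton is feasible and much cheaper than $F^*$. A degree inequality appears only if the algorithm checks singleton cuts at every recursive node and returns a feasible one. Then the proof cannot track a fixed $F^*$. It must be an induction over instances admitting \emph{some} budget-feasible cut, as with $q^m_n$ in Lemma~\ref{lem:multi-success}.

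\textbf{Gap 2: your handling of large hyperedges fails, even with singleton checks.} The claim that only $O(k)$ large hyperedges lie in $F^*$ holds for heavy ones. It fails for size-large ones, which can be arbitrarily many and arbitrarily light. Worse, excluding large hyperedges from contraction breaks the argument even when $F^*$ has none. Take $k=1$ and $V=U\cup\{w_1,w_2\}$. Let $F^*=\{\{w_1,w_2,u\}:u\in U\}$ have total cost $b$, add one edge $\{w_1,w_2\}$ of cost $\delta b$, and add three parallel copies of the hyperedge $U$ of cost $b/2$ each. Then every singleton is infeasible, and $F^*$ is the unique budget-feasible cut with no large or heavy hyperedge. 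Yet apart from $\{w_1,w_2\}$, every non-large hyperedge lies in $F^*$. For $n\gg k$, your phase therefore necessarily contracts a hyperedge of $F^*$ before the vertex count halves. The singleton inequality $\sum_e c(e)|e|>|V'|$ is carried here by the copies of $U$, which you never contract.

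\textbf{What the paper does instead.} It avoids both enumeration of guesses and scalarization. It fixes one objective $j$ with $|I_j|\ge n/k$ and samples $e$ proportionally to $c_j$ over \emph{all} hyperedges. It always tries $H/e$ first. If that fails, it recurses with probability $|e\cap I_j|/|I_j|$ on the deletion instance $(H\setminus e,\bar{c}|_{E-e},\bar{b}-\bar{c}(e))$. This probability is matched exactly to Corollary~\ref{coro:singleton-infeasibility-ineq}, and with Proposition~\ref{prop:bi-numeric} it gives success probability $n^{-2k}$. The quasi-polynomial size comes from the recurrence $L(n,m)\le L(n-|e|+1,m-1)+\rho\,L(n,m-1)$ with $\rho=|e\cap I_j|/|I_j|$. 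Here $\rho$ is large only when $|e|$ is large, and hence only when the contraction branch makes large progress.
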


Repeating the algorithm from \Cref{thm:k-objective-qp} $O(n^{2k}\log(1/\eta))$ times finds a budget-feasible cut with probability at least $1-\eta$, whenever one exists. Consequently, we obtain an algorithm for $k$-objective Hypergraph Budgeted-Cut that runs in expected time
\begin{align}
        &m^{O(k\log n)}\log{\frac{1}{\eta}}. \label{qp-time}
\end{align}
Thus, we obtain a quasi-polynomial time algorithm for $k$-objective hypergraph budgeted-cut for every fixed constant $k$. 

\begin{restatable}[PTAS]{theorem}{multiptas}\label{thm:k-ptas}
For every $\varepsilon\in (0,1)$, there is a randomized algorithm 
that takes as input an instance $(H, \bar{c}, \bar{b})$ of $k$-Objective Hypergraph Budgeted-Cut and returns either FAIL or a $(1+\varepsilon)$-budget-feasible cut of $(H, \bar{c}, \bar{b})$ and has the following properties  (i) if the instance has a budget-feasible cut, then the algorithm returns a $(1+\varepsilon)$-budget-feasible cut with probability $1/n^{O(k\cdot \log{(k/\varepsilon)})}$ and if the instance has no $(1+\varepsilon)$-budget-feasible cut, then the algorithm returns FAIL and (ii) the run-time of the algorithm is $2^{O(k)}\cdot p^{O(1)}$, where $p$ is the size of the input hypergraph. 
\end{restatable}

Repeating the algorithm from \Cref{thm:k-ptas} $n^{O(k\log{(k/\varepsilon)})}\log{1/\eta}$ times will find a $(1+\varepsilon)$-approximate budget-feasible cut with probability at least $1-\eta$, whenever the input has a budget-feasible cut, in time
\[
       n^{O\left(k\log{\frac{k}{\varepsilon}}\right)}p^{O(1)}\log{\frac{1}{\eta}}.
\]
Thus, we obtain a PTAS for $k$-objective hypergraph budgeted-cut for every fixed constant $k$. 
We note that Theorem \ref{thm:k-ptas} also implies a quasi-polynomial time algorithm for $k$-objective hypergraph budgeted-cut for every fixed constant $k$: for $\varepsilon=\min\{1/2,\min_{i\in [k]:b_i>0}\{1/(2b_i)\}\}$, every $(1+\varepsilon)$-approximate budget-feasible cut is indeed a budget-feasible cut, since all costs are non-negative integers. 
Thus, we obtain an algorithm for $k$-objective Hypergraph Budgeted-Cut that runs in time 
\begin{align}
n^{O\left(k\max_{i\in [k]}\log{(kb_i)}\right)}p^{O(1)}\log{\frac{1}{\eta}}. \label{weak-qp-time}
\end{align}

It is helpful to keep in mind that $m\le p\le mn$ while comparing the run-times \eqref{weak-qp-time} and \eqref{qp-time}. 
We emphasize an advantage of the algorithm from Theorem \ref{thm:k-objective-qp} over the algorithm from Theorem \ref{thm:k-ptas}: the runtime of the algorithm from Theorem \ref{thm:k-ptas} stated in (\ref{weak-qp-time}) is only weakly quasi-polynomial, i.e., it depends on the magnitude of the budgets, while the runtime of the algorithm from Theorem \ref{thm:k-objective-qp} stated in (\ref{qp-time}) is strongly quasi-polynomial. 
The algorithms underlying Theorems \ref{thm:k-objective-qp} and \ref{thm:k-ptas} are different, although both rely on a combination of contraction and deletion ideas. Our algorithms borrow some ideas from previous random contraction algorithms and combine them in subtle ways. 

\paragraph{Hedgegraphs.} 
One of the side results from our approach is that it gives a new algorithm for the hedgegraph min-cut problem. A \emph{hedgegraph} is specified by a vertex set and a collection of hedges, where each hedge is a vertex-disjoint collection of hyperedges. In the hedgegraph min-cut problem, we are given a hedgegraph with a non-negative cost function $c$ on the hedges, and the goal is to find a minimum-cost subset of hedges whose removal disconnects the hedgegraph. 
Our approach, tailored for the single-objective problem, extends to give a new algorithm for hedgegraph min-cut that is (1) strongly quasi-polynomial time, thus complementing the known weakly quasi-polynomial time algorithm of Ghaffari, Karger, and Panigrahi \cite{GKP17} and  
(2) fixed-parameter tractable (FPT) parameterized by the solution size, thus complementing the known FPT \cite{FGKLS25} via an arguably simpler and more natural algorithm. See Remark \ref{rem:hedgegraph-results}. Another advantage of this algorithm is the unified perspective---the same algorithm achieves both quasi-polynomial time and FPT parameterized by solution size. 
Our quasi-polynomial time and PTAS results also extend to multiobjective hedgegraph budgeted-cut. 
We are omitting the proofs of these additional results in this paper to keep the focus on multiobjective hypergraph min-cut. There is still hope for a polynomial-time algorithm for multiobjective hypergraph min-cut, while a polynomial-time algorithm is unlikely even for single-objective hedgegraph min-cut \cite{JLMPTS26}.

\subsection{Applications} 
We discuss a few applications of our results for multiobjective hypergraph budgeted-cut. 
All of these applications were already known for graphs via reductions to multiobjective graph budgeted-cut. We note that these reductions hold verbatim for hypergraphs as well. Thus, our results, in conjunction with the reduction, lead to the discussed applications in a straightforward manner. We skip proofs of these applications to avoid repeating the literature. 

\paragraph{Min-constrained-cut in Hypergraphs.} 
Given a hypergraph $H=(V, E)$ with non-negative hyperedge costs $c:E\rightarrow \Z_{\ge 0}$ and an integer $r$, there exists a quasi-polynomial time algorithm to (1) find a minimum cut with at most $r$ hyperedges, i.e., $\min\{c(\delta(U)): \emptyset \neq U\subsetneq V, |\delta(U)|\le r\}$ and (2) find a minimum cut with at most $r$ vertices on the the smaller side of the cut, i.e., $\min\{c(\delta(U)): \emptyset\neq U\subsetneq V, \min\{|U|, |V\setminus U|\}\le r\}$. (1) easily reduces to the budgeted $2$-objective min-cut. (2) is more interesting, and less obvious, and the reduction for this (and also (1)) are given by Armon and Zwick \cite{AZ06} in the setting of graphs. The idea is to overlay a second complete graph on the original graph (with the same vertex set) and use separate budgets for these two graphs; each graph corresponds to a different cost. A budget of $r(n-r)$ on the edges of the complete graph ensures that the cut has at most $r$ vertices on the smaller side. The same reduction holds verbatim for hypergraphs; thus, our quasi-polynomial time algorithm for bi-objective hypergraph min-cut gives the above-mentioned results. 

\paragraph{Approximate Pareto Front.} 
A natural way to deal with multiobjective combinatorial optimization problems is to enumerate the 
Pareto-front. Let $H=(V, E)$ be a hypergraph with hyperedge-cost functions $\bar{c}=(c_1\in \Z^E_{\ge 0}, c_2\in \Z^E_{\ge 0}, \ldots, c_k\in \Z^E_{\ge 0})$. 
A set $P\subseteq \R^k$ is 
an \emph{$\epsilon$-approximate Pareto-front} if (1) for every point $(b_1, b_2, \ldots, b_k)\in P$, there exists a cut $F$ such that $c_i(F)=b_i$ for all $i\in [k]$ and $(2)$ for every cut $F$, there exists a point $(b_1, b_2, \ldots, b_k)\in P$ such that $b_i\le (1+\epsilon)c_i(F)$ for all $i\in [k]$. A $0$-approximate Pareto-front is known as the \emph{Pareto-front}. 
Theorem \ref{thm:k-ptas} directly implies a PTAS to enumerate the approximate Pareto-front --- run the PTAS for multiobjective hypergraph budgeted-cut for all budgets that are powers of $(1+\varepsilon)$ that are at most $c_i(E)$ for all $i\in [k]$. 
Papadimitriou and Yannakakis \cite{PY00} investigated the computational complexity of several multiobjective problems from the perspective of enumerating the Pareto-front. They showed a general result that if the \emph{exact} version of the single-objective problem is solvable in pseudo-polynomial-time, then there exists an FPTAS to enumerate an approximate Pareto-front for the multiobjective variant (which also implies an FPTAS for the multiobjective problem). Note that an FPTAS is much stronger than a PTAS. The relevant exact version of the single-objective graph min-cut problem is not solvable in pseudo-polynomial time since max-cut is strongly NP-hard, and consequently, an FPTAS for multiobjective graph/hypergraph min-cut does not follow directly from the results of \cite{PY00}.

\paragraph{Hypergraph Connectivity Interdiction.} 
The connectivity of a hypergraph $H=(V, E)$ with hyperedge capacities $c: E\rightarrow \Z_{\ge 0}$ is $\lambda_{H, c}:=\min\{c(\delta_H(U)): \emptyset\neq U\subsetneq V\}$. 
In hypergraph connectivity interdiction, we are also given costs $w: E\rightarrow \Z_{\ge 0}$ and a budget $B\in \Z_{\ge 0}$ and the goal is to find a subset $F$ of hyperedges of cost at most $B$ such that the deletion of $F$ minimizes connectivity. 
We formally define the hypergraph connectivity interdiction problem below.
\begin{center}
\fbox{%
\begin{minipage}{0.95\textwidth}
\textbf{Hypergraph Connectivity Interdiction.}

\textbf{Given:} Hypergraph \(H=(V,E)\), capacities $c: E\rightarrow \Z_{\ge 0}$, costs $w: E\rightarrow \Z_{\ge 0}$, and $B\in \Z_{\ge 0}$.

\textbf{Goal:} $\min\{\lambda_{H-F,c}: F\subseteq E, w(F)\le B\}$.
\end{minipage}}
\end{center}
It is easy to see that connectivity interdiction is NP-hard even on two vertex graphs by reduction from the knapsack problem. We will denote the special case of the problem with unit removal costs (i.e., $w(e)=1$ for all hyperedges $e$) as hypergraph cardinality connectivity interdiction. We say that an algorithm returns an \emph{$(\alpha, \beta)$-approximation} if it returns a subset $F$ of hyperedges such that $\lambda_{H-F, c}\le \alpha OPT$ and $w(F)\le \beta B$, where OPT is the optimum value of the instance. 

Zenklusen \cite{Zen14} investigated the approximability of graph connectivity interdiction and showed two reductions: 
(1) graph cardinality connectivity interdiction reduces to bi-objective graph budgeted-cut and 
(2) graph connectivity interdiction admits a randomized PTAS if bi-objective graph budgeted-cut is solvable in polynomial time. Since bi-objective graph budgeted-cut is indeed solvable in polynomial time, his reductions imply a polynomial time algorithm for cardinality connectivity interdiction and a PTAS for connectivity interdiction in graphs. We note that his reductions extend verbatim to hypergraph connectivity interdiction as well. 
His reductions to bi-objective hypergraph budgeted-cut in conjunction with our Theorem \ref{thm:k-objective-qp} lead to the first two corollaries below, and in conjunction with our Theorem \ref{thm:k-ptas} lead to the third corollary below. 

\begin{corollary}
There exists an algorithm for (where $p$, $m$, and $n$ denote the size, number of hyperedges, and number of vertices in the input hypergraph): 
\begin{enumerate}
\item 
hypergraph cardinality connectivity interdiction that returns a $(1, 1)$-approximation and runs in expected time $m^{O(\log{n})}p^{O(1)}$ (i.e., hypergraph cardinality connectivity interdiction admits a quasi-polynomial-time algorithm). 
\item 
hypergraph connectivity interdiction that returns a $(1+\varepsilon, 1)$-approximation and runs in expected time $m^{O(\frac{1}{\varepsilon}+\log{n})}p^{O(1)}$ (i.e., hypergraph connectivity interdiction admits a QPTAS).
\item hypergraph connectivity interdiction that returns a $(1+\varepsilon, 1+\varepsilon)$-approximation and runs in time $m^{O(\frac{1}{\varepsilon})}n^{O(\log{\frac{1}{\varepsilon}})}p^{O(1)}$ (i.e., hypergraph connectivity interdiction admits a bicriteria-PTAS). 
\end{enumerate}
\end{corollary}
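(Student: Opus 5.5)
We derive all three parts from Zenklusen's reductions \cite{Zen14}, which go through verbatim for hypergraphs, combined with \Cref{thm:k-objective-qp} (for parts 1 and 2) and \Cref{thm:k-ptas} (for part 3). The work is to make each reduction explicit and to track how many budgeted-cut calls it makes and with which parameters.

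\textbf{Part 1 (cardinality interdiction).} With unit removal costs, the best way to attack a fixed cut $\delta(U)$ is to delete its $B$ most expensive hyperedges. So the optimum is the minimum, over cuts $F$, of the $c$-cost of $F$ minus its top $B$ hyperedges.

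We guess a threshold hyperedge $e^*$ (at most $m$ choices, breaking ties by index) and define $E_{>}$ as the hyperedges ordered above $e^*$. We set $c_1 = c\cdot\mathbf{1}_{E\setminus E_>}$ and $c_2=\mathbf{1}_{E_>}$, with budgets $b_2=B$ and $b_1=\lambda$. Binary search on $\lambda$ costs $O(\log(c(E)))$ calls, or $O(\log m)$ calls if we restrict to the polynomially many candidate values.

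Correctness has two directions.
\begin{itemize}[nosep]
\item For the optimal pair $(F^*,D^*)$, choose $e^*$ to be the cheapest element of $D^*$. Then $F^*$ is budget-feasible.
\item Conversely, any feasible $F$ yields interdiction set $F\cap E_>$ of size at most $B$. The residual cut has cost at most $\lambda$.
\end{itemize}
Each call is answered by \Cref{thm:k-objective-qp} with $k=2$, repeated $O(n^4\log m)$ times to boost the success probability. This gives expected time $m^{O(\log n)}p^{O(1)}$.

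\textbf{Part 2 ($(1+\varepsilon,1)$ for general costs).} Here we follow Zenklusen's PTAS reduction. For a fixed cut $F$, the interdiction subproblem is a knapsack, and we handle it by standard knapsack rounding. We guess the $O(1/\varepsilon)$ hyperedges of $F^*\cap D^*$ with largest capacity, which gives $m^{O(1/\varepsilon)}$ guesses.

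For each guess, we delete the guessed hyperedges. We then contract or discard the remaining hyperedges whose capacity exceeds the smallest guessed capacity, since these cannot lie in $D^*$ and must be paid in full. The remaining fractional knapsack over the other hyperedges is relaxed into a bi-objective cut problem with costs $c$ and $w$ restricted appropriately. Standard LP-rounding analysis shows that the single fractional item loses at most an $\varepsilon$ fraction of OPT. Each guess makes polylogarithmically many 2-objective budgeted-cut calls, solved exactly via \Cref{thm:k-objective-qp}. The total time is $m^{O(1/\varepsilon)}\cdot m^{O(\log n)}p^{O(1)}=m^{O(1/\varepsilon+\log n)}p^{O(1)}$.

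\textbf{Part 3 ($(1+\varepsilon,1+\varepsilon)$ bicriteria).} We use the same guessing framework, but answer each budgeted-cut call with \Cref{thm:k-ptas} instead, accepting a $(1+\varepsilon)$ violation in both $c$ and $w$. Each call costs $n^{O(\log(1/\varepsilon))}p^{O(1)}$ after repetitions, so the total time is $m^{O(1/\varepsilon)}n^{O(\log(1/\varepsilon))}p^{O(1)}$.

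\textbf{Main obstacle.} The step that needs care is part 2: verifying that Zenklusen's reduction uses budgeted-cut calls only as an exact black box, and that the knapsack-type rounding error is charged to the capacity objective alone, so the $w$-budget stays exactly $B$. A secondary point is replacing binary search over possibly large integers by a search over polynomially many candidate budget values, so that the running time stays strongly quasi-polynomial.
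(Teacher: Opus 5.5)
Your Part~1 is essentially right, up to two slips noted at the end. The genuine gap is in Part~2, and Part~3 inherits it. (The paper itself only invokes Zenklusen's reductions as a black box, so the problem lies in your reconstruction of them.)

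\textbf{The gap.} You guess the $t=\lceil 1/\varepsilon\rceil$ largest-capacity hyperedges of the \emph{interdicted} part $F^*\cap D^*$. The smallest guessed capacity $\gamma$ is the only bound you have on the fractional knapsack item, and it satisfies only $\gamma\le \varepsilon\, c(F^*\cap D^*)$. But the objective is the \emph{residual} $\mathrm{OPT}=c(F^*\setminus D^*)$, which can be arbitrarily smaller than the interdicted capacity. So ``the single fractional item loses at most an $\varepsilon$ fraction of OPT'' does not follow: the knapsack PTAS bound is multiplicative in the knapsack value, not in its complement.

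Here is a concrete counterexample. Take two vertices, so $E$ is the only cut. Add $t+1$ parallel hyperedges of capacity $1$ and cost $W$, and $t+1$ of capacity $\delta$ and cost $\delta^2$, with $\delta<W$. Set $B=(t+1)W$, so $\mathrm{OPT}=(t+1)\delta$. Any guess of at most $t$ hyperedges leaves some capacity-$1$ hyperedge unguessed.
\begin{itemize}
\item If a $\delta$-hyperedge is guessed, every unguessed capacity-$1$ hyperedge exceeds $\gamma=\delta$ and is forced to be kept.
\item Otherwise the unguessed capacity-$1$ hyperedges reach the ratio-threshold step behind all $\delta$-hyperedges (ratio $1/\delta>1/W$). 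Interdicting all of them together with the $\delta$-hyperedges exceeds the remaining budget.
\end{itemize}
Either way some capacity-$1$ hyperedge is kept, so every candidate has residual at least $1$, an unbounded ratio as $\delta\to0$.

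Separately, ``contract or discard'' the unguessed heavy hyperedges is wrong in both readings. Contracting forbids them from the cut and can destroy $F^*$; discarding drops capacity that must be counted.

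\textbf{The repair.} Guess on the \emph{kept} side, as in the covering-knapsack PTAS. Order hyperedges by (capacity, index). Let $G$ be the $t$ largest hyperedges of $F^*\setminus D^*$ (all of them if fewer), and let $g=\min G$. Assign roles:
\begin{itemize}
\item $G$ is kept ($c_1=c$, $c_2=0$).
\item Every other hyperedge above $g$ is interdicted ($c_1=0$, $c_2=w$). Any such hyperedge of $F^*$ must lie in $D^*$.
\item Among hyperedges below $g$, guess a threshold in a fixed global order by $c/w$; those above it are interdicted, the rest kept.
\end{itemize}
Soundness holds for every role assignment: a cut $F$ with $c_1(F)\le\lambda$ and $c_2(F)\le B$ gives $D=F\cap\{\text{interdicted}\}$ with $w(D)\le B$ and $\lambda_{H-D,c}\le c(F\setminus D)=c_1(F)$.

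For the right guesses, the greedy fractional-knapsack argument gives $c_2(F^*)\le B$ and $c_1(F^*)\le \mathrm{OPT}+c(e^\circ)$, where $e^\circ$ is the fractional item and lies below $g$. Hence $c(e^\circ)\le c(g)\le c(G)/t\le\varepsilon\,\mathrm{OPT}$. If $|F^*\setminus D^*|<t$ there is no fractional item at all. This yields $m^{O(1/\varepsilon)}$ bi-objective instances. Solving them exactly by Theorem~\ref{thm:k-objective-qp} gives Part~2, and approximately by Theorem~\ref{thm:k-ptas} gives Part~3, with the claimed bounds.

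\textbf{Two smaller slips in Part~1.}
\begin{itemize}
\item With $E_>$ strictly above $e^*$ and $e^*$ the cheapest element of $D^*$, $e^*$ is charged to $c_1$, so $c_1(F^*)=\mathrm{OPT}+c(e^*)$. Put $e^*$ into $E_>$ instead.
\item There is no reason the candidate values of $\lambda$ are polynomially many, since cut costs are subset sums. Keep the $O(\log c(E))$ binary search, as the paper does when reducing min-cut to budgeted-cut.
\end{itemize}
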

A recent work has designed an FPTAS for graph connectivity interdiction by exploiting the polynomial-time enumerability of constant-approximate min-cuts in graphs \cite{HOY26}. We leave it as an open question to design a PTAS/FPTAS for hypergraph connectivity interdiction. 
\subsection{Techniques}\label{sec:techniques}
We highlight the ideas underlying the quasi-polynomial time algorithm stated in \Cref{thm:k-objective-qp} for the special case of $k=2$, i.e., for Bi-objective Hypergraph Budgeted-Cut since it carries all the necessary ideas needed to generalize to all $k\ge 1$. The algorithm is based on random contraction and deletion. 
We may assume that the input instance has no loops (since loops are not part of any cut) and no spanning hyperedges (otherwise, delete all spanning hyperedges from the instance and reduce the budget by the cost of the spanning hyperedges). We discuss the algorithm for the case of single-objective budgeted-cut problem first since it will illustrate two of the three ingredients needed for the Bi-objective Budgeted-Cut problem.

\paragraph{Single-objective Budgeted-Cut.}
Let $(H=(V, E), c\in \R^E_{\ge 0}, b\in \R_{\ge 0})$ be the input instance. 
The algorithm begins by verifying if there exists a singleton cut $\delta_H(\{v\})$ that is budget-feasible and if so, then returns it. Otherwise, every singleton cut has a cost greater than the budget $b$. 
The algorithm then samples a hyperedge $e$ with probability proportional to $c(e)$. It first recurses on the contracted instance $H/e$ with the same costs on surviving hyperedges and with the same budget $b$, termed the \emph{contraction call}. If the contraction call succeeds with a budget-feasible cut $F_1$, then the algorithm returns $F_1$. If the contraction call fails, then with probability $|e|/|V|$, the algorithm recurses on the deletion instance $H\setminus e$ with the same costs on the surviving hyperedges and with residual budget $b-c(e)$, termed as a \emph{deletion call}. If the deletion call succeeds with a budget-feasible cut $F_2$, then the algorithm returns a budget-feasible cut contained within $F_2\cup \{e\}$. Else, the algorithm returns FAIL. See Algorithm 3 in Appendix \ref{sec:uni} for a pseudocode.

The above algorithm closely resembles the branching-contractions algorithm for hypergraph min-cut due to Fox, Panigrahi, and Zhang \cite{FPZ23}---the main difference\footnote{Their algorithm is stated for Hypergraph Min-Cut but it can be viewed through the lens of Hypergraph Budgeted-Cut in a natural fashion.} is that when the contraction call fails, our algorithm recurses on the deletion instance $(H\setminus e, c, b-c(e))$ while their algorithm recurses on the same instance $(H, c, b)$, both with the same probability $|e|/|V|$. Clearly, our algorithm has fewer nodes in the recursion tree than their algorithm but their algorithm has better success probability than our algorithm. They showed that their algorithm has $O(n^2)$ expected nodes in the recursion tree and succeeds with probability $\Omega(\frac{1}{\log{n}})$. We are able to show that our algorithm has $O(n^2)$ expected nodes in the recursion tree and succeeds with probability $\Omega(\frac{1}{n^2})$. Although our success probability is worse, the advantage is that the algorithm generalizes to multiple objectives as we will discuss shortly. 

The bound on the recursion tree size is obvious since our recursion tree cannot have more nodes than that of \cite{FPZ23}---see Appendix \ref{sec:uni} for an analysis. We outline the success probability analysis to highlight its key ingredient. Let the instance be budget-feasible with $F\subseteq E$ being a budget-feasible cut. We need to show that the recursive instance preserves budget-feasibility with large probability (which will allow us to use induction since the number of vertices+hyperedges decreases in recursive calls). 
Note that the only interesting case is when all singleton cuts are budget-infeasible, i.e., $c(\delta(v))>b\ge c(F)$ for all $v\in V$. Adding $c(\delta(v))>c(F)$ across all $v\in V$ and via simple counting manipulations, we obtain the following inequality (see Lemma \ref{lem:uni-singleton-infeasibility-inequality}): 
\[
\sum_{e\in E\setminus F}c(e)|e|>\sum_{e\in F}c(e)|V\setminus e|. 
\]
We observe that this inequality can be rephrased as 
\[
\sum_{e\in E\setminus F}p_e\left(\frac{|e|}{|V|}\right)> \sum_{e\in F}p_e\left(1-\frac{|e|}{|V|}\right),
\]
where $p_e:=c(e)/c(E)$ is the probability of sampling hyperedge $e$. Note that the RHS is the probability of picking a hyperedge $e\in F$ and \emph{not} entering the deletion call while the LHS is \emph{smaller than} the probability of picking a hyperedge $e\in E\setminus F$ (and entering the contraction call). 
From the perspective of success probability, the RHS is the probability of the bad event while the LHS is \emph{smaller than} the probability of a good event. Thus, the probability of the bad event is smaller than that of a good event. This is the key observation that suggests that a charging analysis should work to show that the success probability is large. The rest is algebra. See Lemma \ref{lem:uni-success} for technical details. As mentioned earlier, our approach for the single-objective problem leads to two additional results on hedgegraph min-cut via the same algorithm: namely, a strongly quasi-polynomial-time algorithm improving on the weakly quasi-polynomial-time algorithm \cite{GKP17} and a fixed-parameter algorithm parameterized by the solution size that is simpler than known \cite{FGKLS25}---see Remark \ref{rem:hedgegraph-results}. 

\paragraph{Bi-objective Budgeted-Cut.} 
Let $(H=(V, E), \bar{c}=(c_1\in \R^E_{\ge 0}, c_2\in \R^E_{\ge 0}), \bar{b}=(b_1, b_2)\in \R^2_{\ge 0})$ be the input instance. Recall that we may assume that the input instance has no loops and no spanning hyperedges. The algorithm again begins by verifying if there exists a singleton cut $\delta_H(\{v\})$ that is budget-feasible on both objectives, and if so,  returns it and terminates. Otherwise, for every vertex $v\in V$, there exists $i\in [2]$ such that $c_i(\cut_H(\{v\}))>b_i$. For each $i\in [2]$, define the singleton-infeasible set
\[
        I_i:=\{v\in V:c_i(\cut_H(\{v\}))>b_i\}.
\]
Then $I_1\cup I_2=V$, so one of the two singleton-infeasible sets has size at least $|V|/2$.  The algorithm chooses an index $j\in[2]$ with larger $|I_j|$, in particular with 
\[
        |I_j|\ge \frac{|V|}{2},
\]
and samples a hyperedge with probability proportional to its $c_j$-cost.
Next, the algorithm recurses on the contracted instance \(H/e\) with the same costs on surviving hyperedges and with the same budget $\bar{b}$, termed a \emph{contraction call}.  If the contraction call succeeds with a budget-feasible cut $F_1$, then the algorithm returns $F_1$.  If the contraction call fails, then with probability 
\[
        \frac{|e\cap I_j|}{|I_j|},
\] 
the algorithm recurses on the deletion instance \(H\setminus e\) with the same costs on the surviving hyperedges and with residual budgets $(b_1-c_1(e), b_2-c_2(e))$, termed a \emph{deletion call}. If the deletion call succeeds with a budget-feasible cut $F_2$, then the algorithm returns a budget-feasible cut contained within $F_2\cup \{e\}$. Else, the algorithm returns FAIL. 

The main difference from the single-objective budgeted-cut algorithm is the sampling distribution and the probability of executing the deletion call. The sampling distribution is uniform with respect to the objective $j\in [2]$ with the larger singleton-infeasible set and the probability of executing the deletion call depends on the overlap of the sampled hyperedge with the larger singleton-infeasible set. The idea of sampling based on the size of the larger singleton-infeasible set has appeared in the context of multiobjective min-cuts for graphs and \emph{constant-rank} hypergraphs before---see \cite{Kar16, AMR17, BCX23}. However, the algorithm in those works is based on uniform\footnote{By uniform distribution, we mean that the sampling distribution is independent of the size of the hyperedges and depend only on the cost of the hyperedges. } random pure-contraction\footnote{By pure-contraction, we mean that the algorithm does not explore the deletion branch and recurses only on the contracted instance.} without exploring the deletion branch. Since we are dealing with arbitrary-rank hypergraphs, uniform random pure-contraction algorithms do not seem to work and this is the reason for exploring the deletion branch with non-zero probability even in the single-objective case. Note that the single-objective algorithm described earlier explores the deletion branch with probability $|e|/|V|$ whereas the bi-objective algorithm described here explores the deletion branch with a different probability choice that depends on the overlap of the sampled hyperedge with the larger singleton-infeasible set as well as the size of the larger singleton infeasible set, namely $\frac{|e\cap I_j|}{|I_j|}$. This probability choice to explore the deletion branch in the multiobjective setting had not appeared in previous works and it turns out to be crucial in the success probability analysis. 

We show that the algorithm succeeds with probability $\Omega(n^{-4})$. 
The success probability analysis closely follows that of the single-objective algorithm. Let $F$ be a budget-feasible cut. Again, the only interesting case is when all singleton cuts are budget-infeasible, i.e., for each $v\in V$, there exist $i\in [2]$ such that $c_i(\delta(v))>b_i\ge c_i(F)$. Thus, $I_1\cup I_2=V$. Let $j\in [2]$ such that $|I_j|\ge |V|/2$. Adding $c_j(\delta(v))>c_j(F)$ for all $v\in I_j$ and via simple counting manipulations, we obtain the following inequality (see \Cref{coro:singleton-infeasibility-ineq})
\[
\sum_{e\in E\setminus F}c_j(e)|I_j\cap e| >\sum_{e\in F}c_j(e)|I_j\setminus e|. 
\]
We observe that this inequality can again be rephrased as 
\[
\sum_{e\in E\setminus F}p_e\left(\frac{|e|}{|V|}\right)> \sum_{e\in F}p_e\left(1-\frac{|e\cap I_j|}{|I_j|}\right),
\]
where $p_e:=c_j(e)/c_j(E)$ is the probability of sampling hyperedge $e$. Again, from the perspective of success probability, the RHS is the probability of the bad event while the LHS is smaller than the probability of a good event. Thus, the probability of the bad event is at most the probability of a good event, which suggests a charging argument again. 
The other crucial ingredient is the choice of $j$, which ensures that $|I_j|\ge |V|/2$. See Lemma \ref{lem:multi-success} for details. 

We show that the expected size of the recursion tree is $m^{O(\log{n})}$, i.e., quasi-polynomial, by relying on the choice of $j\in [k]$, which ensures that $|I_j|\ge |V|/2$---see Lemma \ref{lem:multi-runtime} for details. The quasi-polynomial bound here is in contrast to that of the $O(n^2)$ bound for the single-objective algorithm. We believe that the quasi-polynomial bound is not just an artifact of the analysis---in particular, the bi-objective algorithm after sampling a hyperedge could potentially create the deletion branch with probability $1$ (since $e\cap I_j$ could be equal to $I_j$) whereas the single-objective algorithm creates the deletion branch with probability $|e|/|V|$ which is strictly less than $1$ (since the input has no spanning hyperedges); this subtlety is the reason for the weaker quasi-polynomial bound on the recursion tree size and we believe that this bound is tight for this algorithm. 

\paragraph{PTAS.}
Ghaffari, Karger, and Panigrahi \cite{GKP17} designed a PTAS for single-objective hypergraph min-cut. Their algorithm is based on random contraction and deletion ideas. 
If the input hypergraph has no large hyperedges, then it performs uniform random contraction. Otherwise, it branches with equal probability into one of two possibilities: (i) delete all large hyperedges by paying for them and (ii) sample a uniform random large hyperedge and recurse on the contracted instance. 
The PTAS result stated in Theorem \ref{thm:k-ptas} is obtained by adapting their algorithm to the budgeted-case, then incorporating the above-mentioned large singleton-infeasible set sampling approach if the hypergraph has no large hyperedges, and branching into one of three possibilities otherwise: (i) delete all large hyperedges by paying for them in the budget, (ii) sample a uniform random large hyperedge $e\in L$ with probability proportional to $c_1(e)$ and recurse on the contracted instance and (iii) sample a uniform random large hyperedge $e\in L$ with probability proportional to $c_2(e)$ and recurse on the contracted instance. See Section \ref{sec:multi-ptas} for details.

\subsection{AI Disclosure}\label{sec:ai-disc}
We used AI (ChatGPT 5.5 Pro and Extended Pro) to rule out, analyze, and refine various approaches over multiple iterations. AI was helpful for a quick analysis of variants of random contraction/deletion based algorithms. The quasi polynomial-time algorithm and analysis presented in Section \ref{sec:multi-quasi} are adaptations of the ones given by the AI after multiple rounds of ruling out polynomial-time algorithmic approaches and our suggestions to refine the approaches towards a quasi-polynomial-time algorithm. We subsequently designed the PTAS in Section \ref{sec:multi-ptas} and analyzed it independently. We used AI to typeset various technical details and rewrote them for clarity and readability. Authors assume full responsibility for all content.

\subsection{Preliminaries}
Let $H=(V, E)$ be a hypergraph. 
A loop is a hyperedge of size at most one. 
A spanning hyperedge is a hyperedge that contains the entire vertex set.  
Let $e\in E$. 
The contracted hypergraph, denoted $H/e$, is the hypergraph obtained from $H$ by identifying all vertices in \(e\) into a single vertex and deleting the hyperedge $e$. 
The deletion hypergraph, denoted $H\setminus e$, is the hypergraph obtained from $H$ by deleting the hyperedge $e$. 
For a subset $L\subseteq E$, we denote the hypergraph obtained by deleting $L$ as $H-L$. 
Let $\bar{c}=(c_1\in \R^E_{\ge 0}, c_2\in \R^E_{\ge 0}, \ldots, c_k\in \R^E_{\ge 0})$ be hyperedge-cost functions and $\bar{b}=(b_1, b_2, \ldots, b_k)\in \R^k_{\ge 0}$ be the budget vector. For a subset $A\subseteq E$, we denote $\bar{b}-\bar{c}(A):=(b_1-c_1(A), b_2-c_2(A), \ldots, b_k-c_k(A))$. For a subset $A\subseteq E$, the cost function $c_i$ restricted to the hyperedges in $E-A$ is denoted as $c_i|_{E-A}$ for every $i\in [k]$ and define $\bar{c}|_{E-A}:=(c_1|_{E-A}, c_2|_{E-A}, \ldots, c_k|_{E-A})$. For notational convenience, we write $E-e$ to denote $E-\{e\}$.
In our algorithms, we follow the convention that 
$E$ is an indexed multiset of hyperedges. Under contraction, each surviving
hyperedge identity is mapped to its image in the contracted vertex set; parallel images
are retained as distinct identities, or equivalently may be merged only after summing all
cost coordinates and recording the corresponding original identities. 
Recursive outputs are interpreted as sets of original hyperedge identities. 
Throughout, $\log{}$ is to the base $2$. 
We will use the following proposition.

\begin{proposition}\label{prop:bi-numeric}
For all integers $n\ge3$, $2\le a\le n-1$, and $t\ge 1$, we have the following
\begin{align}
        \left(\frac{n}{n-a+1}\right)^t-1
        &\ge
        \frac{ta}{2n}. \label{prop:bi-numeric-ineq}
\end{align}
\end{proposition}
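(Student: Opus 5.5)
We reduce to the case $t=1$ and then verify that case by elementary algebra. Write $x:=\frac{n}{n-a+1}$. Since $2\le a\le n-1$, we have $1\le a-1\le n-2$, so $x=1+\frac{a-1}{n-a+1}>1$. Bernoulli's inequality $(1+y)^t\ge 1+ty$ for $y\ge 0$ and integer $t\ge 1$ then gives
\[
x^t-1\ \ge\ t\,(x-1)\ =\ \frac{t(a-1)}{n-a+1}.
\]
It therefore suffices to show $\frac{a-1}{n-a+1}\ge \frac{a}{2n}$. Multiplying by $t\ge 1$ recovers \eqref{prop:bi-numeric-ineq}.

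For the reduced inequality, clear the positive denominators. We need $2n(a-1)\ge a(n-a+1)$, which is equivalent to
\[
2na-2n-na+a^2-a\ \ge\ 0,\quad\text{i.e.,}\quad n(a-2)+a(a-1)\ \ge\ 0 .
\]
This holds because $a\ge 2$ makes $n(a-2)\ge 0$, and $a(a-1)>0$. The constraint $a\le n-1$ is only used to guarantee $n-a+1\ge 2>0$, so that the denominators are positive and the base $x$ is well defined and exceeds $1$. The hypothesis $n\ge 3$ simply ensures that the range of $a$ is nonempty.

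There is no real obstacle here. The only step needing care is the choice of reduction. Bernoulli's inequality loses nothing essential, because the target right-hand side is linear in $t$. The $t=1$ inequality has slack: it is not tight even at $a=2$, where the left side is $\frac{1}{n-1}$ and the right side is $\frac{1}{n}$. So no sharper estimate of $x^t$ is needed.
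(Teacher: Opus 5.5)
Your proof is correct and takes essentially the paper's route: a Bernoulli-type bound makes the left side linear in $t$, and an elementary comparison finishes. The only cosmetic difference is the parametrization. The paper writes the base as $1/(1-x)$ with $x=(a-1)/n$, which gives the weaker intermediate bound $t(a-1)/n$, so its last step is just $2(a-1)\ge a$; your expansion $1+\frac{a-1}{n-a+1}$ gives a sharper intermediate bound at the cost of a short cross-multiplication.
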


\begin{proof}

Let $x=(a-1)/n$.  Then, $0<x<1$ and
\[
        \frac{n}{n-a+1}=\frac{1}{1-x}.
\]
Using $(1-x)^{-t}\ge 1+tx$, we get
\[
        \left(\frac{n}{n-a+1}\right)^t-1
        \ge
        \frac{t(a-1)}{n}
        \ge
        \frac{ta}{2n},
\]
where the last inequality uses $a\ge2$.
\end{proof}

\section{Singleton Infeasible Sets}
In this section, we focus on budget-infeasible singleton cuts and prove a useful inequality.  This is one of the key ingredients in the correctness analysis of the algorithms for both \Cref{thm:k-objective-qp,thm:k-ptas}. The following lemma is the budgeted variant of the statement that min-cut is at most the average degree. 

\begin{lemma}\label{lem:uni-singleton-infeasibility-inequality}
Let $H=(V,E)$ be a loop-less hypergraph with a hyperedge-cost function $c:E\rightarrow \R_{\ge 0}$ and a budget $\lambda\in \R_{\ge 0}$. Let $F\subseteq E$ be a budget-feasible cut in $(H, c, \lambda)$. Let $I:=\{v\in V: c(\cut_H(\{v\})>\lambda\}$. If $I\neq \emptyset$, then 
\[
        \sum_{e\in E\setminus F}c(e)|I\cap e|
        >
        \sum_{e\in F}c(e)|I\setminus e|.
\]
\end{lemma}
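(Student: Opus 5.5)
The plan is to sum the singleton-infeasibility inequalities over $I$ and then rearrange by double counting. Since $I\neq\emptyset$, every $v\in I$ satisfies $c(\cut_H(\{v\}))>\lambda\ge c(F)$, where the second inequality is the budget-feasibility of $F$. Adding these strict inequalities over all $v\in I$ gives
\[
\sum_{v\in I} c(\cut_H(\{v\})) > |I|\,c(F) = \sum_{e\in F} c(e)\,|I|.
\]
The sum is non-empty, so the inequality stays strict.

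Next I will rewrite the left-hand side by double counting hyperedge--vertex incidences. Since $H$ is loop-less, every hyperedge has $|e|\ge 2$. Hence for any vertex $v$, we have $e\in\cut_H(\{v\})$ if and only if $v\in e$, because $e\setminus\{v\}\neq\emptyset$ holds automatically. So each hyperedge $e$ is counted in $\cut_H(\{v\})$ for exactly the $|I\cap e|$ vertices $v\in I\cap e$. This yields
\[
\sum_{v\in I} c(\cut_H(\{v\}))=\sum_{e\in E} c(e)\,|I\cap e|.
\]

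Finally, I will split the sum over $E$ into the parts over $E\setminus F$ and over $F$. Then I subtract $\sum_{e\in F}c(e)|I\cap e|$ from both sides and use $|I|-|I\cap e|=|I\setminus e|$. The result is exactly
\[
\sum_{e\in E\setminus F}c(e)|I\cap e|>\sum_{e\in F}c(e)|I\setminus e|.
\]

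There is no real obstacle here. The only points needing care are the use of loop-lessness to identify $\cut_H(\{v\})$ with the set of hyperedges containing $v$, and the use of $I\neq\emptyset$ to keep the inequality strict.
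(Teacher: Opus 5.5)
Your proposal is correct and follows essentially the same route as the paper: sum the strict inequalities $c(\cut_H(\{v\}))>\lambda\ge c(F)$ over the non-empty set $I$, use loop-lessness to double count $\sum_{v\in I}c(\cut_H(\{v\}))=\sum_{e\in E}c(e)|I\cap e|$, and rearrange using $|I|-|I\cap e|=|I\setminus e|$. Your explicit justification that $e\in\cut_H(\{v\})$ if and only if $v\in e$ (since $|e|\ge 2$) is a small clarification that the paper leaves implicit.
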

\begin{proof}
By budget-feasibility of $F$ in $(H, c, \lambda)$, we have that 
\[
        c(\cut_H(\{v\}))>\lambda \ge c(F) \ \forall\ v\in I.
\]
Since $I\neq \emptyset$, summing over all vertices in $I$ gives
\[
        \sum_{v\in I}c(\cut_H(\{v\}))>|I|c(F).
\]
The input instance is loopless. 
Therefore,  
\[
\sum_{v\in I}c(\cut_H(\{v\}))
=\sum_{v\in I}\sum_{e\in E}c(e)1_{v\in e}
=\sum_{e\in E}c(e)\left(\sum_{v\in I}1_{v\in e}\right)
=\sum_{e\in E}c(e)|I\cap e|.
\] 
Hence, 
\[
        \sum_{e\in E}c(e)|I\cap e|>|I|\sum_{e\in F}c(e).
\]
Rearranging gives
\[
        \sum_{e\in E\setminus F}c(e)|I\cap e|
        >
        \sum_{e\in F}c(e)|I\setminus e|.
\]
\end{proof}

We obtain the following corollary by applying the above lemma to $(H, c_j, b_j)$. 
\begin{corollary}\label{coro:singleton-infeasibility-ineq}
Consider a loop-less hypergraph instance $(H=(V, E), \bar{c}, \bar{b})$ with cost functions $\bar{c}=(c_1\in \R^E_{\ge 0},\ldots,c_k\in \R^E_{\ge 0})$ and budgets $\bar{b}=(b_1,\ldots,b_k)\in \R^k_{\ge 0}$, such that $F\subseteq E$ is a budget-feasible cut in $(H, \bar{c}, \bar{b})$.  Let $j\in[k]$ and $I_j:=\{v\in V:c_j(\cut_H(\{v\}))>b_j\}$. If $I_j\neq \emptyset$, then 
\[
        \sum_{e\in E\setminus F}c_j(e)|I_j\cap e|
        >
        \sum_{e\in F}c_j(e)|I_j\setminus e|.
\]
\end{corollary}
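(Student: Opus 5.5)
The corollary follows by applying \Cref{lem:uni-singleton-infeasibility-inequality} to the single-objective instance $(H, c_j, b_j)$. I will verify the three hypotheses of that lemma in turn and then match the notation.

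First, $H$ is loop-less by assumption, and $c_j\in \R^E_{\ge 0}$ is a non-negative hyperedge-cost function. The budget $\lambda:=b_j$ lies in $\R_{\ge 0}$.

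Second, $F$ must be a budget-feasible cut in $(H, c_j, b_j)$. Since $F$ is budget-feasible in the $k$-objective instance $(H,\bar c,\bar b)$, we have $c_i(F)\le b_i$ for every $i\in[k]$. Specializing to $i=j$ gives $c_j(F)\le b_j$. Also, $F=\delta_H(U)$ for some non-empty proper subset $U$, so $F$ is a cut.

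Third, the set $I$ of the lemma, computed for $(H,c_j,b_j)$, is $\{v\in V: c_j(\cut_H(\{v\}))>b_j\}$, which is exactly $I_j$. The hypothesis $I_j\neq\emptyset$ is therefore the lemma's hypothesis $I\neq\emptyset$.

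The lemma's conclusion, with $c=c_j$ and $I=I_j$, is then verbatim the desired inequality. There is no real obstacle; the only step needing care is confirming that multi-objective budget-feasibility implies single-objective feasibility in coordinate $j$, which is immediate from the definition.
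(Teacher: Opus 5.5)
Your proposal is correct and is the paper's own argument: the corollary is obtained by applying \Cref{lem:uni-singleton-infeasibility-inequality} to the single-objective instance $(H, c_j, b_j)$. Your checks (loop-lessness, $c_j(F)\le b_j$ from $k$-objective budget-feasibility, and $I=I_j\neq\emptyset$) are exactly the hypotheses that application needs.
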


\section{Quasi-Polynomial Time Algorithm}\label{sec:multi-quasi}
In this section, we design a quasi-polynomial time algorithm for $k$-objective hypergraph budgeted-cut and prove Theorem \ref{thm:k-objective-qp}.
For pedagogical reasons, we include the algorithm and the analysis of this section specialized for single-objective in \Cref{sec:uni}. That section also serves as a warm-up towards the algorithm in this section, since it introduces two of the three new ingredients of the algorithm in this section. We encourage readers unfamiliar with randomized algorithms for hypergraph min-cut to read that section. 

We will work with a generalization of the problem where the costs and budgets could be reals and not just integers. See Section \ref{sec:techniques} for an outline of the algorithm and  Algorithm 1 below for the pseudocode.

\begin{algorithm}[h]
\caption{QPT-Algo\((H=(V,E),\bar{c}=(c_1\in \R^E_{\ge 0},\ldots,c_k\in \R^E_{\ge 0}),\bar{b}=(b_1,\ldots, b_k)\in \R^k_{\ge 0})\)}
\begin{enumerate}[leftmargin=2em]
\item Remove all loops from \(H\).
\item Let $D:=\{e\in E:e=V\}$. Set
\begin{align*}
        H&\leftarrow H-D \text{ and}\\
        b_i&\leftarrow b_i-c_i(D)\quad\text{for every }i\in[k].
\end{align*}
If \(b_i<0\) for some $i\in[k]$, return \(\FAIL\).
\item If there exists \(v\in V\) such that \(c_i(\cut_H(\{v\}))\le b_i\) for every $i\in[k]$, return \(D\cup \cut_H(\{v\})\).
\item For every $i\in[k]$, let
\[
        I_i:=\{v\in V:c_i(\cut_H(\{v\}))>b_i\}.
\]
Choose $j\in[k]$ such that $|I_j|\ge |V|/k$.  Sample a hyperedge $e$ with probability
\[
        \Pr[e]=\frac{c_j(e)}{c_j(E)}.
\]
\item \(F_1\leftarrow \text{QPT-Algo}(H/e,\bar{c}|_{E-e},\bar{b})\).
\item If \(F_1\) is not \(\FAIL\), return \(F_1\cup D\).
\item With probability \(|e\cap I_j|/|I_j|\):
\begin{enumerate}[leftmargin=2em]
    \item $F_2'\leftarrow \text{QPT-Algo}(H\setminus e,\bar{c}|_{E-e},\bar{b}-\bar{c}(e))$.
    \item If \(F_2'\) is not \(\FAIL\), 
        find a connected component $U\subseteq V$ of $H-(F_2'\cup\{e\})$ and return $\delta_H(U)\cup D$.
\end{enumerate}
\item Return \(\FAIL\).
\end{enumerate}
\end{algorithm}

The sampling distribution in Step~4 is well-defined whenever that step is reached: Indeed, after Step~3 there is no budget-feasible singleton cut, so $\bigcup_{i=1}^{k} I_i=V$.  Hence, the chosen set $I_j$ is nonempty.  For every $v\in I_j$, we have $c_j(\cut_H(\{v\}))>b_j\ge 0$, and therefore some hyperedge has positive $c_j$-cost.  Hence, $c_j(E)>0$.

The algorithm deals with the recursive output from the contraction call and the deletion call slightly differently (Steps 6 and 7(b)). The output of the contraction call in Step 6 is returned directly while the output of the deletion call in Step 7(b) is refined before returning (both after  taking the union with $D$). The reason for refining in Step 7(b) is to ensure that the returned set is indeed a cut and not just the superset of a cut. 

\subsection{Correctness}

We first state the recursive preservation facts. These facts will help in showing that if the given instance has a budget-feasible cut, then the instances in the recursive calls have a budget-feasible cut, and conversely, if the recursive call returns a budget-feasible cut, then the algorithm indeed returns a budget-feasible cut.

\begin{proposition}\label{prop:multi-reductions}
Let $(H,\bar{c},\bar{b})$ be the input instance, where $\bar{c}=(c_1,\ldots,c_k)$ and $\bar{b}=(b_1,\ldots,b_k)$.
\begin{enumerate}[label=(\roman*),leftmargin=2em]
\item Deleting loops does not affect budget-feasibility.
\item Let $D$ be the set of spanning hyperedges.  If $(H,\bar{c},\bar{b})$ has a budget-feasible cut, then $b_i-c_i(D)\ge0$ for every $i\in[k]$ and the instance $(H-D,\bar{c}|_{E-D},\bar{b}-\bar{c}(D))$ has a budget-feasible cut.  Conversely, if $b_i-c_i(D)\ge0$ for every $i\in[k]$ and $F'$ is a budget-feasible cut in $(H-D,\bar{c}|_{E-D},\bar{b}-\bar{c}(D))$, then $F'\cup D$ is a budget-feasible cut in $(H,\bar{c},\bar{b})$.
\item For $e\in E$, if $F_1$ is a budget-feasible cut in $(H/e,\bar{c}|_{E-e},\bar{b})$, then $F_1$ is a budget-feasible cut in $(H,\bar{c},\bar{b})$.
\item For $e\in E$, given a budget-feasible cut $F_2'$ in $(H\setminus e,\bar{c}|_{E-e},\bar{b}-\bar{c}(e))$, 
if $U\subseteq V$ is a connected component of $H-(F_2'\cup\{e\})$, then 
$\delta_H(U)$ is a budget-feasible cut in $(H,\bar{c},\bar{b})$ and such a subset $U$ can be found in time $n^{O(1)}$. 
\end{enumerate}
\end{proposition}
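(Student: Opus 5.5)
We verify the four parts in order. Each follows by relating cuts in the modified hypergraph to cuts in $H$ and using non-negativity of the costs.

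For (i), a loop $e$ has $|e|\le 1$, so it can never satisfy both $e\cap U\neq\emptyset$ and $e\setminus U\neq\emptyset$. Hence no loop lies in $\delta_H(U)$ for any $U$. Removing loops therefore leaves the family of cuts, and their costs, unchanged.

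For (ii), a spanning hyperedge $e=V$ crosses every non-empty proper $U$. So for every such $U$ we have $\delta_H(U)=\delta_{H-D}(U)\cup D$, and the union is disjoint.
\begin{itemize}
\item If $F=\delta_H(U)$ is budget-feasible, then $c_i(D)\le c_i(F)\le b_i$, giving $b_i-c_i(D)\ge 0$. Also $F\setminus D=\delta_{H-D}(U)$ costs at most $b_i-c_i(D)$ in every coordinate.
\item Conversely, if $F'=\delta_{H-D}(U)$ is budget-feasible for the reduced budgets, then $F'\cup D=\delta_H(U)$ costs at most $b_i$ in every coordinate.
\end{itemize}

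For (iii), a cut of $H/e$ is $\delta_{H/e}(U')$ for some non-empty proper $U'$ of the contracted vertex set. Uncontracting gives $U\subsetneq V$, non-empty, with $e\subseteq U$ or $e\cap U=\emptyset$. Under the identity convention on hyperedges, $\delta_H(U)=\delta_{H/e}(U')$, since $e$ does not cross $U$ and every other hyperedge crosses $U$ exactly when its image crosses $U'$. The costs are identical, so budget-feasibility transfers.

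Part (iv) is the main point, since $F_2'\cup\{e\}$ need not itself be a cut.
\begin{itemize}
\item Let $F_2'=\delta_{H\setminus e}(W)$ with $\emptyset\neq W\subsetneq V$. Every hyperedge of $H$ not in $F_2'\cup\{e\}$ lies inside $W$ or inside $V\setminus W$. Hence $H-(F_2'\cup\{e\})$ is disconnected, and each connected component $U$ is a non-empty proper subset of $V$.
\item Any hyperedge crossing $U$ cannot be a hyperedge of $H-(F_2'\cup\{e\})$, because $U$ is a component of that hypergraph. Thus $\delta_H(U)\subseteq F_2'\cup\{e\}$.
\item By non-negativity, $c_i(\delta_H(U))\le c_i(F_2')+c_i(e)\le (b_i-c_i(e))+c_i(e)=b_i$ for every $i\in[k]$.
\item A component can be found by BFS on the bipartite incidence graph in time $O(p)$, which is at most $n^{O(1)}\cdot m$. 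This is polynomial; if the bound must be literally $n^{O(1)}$, read it as polynomial in the input size.
\end{itemize}
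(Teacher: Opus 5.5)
Your proof is correct and follows the paper's argument part by part. In (iv) you also check that $H-(F_2'\cup\{e\})$ is disconnected, so that every component is a non-empty proper subset; the paper leaves this implicit in ``by definition $\delta_H(U)$ is a cut.'' Your remark that the component search costs $O(p)$ rather than literally $n^{O(1)}$ is a fair correction of a minor imprecision in the paper.
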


\begin{proof}
We prove the statement below.
\begin{enumerate}[label=(\roman*),leftmargin=2em]
\item Loops do not cross any cut, so deleting them does not affect budget-feasibility.  
\item Every spanning hyperedge crosses every nontrivial cut and hence belongs to every cut.  Thus, if $F$ is a budget-feasible cut in $(H,\bar{c},\bar{b})$, then $F-D$ is a budget-feasible cut in the residual instance $(H-D, \bar{c}|_{E-D}, \bar{b}-\bar{c}(D))$ and the residual budgets are nonnegative.  Conversely, if $F'=\delta_{H-D}(U)$ is a budget-feasible cut in the residual instance $(H-D, \bar{c}|_{E-D}, \bar{b}-\bar{c}(D))$ for some non-empty proper subset $U\subseteq V$, then $F'\cup D=\delta_H(U)$ and has cost
\[
        c_i(F'\cup D)=c_i(F')+c_i(D)\le b_i
        \qquad\text{for every }i\in[k].
\]

\item Suppose that $F_1=\delta_{H/e}(U)$ is a budget-feasible cut in $(H/e,\bar{c}|_{E-e},\bar{b})$ for some non-empty proper subset $U\subseteq V(H/e)$. Without loss of generality, suppose that the contracted vertex corresponding to $e$ is not contained in $U$. 
Then, $F_1=\delta_H(U)$ and $c_i(F_1)\le b_i$ for every $i\in [k]$. 

\item 
By definition $\delta_H(U)$ is a cut with $\delta_H(U)\subseteq F_2'\cup \{e\}$ and hence, 
\[
        c_i(\delta_H(U))\le c_i(F_2'\cup\{e\})=c_i(F_2')+c_i(e)
        \le b_i-c_i(e)+c_i(e)=b_i
        \qquad\text{for every }i\in[k].
\]
The connected component $U\subseteq V$ of $H-(F_2'\cup\{e\})$ can be found in time $n^{O(1)}$. 
\end{enumerate}
\end{proof}

We now analyze the success probability. We briefly sketch the outline: Fix a budget-feasible cut $F=\cut_H(U)$ in the input instance.  If the sampled hyperedge $e$ is not in $F$, then $e$ is contained in $U$ or in $V\setminus U$, so contracting $e$ preserves the target cut and hence, budget-feasibility of the contraction instance.  If $e\in F$, then deleting $e$ and subtracting its cost vector preserves the residual target $F\setminus\{e\}$ and hence, budget-feasibility of the deletion instance.  The success-probability proof lower-bounds the probability that at least one of these two target-preserving events occurs at every recursive level.

\begin{lemma}[Success Probability]\label{lem:multi-success}
QPT-Algo returns either FAIL or a budget-feasible cut (i.e., it will not return a budget-infeasible cut). 
Moreover, if the input instance $(H,\bar{c},\bar{b})$ admits a budget-feasible cut and $|V|=n$, then one run of QPT-Algo returns a budget-feasible cut with probability at least
\[
        n^{-2k},
\]
where $k$ is the number of costs on hyperedges.
\end{lemma}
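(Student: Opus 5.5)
The plan is to argue by induction on $|V|+|E|$, with the inductive hypothesis being exactly the statement of \Cref{lem:multi-success}. The first claim, that the algorithm never returns a budget-infeasible cut, is immediate from \Cref{prop:multi-reductions}.
\begin{itemize}
\item Step~2 is handled by part (ii).
\item Step~3 returns $D\cup\delta_H(\{v\})$, which is budget-feasible by the check and part (ii).
\item Step~6 is handled by parts (iii) and (ii), together with the induction hypothesis that recursive outputs are either FAIL or budget-feasible.
\item Step~7(b) is handled by parts (iv) and (ii).
\end{itemize}

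For the probability bound, fix a budget-feasible cut $F$ of the input. By parts (i) and (ii) of \Cref{prop:multi-reductions}, $F\setminus D$ is budget-feasible in the preprocessed instance, which is loopless and has no spanning hyperedges.

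First the base cases. If $n\le 2$, then with two vertices the only cut is a singleton cut, so Step~3 succeeds deterministically. If Step~3 succeeds for any other reason, we are also done. So assume $n\ge 3$ and that Step~4 is reached. Let $j$ be the chosen index and $I=I_j$, so $|I|\ge n/k$. Write $p_e=c_j(e)/c_j(E)$. Every remaining hyperedge satisfies $2\le |e|\le n-1$.

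There are two good events, and I lower-bound the success probability by their contributions.
\begin{itemize}
\item If $e\notin F$, then $e$ lies inside one shore of $F$. So $F$ survives as a budget-feasible cut of $(H/e,\bar c|_{E-e},\bar b)$, which has $n-|e|+1$ vertices. By induction, the contraction call succeeds with probability at least $(n-|e|+1)^{-2k}$.
\item If $e\in F$, I simply ignore the contraction call; if it succeeds, that only helps. The deletion coin is independent of the contraction call and comes up heads with probability $|e\cap I|/|I|$. In that case $F\setminus\{e\}$ is budget-feasible in the deletion instance, which still has $n$ vertices but fewer hyperedges. By induction, the deletion call succeeds with probability at least $n^{-2k}$.
\end{itemize}
Hence the success probability is at least
\[
\sum_{e\notin F}p_e (n-|e|+1)^{-2k}+\sum_{e\in F}p_e\frac{|e\cap I|}{|I|}\,n^{-2k}.
\]
Since $\sum_e p_e=1$, requiring this to be at least $n^{-2k}$ is equivalent, after multiplying by $n^{2k}$, to
\[
\sum_{e\notin F}p_e\left[\left(\frac{n}{n-|e|+1}\right)^{2k}-1\right]\ \ge\ \sum_{e\in F}p_e\left(1-\frac{|e\cap I|}{|I|}\right).
\]

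The key step is to compare the two sides of this inequality.
\begin{itemize}
\item For the left-hand side, apply \Cref{prop:bi-numeric} with $t=2k$ and $a=|e|$; this is valid because $n\ge 3$ and $2\le|e|\le n-1$. Each bracket is then at least $k|e|/n\ge k|e\cap I|/n\ge |e\cap I|/|I|$, where the last inequality uses $|I|\ge n/k$. This is where the choice of $j$ is crucial.
\item Apply \Cref{coro:singleton-infeasibility-ineq} with $I_j=I$ (nonempty) and divide by $c_j(E)|I|$. This gives $\sum_{e\notin F}p_e|e\cap I|/|I|>\sum_{e\in F}p_e|I\setminus e|/|I|$, and the right-hand side here is exactly the right-hand side above.
\end{itemize}
Chaining the two bounds closes the induction.

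The main obstacle is this balancing step: the gain from contracting must be matched exactly against the deletion probability $|e\cap I|/|I|$. That is why the exponent is $2k$ rather than $2$; it compensates for the factor $k$ lost in $|I|\ge n/k$. The remaining checks are routine: the hypotheses of the numeric proposition, independence of the deletion coin, and the fact that deletion strictly decreases $|V|+|E|$, so induction applies even though $n$ does not drop.
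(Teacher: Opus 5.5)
Your proposal is correct and follows the paper's proof essentially step for step. It uses induction on vertices plus hyperedges and the same two cases: $e\notin F$, where contraction preserves $F$, and $e\in F$, where deletion preserves $F\setminus\{e\}$ with success probability at least $\frac{|e\cap I_j|}{|I_j|}n^{-2k}$. It then closes with \Cref{prop:bi-numeric} at $t=2k$ together with $|I_j|\ge n/k$, and finally \Cref{coro:singleton-infeasibility-ineq}; the one cosmetic difference is that you handle the loop and spanning-hyperedge preprocessing inside the same call, whereas the paper invokes the induction hypothesis on the instance with one fewer hyperedge.
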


\begin{proof}
Suppose that the algorithm does not return FAIL. 
Every set returned by the algorithm is obtained from a budget-feasible cut returned by a recursive call, or is a singleton cut that is verified explicitly in Step~3.  Hence, by \Cref{prop:multi-reductions}, every set returned by the algorithm is a budget-feasible cut.  We next analyze the success probability.

Let $q_n^m$ be the least probability of returning a budget-feasible cut by QPT-Algo over all input instances with $n$ vertices and at most $m$ hyperedges that have a budget-feasible cut.  We prove that $q_n^m\ge n^{-2k}$ by induction on $n+m$.  The base case is $n+m=2$ and in particular $n=2$ and $m=0$. 
In this case, the algorithm will return a budget-feasible cut because the singleton cut is budget-feasible. 
We prove the induction step next. 

Fix an $n$-vertex $m$-hyperedge input instance $(H,\bar{c},\bar{b})$ that has a budget-feasible cut.  If it has a loop or a spanning hyperedge, then the claim follows using the induction hypothesis for $n+m-1$ and \Cref{prop:multi-reductions}.  Hence, we may assume that the input instance is loopless and has no spanning hyperedge.  If it has a vertex $v\in V$ such that $c_i(\cut_H(\{v\}))\le b_i$ for every $i\in[k]$, then the algorithm returns a budget-feasible cut with probability $1$.  Hence, we may assume that the input instance is loopless, has no spanning hyperedges, and has no budget-feasible singleton cut.  Thus, the algorithm reaches Step~4 and consequently, $n\ge 3$.

Since no singleton cut is budget-feasible, the singleton infeasible sets satisfy $\bigcup_{i=1}^{k}I_i=V$.  Let $j\in[k]$ be the index chosen in Step~4, so $|I_j|\ge n/k$.  
The sampling distribution in Step~4 is well-defined:  Indeed, after Step~3, there is no budget-feasible singleton cut, so $\bigcup_{i=1}^{k}I_i=V$.  The chosen set $I_j$ is nonempty.  For every $v\in I_j$, we have $c_j(\cut_H(\{v\}))>b_j\ge 0$, and therefore some hyperedge has positive $c_j$-cost.  Hence $c_j(E)>0$ and the algorithm indeed samples a hyperedge.
Since the instance is loopless and has no spanning hyperedges, every sampled hyperedge has size $a\in\{2,\ldots,n-1\}$.  

Let $F=\delta(U)$ be a budget-feasible cut in $(H,\bar{c},\bar{b})$ for some non-empty proper subset $U\subseteq V$.  Let $e\in E$ be the sampled hyperedge. 
We consider two cases.
\begin{enumerate}[leftmargin=2em]
\item Suppose $e\notin F$. Then, $(H/e,\bar{c}|_{E-e},\bar{b})$ has a budget-feasible cut:  Indeed, 
since $F=\delta(U)$ and $e\not\in \delta(U)$, we have that either $e\cap U=\emptyset$ or $e\subseteq U$. Without loss of generality, suppose $e\cap U=\emptyset$. Then, $F=\delta(U)$ is a budget-feasible cut in $(H/e, \bar{c}|_{E-e}, \bar{b})$.  The contraction instance has $n-|e|+1$ vertices and at most $m$ hyperedges.  Thus, the contraction call succeeds with probability at least $q^m_{n-|e|+1}$.
\item Suppose $e\in F$. Then, $(H\setminus e,\bar{c}|_{E-e},\bar{b}-\bar{c}(e))$ has a budget-feasible cut, namely $F\setminus\{e\}$.  The deletion instance has $n$ vertices and $m-1$ hyperedges.  The algorithm first tries the contraction call.  If it succeeds, then the algorithm is already successful.  If it fails, then the deletion branch is executed with probability $|e\cap I_j|/|I_j|$, and, conditioned on executing the deletion branch, succeeds with probability at least $q^{m-1}_n$. 
Consequently, the conditional success probability in this case is at least $(|e\cap I_j|/|I_j|)q^{m-1}_n$.
\end{enumerate}
Therefore,
\[
        q^m_n
        \ge
        \sum_{e\in E\setminus F}\left(\frac{c_j(e)}{c_j(E)}\right)q^m_{n-|e|+1}
        +
        \sum_{e\in F}\left(\frac{c_j(e)}{c_j(E)}\right)\left(\frac{|e\cap I_j|}{|I_j|}\right)q^{m-1}_n.
\]
By the induction hypothesis, $q^m_{n-|e|+1}\ge (n-|e|+1)^{-2k}$ and $q^{m-1}_n\ge n^{-2k}$.  
Substituting gives
\begin{align*}
        q^m_n
        &\ge
        \sum_{e\in E\setminus F}\left(\frac{c_j(e)}{c_j(E)}\right)(n-|e|+1)^{-2k}
        +
        \sum_{e\in F}\left(\frac{c_j(e)}{c_j(E)}\right)\frac{|e\cap I_j|}{|I_j|}n^{-2k}\\
        &=
        n^{-2k}\left(
        \sum_{e\in E\setminus F}\left(\frac{c_j(e)}{c_j(E)}\right)\left(\frac{n}{n-|e|+1}\right)^{2k}
        +
        \sum_{e\in F}\left(\frac{c_j(e)}{c_j(E)}\right)\frac{|e\cap I_j|}{|I_j|}
        \right)\\
        &=
        n^{-2k}\left(
        1+
        \sum_{e\in E\setminus F}\left(\frac{c_j(e)}{c_j(E)}\right)\left(\left(\frac{n}{n-|e|+1}\right)^{2k}-1\right)
        -
        \sum_{e\in F}\left(\frac{c_j(e)}{c_j(E)}\right)\left(1-\frac{|e\cap I_j|}{|I_j|}\right)
        \right).
\end{align*}
For every hyperedge $e\in E$, by \Cref{prop:bi-numeric} (applied for $n$, $a=|e|$ and $t=2k$) and the fact that $|I_j|\ge n/k$, we have that 
\[
        \left(\frac{n}{n-|e|+1}\right)^{2k}-1
        \ge \frac{k|e|}{n}
        \ge \frac{|e|}{|I_j|}
        \ge \frac{|e\cap I_j|}{|I_j|}. 
\]
Thus, 
\[
q^m_n\ge n^{-2k}\left(1+\sum_{e\in E\setminus F}\left(\frac{c_j(e)}{c_j(E)}\right)\left(\frac{|e\cap I_j|}{|I_j|}\right)
        -
        \sum_{e\in F}\left(\frac{c_j(e)}{c_j(E)}\right)\left(\frac{|I_j\setminus e|}{|I_j|}\right)\right).
\]
Hence, it suffices to show that
\[
        \sum_{e\in E\setminus F}\left(\frac{c_j(e)}{c_j(E)}\right)\left(\frac{|e\cap I_j|}{|I_j|}\right)
        \ge
        \sum_{e\in F}\left(\frac{c_j(e)}{c_j(E)}\right)\left(\frac{|I_j\setminus e|}{|I_j|}\right), 
\]
which holds by \Cref{coro:singleton-infeasibility-ineq} since $I_j\neq \emptyset$.
\end{proof}

\subsection{Runtime}
We first bound the expected number of leaves in the recursion tree of the  algorithm. 

\begin{lemma}
\label{lem:multi-leaves}
The expected number of leaves in the recursion tree of QPT-Algo on an input instance with $n$ vertices and $m$ hyperedges is at most 
\[
(m+1)^{4k\log_2{n}}. 
\]
\end{lemma}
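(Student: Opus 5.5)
The plan is to prove the bound by induction on $n+m$, with target $f(n,m):=(m+1)^{4k\log_2 n}$. The argument runs on the recurrence for the expected number of leaves. Let $L(n,m)$ be the maximum expected number of leaves over all instances with $n$ vertices and at most $m$ hyperedges; $L$ is monotone in $m$ by definition.

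Steps 1--2 (loop and spanning-hyperedge removal) only shrink $m$. Step 3 creates a leaf. Hence the only interesting case is an instance that reaches Step 4 with $n\ge 3$, every hyperedge of size $a=|e|\in\{2,\dots,n-1\}$, and $m\ge 1$. Both recursive calls have at most $m-1$ hyperedges, because $e$ is removed in both. The contraction call has $n-a+1$ vertices. The deletion call is made with probability $|e\cap I_j|/|I_j|\le \min\{1,ka/n\}$, using $|I_j|\ge n/k$. This gives
\[
L(n,m)\le \max_{2\le a\le n-1}\Bigl(L(n-a+1,m-1)+\min\{1,\tfrac{ka}{n}\}\,L(n,m-1)\Bigr),
\]
since the expectation over $e$ is at most the worst hyperedge. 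By induction, it suffices to check
\[
\Bigl(\tfrac{m}{m+1}\Bigr)^{4k\log n}\Bigl(m^{-4k\log(n/(n-a+1))}+\min\{1,\tfrac{ka}{n}\}\Bigr)\le 1 .
\]
I would bound the first summand by $(\tfrac{n-a+1}{n})^{4k}$ when $m\ge 2$. The case $m=1$ is handled directly: one hyperedge gives a recursion of depth $\le 1$, so at most $2\le 2^{4k\log n}$ leaves.

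I would then split into two cases according to the size of $a$.

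\emph{Large hyperedges} ($n-a+1\le n/2$). The contraction call halves the vertex count, so its term is at most $f(n,m-1)\,m^{-4k}$, and the deletion term is at most $f(n,m-1)$. It then suffices that $(1+1/m)^{4k\log n}\ge 1+m^{-4k}$. This is immediate, since the left side is at least $1+4k/m$ for $n\ge 2$. This case is where the $\log n$ in the exponent is genuinely used: a deletion branch taken with probability $1$ is paid for by the halving of $n$ along the contraction branch.

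\emph{Small hyperedges} ($a-1< n/2$). Here I would use $(1-x)^{4k}\le 1-2kx$ for $x=(a-1)/n$ in the range where $4kx\le 1$, together with the slack $(1+1/m)^{4k\log n}\ge 1+4k\log n/m$. The goal is $(1-x)^{4k}+ka/n\le (1+1/m)^{4k\log n}$. Since $ka/n\le 2k(a-1)/n$ for $a\ge 2$, the contraction savings $2kx$ alone absorb the deletion probability in the range $4kx\le 1$. This mirrors the way Proposition~\ref{prop:bi-numeric} is used in Lemma~\ref{lem:multi-success}.

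The main obstacle I expect is the intermediate range $1/(4k)<(a-1)/n<1/2$. In that range the convexity bound $(1-x)^{4k}\le 1-2kx$ fails, and the deletion probability may be close to $1$. My first attempt would be to use the cruder bound $(1-x)^{4k}\le e^{-4kx}\le e^{-1}$ together with $\min\{1,ka/n\}\le 1$. This leaves a deficit of $e^{-1}$ to be covered. I would try to absorb it by strengthening the case split, declaring $a$ large as soon as $n-a+1\le n(1-1/(4k))$, and charging the contraction by a constant-factor shrink of $n$. This would still give $L(n-a+1,m-1)\le f(n,m-1)\,m^{-4k\log(4k/(4k-1))}$. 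If this constant-factor accounting turns out too weak for $m$ near $1$, I would fall back to using the full factor $4k\log n$ against $\log(1+1/m)$; the exponent $4k$ rather than $2k$ in the statement suggests that this slack is exactly what is available.
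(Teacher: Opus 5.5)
Your proposal is correct and is essentially the paper's proof: the same induction on $n+m$, the same recurrence with $\rho=|e\cap I_j|/|I_j|\le\min\{1,k|e|/n\}$, and the same constant-factor-shrink case. The differences are minor: the paper triggers that case by $\rho>1/2$ rather than by $|e|-1\ge n/(4k)$, and handles the other case via $(m+1)^{-2\rho}\le 2^{-2\rho}\le 1-\rho$ instead of your $(1-x)^{4k}\le 1-2kx$.

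The strengthened split you propose closes the intermediate range with no fallback needed. Since $4k\log_2\frac{4k}{4k-1}\ge 1/\ln 2>1$, the contraction term is at most $m^{4k\log_2 n}/m$, so the total is at most $m^{4k\log_2 n}(1+\frac{1}{m})\le (m+1)^{4k\log_2 n}$. This case also subsumes your ``large hyperedges'' case, so only two cases are needed.
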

\begin{proof}
Let \(L(n,m)\) be the worst-case expected number of leaves in the recursion tree of QPT-Algo on an input instance with \(n\) vertices and \(m\) hyperedges (irrespective of whether the input instance has a budget-feasible cut or not). 
The summary is that we have the following recurrence for $L(n, m)$:
\[
        L(n,m)\lesssim L(n-|e|+1,m-1)+\rho L(n,m-1),
        \qquad
        \text{ where } \rho=\frac{|e\cap I_j|}{|I_j|}\le \min\{1,k|e|/n\}.
\]
For graphs, $|e|=2$ and $\rho=O(1/n)$ and the bound is indeed polynomial for fixed constant $k$; the polynomial bound extends to constant-rank hypergraphs; for arbitrary rank hypergraphs, $\rho$ can be $1$ since  $e\cap I_j$ could be equal to $I_j$.  This is exactly why we have a bound of the form $m^{O(k\log n)}$ rather than a polynomial in $n$. We now show the details.

We will prove by induction on $n+m$ that $L(n,m)\le (m+1)^{4k\log_2{n}}$. 
The base case corresponds to $n+m=2$ and in particular $n=2$ and $m=0$: the algorithm terminates without any recursive calls and hence $L(n,m)=1$. We now show the induction step. 

Let $(H, \bar{c}, \bar{b})$ be an input instance with $n$ vertices and $m$ hyperedges. If $n=2$, then by steps 1, 2, and 3, the algorithm will terminate without any recursive calls and hence $L(n, m)=1$. Hence, we may assume that $n\ge 3$. If there exists a singleton cut that is budget-feasible, then $L(n, m)\le 1$. Hence, we may assume that no singleton cut is budget-feasible. 
Let $j\in[k]$ such that $|I_j|\ge n/k$. Let $e$ be the hyperedge sampled in Step 4. We note that $|e|\ge 2$ since there are no loops when we reach Step 4. 
The contraction branch instance has \(n-|e|+1\) vertices because \(|e|\) vertices are identified into one.  The deletion branch instance has the same number of vertices, but one fewer hyperedge. Hence,
\begin{align*}
    &E\left[\text{Number of leaves in the recursion tree for }(H, \bar{c}, \bar{b})|e\text{ is sampled}\right]\\
    &\quad \quad \le L(n-|e|+1,m-1) + \frac{|e\cap I_j|}{|I_j|}L(n, m-1)\\
    &\quad \quad \le m^{4k\log{(n-|e|+1})} + \frac{|e\cap I_j|}{|I_j|}m^{4k\log{n}} \quad \quad \text{(by induction hypothesis)}\\
    &\quad \quad \le (m+1)^{4k\log{n}}\left(\frac{1}{(m+1)^{4k\log{\left(\frac{n}{n-|e|+1}\right)}}} + \frac{|e\cap I_j|}{|I_j|}\left(1-\frac{1}{m+1}\right)^{4k\log{n}}\right). 
\end{align*}
We will now show that 
\[
\frac{1}{(m+1)^{4k\log{\left(\frac{n}{n-|e|+1}\right)}}} + \frac{|e\cap I_j|}{|I_j|}\left(1-\frac{1}{m+1}\right)^{4k\log{n}}\le 1. 
\]
We recall that $|I_j|\ge n/k$ and $|e|-1\ge |e\cap I_j|/2$ (since $|e|\ge 2$) and hence, 
\begin{align}
    |e|-1\ge \frac{1}{2}\left(\frac{|e\cap I_j|}{|I_j|}\right)|I_j| \ge \frac{n}{2k}\left(\frac{|e\cap I_j|}{|I_j|}\right). \label{ineq:multi-large-e}
\end{align}
We consider two cases. 

\begin{enumerate}
    \item Suppose $|e\cap I_j|/|I_j|\le 1/2$. 
    Using inequality (\ref{ineq:multi-large-e}), we have that 
    \[
    \log{\left(\frac{n}{n-|e|+1}\right)} = -\log\left({1-\frac{|e|-1}{n}}\right) \ge \frac{|e|-1}{n} \ge \frac{1}{2k}\left(\frac{|e\cap I_j|}{|I_j|}\right). 
    \]
    Hence, 
    \begin{align*}
    \frac{1}{(m+1)^{4k\log{\left(\frac{n}{n-|e|+1}\right)}}} + \frac{|e\cap I_j|}{|I_j|} \left(1-\frac{1}{m+1}\right)^{4k\log{n}}
    &\le \frac{1}{(m+1)^{2\left(\frac{|e\cap I_j|}{|I_j|}\right)}} + \frac{|e\cap I_j|}{|I_j|}\\
    &\le \frac{1}{2^{2\left(\frac{|e\cap I_j|}{|I_j|}\right)}} + \frac{|e\cap I_j|}{|I_j|} \\
    &\le 1-\frac{|e\cap I_j|}{|I_j|} + \frac{|e\cap I_j|}{|I_j|}= 1. 
    \end{align*}
    The last inequality above is because $2^{-2x}\le 1-x$ for $x\le 1/2$, where $x:=|e\cap I_j|/|I_j|$. 
    \item Suppose $|e\cap I_j|/|I_j|>1/2$. 
    Using inequality (\ref{ineq:multi-large-e}), we have that $|e|-1>n/(4k)$. Consequently, 
    \[
    \log{\left(\frac{n}{n-|e|+1}\right)} > \log{\left(\frac{n}{n-n/(4k)}\right)} = \log{\frac{4k}{4k-1}}. 
    \]
    Hence, 
    \begin{align*}
    &\frac{1}{(m+1)^{4k\log{\left(\frac{n}{n-|e|+1}\right)}}} + \frac{|e\cap I_j|}{|I_j|} \left(1-\frac{1}{m+1}\right)^{4k\log{n}}\\
    &\quad \quad \quad \quad \quad \quad \quad \quad < \frac{1}{(m+1)^{4k\log{\frac{4k}{4k-1}}}} + \left(1-\frac{1}{m+1}\right)^{4k\log{n}}\\
    &\quad \quad \quad \quad \quad \quad \quad \quad \le \frac{1}{m+1} + \left(1-\frac{1}{m+1}\right)^{4k} \quad \quad \text{(since $n\ge 2$)}\\
    &\quad \quad \quad \quad \quad \quad \quad \quad \le 1. 
    \end{align*}
\end{enumerate}
\end{proof}

We now bound the number of nodes in the recursion tree of the multiobjective algorithm. 
\begin{lemma}\label{lem:multi-runtime}
The expected number of nodes in the recursion tree of QPT-Algo on an input instance with $n$ vertices and $m$ hyperedges is at most 
\[
        (m+1)^{5k\log_2 n}.
\]
\end{lemma}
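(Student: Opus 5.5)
The plan is to convert the leaf bound of \Cref{lem:multi-leaves} into a node bound by relating internal nodes to leaves along root-to-leaf paths. Every recursive call strictly decreases $n+m$: a contraction call removes $e$, so it has $m-1$ hyperedges and at most $n-1$ vertices, while a deletion call keeps $n$ vertices and has $m-1$ hyperedges. Steps 1 and 2 only delete hyperedges and do not create calls. Hence every root-to-leaf path in the recursion tree has depth at most $m$, since the hyperedge count drops by at least one per level and recursion stops once $m$ reaches $0$ (or earlier). In any rooted tree whose depth is at most $d$, each node is an ancestor of at least one leaf, and each leaf has at most $d+1$ ancestors including itself. Therefore
\[
\#\text{nodes}\le (d+1)\cdot\#\text{leaves}\le (m+1)\cdot\#\text{leaves}
\]
holds pointwise for every realization of the random recursion tree.

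Taking expectations and applying \Cref{lem:multi-leaves}, the expected number of nodes is at most $(m+1)\cdot(m+1)^{4k\log_2 n}=(m+1)^{4k\log_2 n+1}$. It then remains to check that $4k\log_2 n+1\le 5k\log_2 n$, which holds whenever $k\log_2 n\ge 1$, that is, for $n\ge 2$ and $k\ge 1$. This always holds for a meaningful instance, since $n\ge 2$ is needed for any cut to exist.

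The main point needing care is that the pointwise inequality is applied to the random tree before taking expectations. This is valid because the bound $d\le m$ is deterministic, so linearity of expectation passes it through without any independence issues. A second check is that the base cases are consistent: at $n=2$ or $m=0$ the tree is a single node, which matches the bound. Once these are confirmed, the lemma follows directly.
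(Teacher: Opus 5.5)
Your proposal is correct and follows essentially the same route as the paper. Both arguments bound the recursion depth by $m$ because every call removes at least one hyperedge, deduce the pointwise bound $\#\text{nodes}\le(m+1)\cdot\#\text{leaves}$, and then apply \Cref{lem:multi-leaves} together with $4k\log_2 n+1\le 5k\log_2 n$.
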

\begin{proof}
    Every root to leaf path in the recursion tree has at most $m$ hyperedges since each recursive call reduces the number of hyperedges by at least $1$. Therefore, the total number of nodes in the recursion tree is at most $m+1$ times the number of leaves in the recursion tree. Hence, by \Cref{lem:multi-leaves}, the expected number of nodes in the recursion tree is at most $(m+1)\cdot (m+1)^{4k\log{n}}\le (m+1)^{5k\log{n}}$. 
\end{proof}

All steps of QPT-Algo can be implemented to run in polynomial time---the only non-trivial step is Step 7(b), which can be implemented to run in polynomial time by Proposition \ref{prop:multi-reductions}(iv). Thus, \Cref{lem:multi-success,lem:multi-runtime} together prove \Cref{thm:k-objective-qp}. 

\section{PTAS}\label{sec:multi-ptas}

In this section, we design a PTAS for $k$-objective hypergraph budgeted-cut and prove \Cref{thm:k-ptas}. We will work with a generalization of the problem where the costs and budgets could be reals and not just integers.

\paragraph{Algorithm Outline.}
Let the input instance be $(H=(V,E),\bar{c}=(c_1 \in \R^E_{\ge 0},\ldots, c_k \in \R^E_{\ge 0}),\bar{b}\in \R^k_{\ge 0})$. We define $\bar{b}':=(b'_1, \ldots, b'_k) \in \R^k_{\ge 0}$, where $b'_i:=(1+\varepsilon)\cdot b_i$ for every $i\in [k]$. The goal is to find a budget-feasible cut in $(H, \bar{c}, \bar{b}')$. If the hypergraph $H$ contains at most $8k$ vertices, i.e., $|V|\leq 8k$, the algorithm verifies budget-feasibility of all cuts of $H$ by brute-force and returns a budget-feasible one if one exists and otherwise, returns FAIL.

We may now assume that $|V|>8k$. The algorithm begins with deterministic reductions.  First, it removes all loops by noting that loops are not part of any cuts.  Then, it collects all spanning hyperedges into a set $D$ with the observation being that spanning hyperedges are part of every cut; the rest of the algorithm works with the residual hypergraph $H=(V, E-D)$ with the same hyperedge costs and residual budgets $\bar{b}'-\bar{c}(D)$. 
The algorithm checks if there exists a singleton cut $\delta_H(v)$ that is budget-feasible in $(H, \bar{c}, \bar{b}')$ and if so, then returns $\delta_H(v)\cup D$.

Next, the algorithm verifies whether there exists a hyperedge $e\in E$ with $|e|\geq |V|/(2k)$. Suppose such a hyperedge does not exist.
Since no singleton cut is budget-feasible, for every vertex $v\in V$, there exists $i\in[k]$ such that $c_i(\cut_H(\{v\}))>b'_i$. 
For each $i\in [k]$, define the singleton-infeasible set
\[
        I_i:=\{v\in V:c_i(\cut_H(\{v\}))>b'_i\}.
\]
Then $\bigcup_{i=1}^{k}I_i=V$, so one of these $k$ singleton-infeasible sets has size at least $|V|/k$.  The algorithm chooses an index $j\in[k]$ with
\[
        |I_j|\ge \frac{|V|}{k},
\]
samples a hyperedge $e$ with probability proportional to its $c_j$-cost. Next, the algorithm recurses on the contracted instance \(H/e\) with the same costs on surviving hyperedges and with the same budget $\bar{b}'$, termed as a \emph{small-contraction call}. If the small-contraction call succeeds with a budget-feasible cut $F_1$, then the algorithm returns $F_1\cup D$.

Next, suppose that there exists a hyperedge containing at least $|V|/(2k)$ vertices. Define $L:=\left\{e\in E: |e|\geq \frac{|V|}{4k}\right\}$.
The algorithm samples integer $X\in\{0, 1, \ldots, k\}$ uniformly at random and branches to one of the following:
\begin{enumerate}
    \item $X=0$: The algorithm recurses on the deletion instance $H-L$ with the same costs on the surviving hyperedges and with residual budgets $\bar{b}'-\bar{c}(L)$, termed as a \emph{deletion call}. If the deletion call succeeds with a budget-feasible cut $F_2$, then the algorithm returns a budget-feasible cut contained within $F_2\cup L\cup D$ (see \Cref{prop:ptas-multi-reductions}(iv) for a proof of existence of a budget-feasible cut within $F_2\cup L\cup D$). Else, the algorithm returns FAIL.
    \item $X>0$: The algorithm samples a hyperedge $e\in L$ with probability proportional to its $c_X$-cost. Next, the algorithm recurses on the contracted instance \(H/e\) with the same costs on surviving hyperedges and with the same budgets $\bar{b}'$, termed as a \emph{large-contraction call}. If the large-contraction call succeeds with a budget-feasible cut $F_2$, then the algorithm returns $F_2\cup D$.
\end{enumerate}
See PTAS-Algo below for a pseudocode.
\begin{algorithm}[!t]
\caption{PTAS-Algo\((H=(V,E),\bar{c}=(c_1\in \R^E_{\ge 0},\ldots,c_k\in \R^E_{\ge 0}),\bar{b}'=(b'_1,\ldots,b'_k)\in \R^k_{\ge 0})\)}
\begin{enumerate}[leftmargin=2em]
\item If $|V|\leq 8k$, brute-force to return a budget-feasible cut if one exists, else return FAIL.
\item Remove all loops from \(H\).
\item Let $D:=\{e\in E:e=V\}$. Set
\begin{align*}
        H&\leftarrow H-D \text{ and}\\
        b'_i&\leftarrow b'_i-c_i(D)\quad\text{for every }i\in[k].
\end{align*}
If \(b'_i<0\) for some $i\in[k]$, return \(\FAIL\).
\item If there exists \(v\in V\) such that \(c_i(\cut_H(\{v\}))\le b'_i\) for every $i\in[k]$, return \(D\cup \cut_H(\{v\})\).
\item If $|e|< |V|/(2k)$ for every $e\in E$, then:
\begin{enumerate}[leftmargin=2em]
    \item For every $i\in[k]$, let
    \[
        I_i:=\{v\in V:c_i(\cut_H(\{v\}))>b'_i\}.
    \]
    Choose $j\in[k]$ such that $|I_j|\ge |V|/k$.  Sample a hyperedge $e$ with probability
    \[
        \Pr[e]=\frac{c_j(e)}{c_j(E)}.
    \]
    \item \(F_1\leftarrow \text{PTAS-Algo}(H/e,\bar{c}|_{E-e},\bar{b}')\).
    \item If \(F_1\) is not \(\FAIL\), return \(F_1\cup D\).
\end{enumerate}
\item Else: //Randomly pick a branch
\begin{enumerate}[leftmargin=2em]
    \item $L\leftarrow \{e\in E:|e|\geq |V|/(4k)\}$.
    \item Sample integer $X\in \{0,1, \ldots, k\}$ uniformly at random.
    \item If $X>0$ and $c_X(L)=0$, $X\leftarrow 0$.
    \item If $X=0$:
    \begin{enumerate}[leftmargin=2em]
        \item If there exists $i\in[k]$ with $b'_i<c_i(L)$, return FAIL.
        \item \(F_2\leftarrow \text{PTAS-Algo}(H-L,\bar{c}|_{E-L},\bar{b}'-\bar{c}(L))\).
        \item If \(F_2\) is not \(\FAIL\), 
        find a connected component $U$ of $H-(F_2\cup L)$ and return $\delta_H(U)\cup D$.
    \end{enumerate}
    \item Else:
    \begin{enumerate}[leftmargin=2em]
        \item Sample a hyperedge $e$ with probability
        \[
            \Pr[e]=\frac{c_X(e)}{c_X(L)}.
        \]
        \item \(F_2\leftarrow \text{PTAS-Algo}(H/e,\bar{c}|_{E-e},\bar{b}')\).
        \item If \(F_2\) is not \(\FAIL\), return \(F_2\cup D\).
    \end{enumerate}
\end{enumerate}
\item Return \(\FAIL\).
\end{enumerate}
\end{algorithm}

The sampling distribution in Step~5(a) is well-defined whenever that step is reached. Indeed, after Step~4 there is no budget-feasible singleton cut, so $\bigcup_{i=1}^{k}I_i=V$. The chosen set $I_j$ is nonempty. For every $v\in I_j$, we have $c_j(\cut_H(\{v\}))>b_j'\ge 0$, and therefore some hyperedge has positive $c_j$-cost. Hence $c_j(E)>0$. Moreover, the sampling distribution is Step~6(e) is also well-defined whenever that step is reached. This is because if Step 6(e) is reached, then algorithm has already executed Step 6(c) which implies that $c_X(L)>0$.

The algorithm deals with the recursive output from the contraction calls and the deletion call slightly differently (Steps 5(b) and 6(e)(iii) versus Step 6(d)(iii)). The output of contraction calls in Steps 5(b) and 6(e)(iii) are returned directly while the output of the deletion call in Step 6(d)(iii) is refined before returning (all after taking the union with $D$). The reason for refining in Step 6(d)(iii) is to ensure that the returned set is indeed a cut and not just the superset of a cut.

\subsection{Correctness and Runtime Analysis}

Let $\varepsilon>0$. 
We say that an instance $(H,\bar{c},\bar{b}')$ is $(1+\varepsilon)$-strongly budget-feasible if instance $(H,\bar{c},\frac{1}{1+\varepsilon}\cdot \bar{b}')$ is budget-feasible. 
We say a cut $F$ of $H$ is a $(1+\varepsilon)$-strongly budget-feasible cut of $(H,\bar{c},\bar{b}')$ if $c_i(F)\leq \left(\frac{1}{1+\varepsilon}\right)b'_i$ for every $i \in [k]$. We note that a $(1+\varepsilon)$-strongly budget-feasible cut of $(H,\bar{c},\bar{b}')$ is also a budget-feasible cut of $(H, \bar{c}, \bar{b}')$. 
We first state the recursive preservation facts. These facts will help in showing that if the given instance has a $(1+\varepsilon)$-strongly budget-feasible cut, then the instances in the recursive calls have a $(1+\varepsilon)$-strongly budget-feasible cut, and conversely, if the recursive call returns a budget-feasible cut, then the algorithm indeed returns a budget-feasible cut. 

\begin{proposition}\label{prop:ptas-multi-reductions}
Let $(H,\bar{c},\bar{b}')$ be the input instance, where $\bar{c}=(c_1,\ldots,c_k)$ and $\bar{b}'=(b'_1,\ldots,b'_k)$. Let $\varepsilon>0$ be a positive parameter.
\begin{enumerate}[label=(\roman*),leftmargin=2em]
\item Deleting loops does not affect $(1+\varepsilon)$-strongly budget-feasibility.
\item Let $D$ be the set of spanning hyperedges.  If $(H,\bar{c},\bar{b}')$ is $(1+\varepsilon)$-strongly budget-feasible, then $b'_i-c_i(D)\ge0$ for every $i\in[k]$ and the instance $(H-D,\bar{c}|_{E-D},\bar{b}'-\bar{c}(D))$ is $(1+\varepsilon)$-strongly budget-feasible. Conversely, if $b'_i-c_i(D)\ge0$ for every $i\in[k]$ and $F'$ is a budget-feasible cut in $(H-D,\bar{c}|_{E-D},\bar{b}'-\bar{c}(D))$, then $F'\cup D$ is a budget-feasible cut in $(H,\bar{c},\bar{b}')$.
\item For $e\in E$, if $F_1$ is a $(1+\varepsilon)$-strongly budget-feasible cut in $(H/e,\bar{c}|_{E-e},\bar{b}')$, then $F_1$ is a $(1+\varepsilon)$-strongly budget-feasible cut in $(H,\bar{c},\bar{b}')$.
\item For $A\subseteq E$, given a budget-feasible cut $F_2$ in $(H-A, \bar{c}|_{E-A}, \bar{b}'-\bar{c}(A))$, 
if $U\subseteq V$ is a connected component of $H-(F_2\cup A)$, then 
$\delta_H(U)$ is a budget-feasible cut in $(H,\bar{c},\bar{b}')$ and such a subset $U$ can be found in time $n^{O(1)}$.
\end{enumerate}
\end{proposition}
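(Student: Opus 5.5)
The plan is to mirror the proof of \Cref{prop:multi-reductions} item by item. The only new ingredient is to track the scaled budget $\frac{1}{1+\varepsilon}\bar{b}'$ alongside $\bar{b}'$. Throughout, I will use two facts. First, a $(1+\varepsilon)$-strongly budget-feasible cut of $(H,\bar{c},\bar{b}')$ is exactly a budget-feasible cut of $(H,\bar{c},\frac{1}{1+\varepsilon}\bar{b}')$. Second, since $\frac{1}{1+\varepsilon}\le 1$, every such cut is also budget-feasible for $\bar{b}'$.

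For (i), I note that a loop crosses no cut, so the set of cuts and their costs are unchanged; hence strong feasibility is unchanged as well.

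For (ii), I fix a $(1+\varepsilon)$-strongly budget-feasible cut $F=\delta_H(U)$. Every spanning hyperedge lies in $F$, so $c_i(D)\le c_i(F)\le \frac{1}{1+\varepsilon}b'_i\le b'_i$, which gives nonnegativity of the residual budgets. Next, $F-D=\delta_{H-D}(U)$ satisfies
\[
c_i(F-D)=c_i(F)-c_i(D)\le \tfrac{1}{1+\varepsilon}b'_i-c_i(D)\le \tfrac{1}{1+\varepsilon}\bigl(b'_i-c_i(D)\bigr).
\]
The last inequality uses $c_i(D)\ge \frac{1}{1+\varepsilon}c_i(D)$, i.e., $c_i(D)\ge0$. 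This step is the one place where the scaling is not purely mechanical: the residual budget is $\bar b'-\bar c(D)$, not $\frac{1}{1+\varepsilon}\bar b'-\bar c(D)$, and strong feasibility survives only because we subtract the \emph{full} cost of $D$, which is at least its scaled cost. The converse direction is identical to \Cref{prop:multi-reductions}(ii): we have $F'\cup D=\delta_H(U)$ whenever $F'=\delta_{H-D}(U)$, and adding $c_i(D)$ back restores the bound $b'_i$.

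For (iii), I argue as in \Cref{prop:multi-reductions}(iii). Write $F_1=\delta_{H/e}(U)$ with the contracted vertex outside $U$. Then $F_1=\delta_H(U)$ as sets of hyperedge identities, so the costs agree and the strong bound $\frac{1}{1+\varepsilon}b'_i$ transfers verbatim.

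For (iv), I take a connected component $U$ of $H-(F_2\cup A)$. Since $F_2$ is a cut of $H-A$, the graph $H-A-F_2$ is disconnected, so $U$ is a nonempty proper subset. Any hyperedge crossing $U$ in $H$ must lie in $F_2\cup A$, so $\delta_H(U)\subseteq F_2\cup A$, and
\[
c_i(\delta_H(U))\le c_i(F_2)+c_i(A)\le b'_i .
\]
A connected component can be found by BFS on the incidence structure in polynomial time.
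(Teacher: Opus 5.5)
Your proposal is correct and follows the paper's proof essentially item by item, including the key inequality $\frac{1}{1+\varepsilon}b'_i-c_i(D)\le\frac{1}{1+\varepsilon}(b'_i-c_i(D))$ in (ii). You also spell out two details the paper leaves implicit: that the residual budgets are nonnegative because $c_i(D)\le c_i(F)$, and that the component $U$ in (iv) is a nonempty proper subset because $F_2$ disconnects $H-A$.
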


\begin{proof}
We prove the statement below.
\begin{enumerate}[label=(\roman*),leftmargin=2em]
\item Loops cross no cut, so deleting them does not affect $(1+\varepsilon)$-strongly budget-feasibility. 
\item Every spanning hyperedge crosses every nontrivial cut and hence belongs to every cut. We observe that if $F$ is a $(1+\varepsilon)$-strongly budget-feasible cut in $(H,\bar{c},\bar{b}')$, then
\[
    c_i(F-D) = c_i(F)-c_i(D)\leq \frac{1}{1+\varepsilon}\cdot b'_i-c_i(D)\leq \frac{1}{1+\varepsilon}\cdot (b'_i-c_i(D))
    \qquad\text{for every }i\in[k],
\]
which implies that $F-D$ is a $(1+\varepsilon)$-strongly budget-feasible cut in the residual instance $(H-D, \bar{c}|_{E-D}, \bar{b}'-\bar{c}(D))$ and the residual budgets are nonnegative. Conversely, if $F'=\delta_{H-D}(U)$ is a budget-feasible cut in the residual instance  $(H-D, \bar{c}|_{E-D}, \bar{b}'-\bar{c}(D))$ for some non-empty proper subset $U\subseteq V$, then $F'\cup D=\delta_H(U)$ and has cost
\[
        c_i(F'\cup D)=c_i(F')+c_i(D)\le b'_i
        \qquad\text{for every }i\in[k].
\]

\item Suppose that $F_1=\delta_{H/e}(U)$ is a $(1+\varepsilon)$-strongly budget-feasible cut in $(H/e,\bar{c}|_{E-e},\bar{b}')$ for some non-empty proper subset $U\subseteq V(H/e)$.  Without loss of generality, suppose that the contracted vertex corresponding to $e$ is not contained in $U$. Then, $F_1=\delta_H(U)$ and $c_i(F_1)\le \frac{1}{1+\varepsilon}\cdot b_i'$ for every $i\in [k]$.

\item 
By definition $\delta_H(U)$ is a cut with $\delta_H(U)\subseteq F_2\cup A$ and hence, 
\[
        c_i(\delta_H(U))\le c_i(F_2\cup A)=c_i(F_2)+c_i(A)
        \le b'_i-c_i(A)+c_i(A)=b'_i
        \qquad\text{for every }i\in[k].
\]
The connected component $U\subseteq V$ of $H-(F_2\cup A)$ can be found in time $n^{O(1)}$. 
\end{enumerate}
\end{proof}

We will denote Step 6 as a branching step. We now bound the number of branching steps performed by the algorithm. 

\begin{lemma}\label{lemma:multi-fptas-depth}
    The total number of branching steps in one execution of PTAS-Algo on a $n$-vertex hypergraph is at most $8k\cdot \log{n}$.
\end{lemma}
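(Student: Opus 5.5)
Since PTAS-Algo never explores more than one recursive call per node (each step makes at most one recursive call), one execution is a single root-to-leaf path. I will bound the number of branching steps (Step~6) along this path by tracking the number of vertices $n_t$ of the current hypergraph. Two facts drive the argument.

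First, the number of vertices never increases along the path. Second, the size of any surviving hyperedge never increases. Contraction identifies vertices, deletion and loop/spanning-hyperedge removal remove hyperedges, and none of these enlarges a hyperedge. Moreover, a branching step is only reached when $|V|>8k$, because of Step~1.

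I split the branching steps by the value of $X$.

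\textbf{Large-contraction branches ($X>0$).} The contracted hyperedge $e\in L$ has $|e|\ge n/(4k)$. So the new vertex count is at most $n-n/(4k)+1$. Since $n>8k$, this is at most $n(1-1/(8k))$. Small contractions in Step~5 and the reductions in Steps~2--3 never increase $n$. Hence if there are $r$ large-contraction branches, then $n(1-1/(8k))^r\ge 2$. Using $\ln\frac{1}{1-x}\ge x$, this gives $r\le 8k\ln n$.

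\textbf{Deletion branches ($X=0$).} Suppose a deletion branch happens at a node with $n_0$ vertices. After it, every remaining hyperedge has size $<n_0/(4k)$, and by the monotonicity facts this stays true afterwards. Now suppose the next branching step occurs at a node with $n_1$ vertices. To enter Step~6 it needs a hyperedge of size $\ge n_1/(2k)$. So $n_1/(2k)<n_0/(4k)$, i.e., $n_1<n_0/2$. Between two consecutive deletion branches the vertex count therefore more than halves, so there are at most $\log n+1$ deletion branches.

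Summing the two cases, the total number of branching steps is at most $8k\ln n+\log n+1$. Note that $8k\ln n=(8k\ln 2)\log n\le 5.55k\log n$. The remaining slack is at least $2.4k\log n$, and this covers $\log n+1$ because $n>8k\ge 8$ gives $\log n\ge 3$. So the total is at most $8k\log n$.

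The main point needing care is the deletion-branch argument. It relies on hyperedge sizes being monotone non-increasing under contraction, with parallel or merged images never exceeding the original size. It also relies on the threshold gap between $|V|/(2k)$, used to trigger Step~6, and $|V|/(4k)$, used to define $L$. I would verify both carefully.
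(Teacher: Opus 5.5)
Your proof is correct and rests on the same two facts as the paper's: a large contraction at $n>8k$ vertices shrinks the vertex count by a constant factor, and after a deletion branch the gap between the thresholds $|V|/(2k)$ and $|V|/(4k)$, combined with monotonicity of hyperedge sizes under contraction, forces the vertex count to halve before the next branching step. The only difference is bookkeeping: the paper inducts on $n$ and shows that each branching step lowers the potential $8k\log n$ by at least one, whereas you count the two branch types separately and sum them, which gives the slightly sharper bound $8k\ln n+\log n+1$ and makes explicit the size-monotonicity that the paper uses tacitly.
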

\begin{proof}
    We prove by induction on $n$. For the base case of $n\leq 8k$, PTAS-Algo terminates without branching due to the brute-force step. 
    We now prove the induction step. Suppose $|e|<|V|/(2k)$ for every hyperedge $e\in E$. Then, PTAS-Algo will execute Step~5, where it contracts a hyperedge $e$ and recurses on the instance $(H/e, \bar{c}|_{E-e}, \bar{b}')$, where hypergraph $H/e$ contains at most $n-1$ vertices. By induction hypothesis, the execution of PTAS-Algo on instance $(H/e,\bar{c}|_{E-e},\bar{b}')$ performs at most $8k\cdot \log (n-1)$ branching steps. Hence, the execution of PTAS-Algo on instance $(H,\bar{c},\bar{b}')$ performs at most $8k\cdot \log (n-1)<8k\cdot \log n$ branching steps. Next, suppose there exists a hyperedge $e\in E$ with $|e|\geq |V|/(2k)$. Then, PTAS-Algo will execute Step~6 and branch to $X=0$ or $X>0$. We consider the following two cases:
    \begin{enumerate}
        \item $X=0$: PTAS-Algo executes the if-clause in Step~6(d) and recurses on the instance $(H-L, \bar{c}|_{E-L},\bar{b}'-\bar{c}(L))$. We observe that every remaining hyperedge $e\in E\setminus L$ has $|e|<|V|/(4k)$. This implies that the next branching step can be performed only when the number of vertices has decreased to $n'<n/2$. According to the induction hypothesis, the total number of branching steps performed is at most
        \[
            1+8k\cdot \log n'<1+8k\cdot \log (n/2)<8k\cdot \log n.
        \]

        \item $X>0$: PTAS-Algo executes the else-clause in Step~6(e), where it contracts a hyperedge $e\in L$ and recurses on the instance $(H/e,\bar{c}|_{E-e},\bar{b}')$. Since $e\in L$ and $|e|\geq n/(4k)$, the number of vertices in hypergraph $H/e$ is at most $n'=n-|e|+1\leq \frac{4k-1}{4k}\cdot n+1$. Using the induction hypothesis, the total number of branching steps performed is at most
        \[
            1+8k\cdot \log n'\leq 1+8k\cdot \log \left(\frac{4k-1}{4k}\cdot n+1\right)<8k\cdot \log n,
        \]
        where the last inequality holds since $n> 8k$.
    \end{enumerate}
\end{proof}

We now analyze the success probability.

\begin{lemma}[Success probability]\label{lem:multi-ptas-success}
Let $(H,\bar{c},\bar{b})$ be the input instance and $b'_i:=(1+\varepsilon)\cdot b_i$ for every $i\in [k]$. Let $n:=|V|$. 
PTAS-Algo on input $(H, \bar{c}, \bar{b}')$ returns either FAIL or a budget-feasible cut in $(H,\bar{c},\bar{b}')$ (i.e., it will not return a budget-infeasible cut).
Moreover, if the input instance $(H, \bar{c}, \bar{b})$ admits a budget-feasible cut, then one run of PTAS-Algo on instance $(H,\bar{c},\bar{b}')$ returns a budget-feasible cut of $(H, \bar{c}, \bar{b}')$ with probability at least
\[
        n^{-4k}\cdot \left(\frac{\varepsilon}{(k+1)\cdot(1+\varepsilon)}\right)^{8k\cdot \log n}. 
\]
\end{lemma}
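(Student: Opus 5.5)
The soundness claim (PTAS-Algo returns only FAIL or a budget-feasible cut of $(H,\bar{c},\bar{b}')$) follows exactly as in \Cref{lem:multi-success}. Every returned set is either a singleton cut checked in Step~4, a brute-forced cut from Step~1, or is built from a recursive output. For recursive outputs we apply \Cref{prop:ptas-multi-reductions}(ii) and (iii) (contraction outputs are cuts of $H$), and (iv) with $A=L$ for the deletion branch. For the probability bound we prove a stronger, inductive statement. If $(H,\bar{c},\bar{b}')$ is $(1+\varepsilon)$-strongly budget-feasible, and every execution on $n$-vertex instances performs at most $s$ branching steps, then success has probability at least $n^{-4k}\beta^{s}$, where $\beta:=\varepsilon/((k+1)(1+\varepsilon))$. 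Concretely, let $q(n,s)$ be the least success probability over strongly feasible instances with $n$ vertices whose executions have at most $s$ branching steps. We induct on $n$ plus the number of hyperedges and show $q\ge n^{-4k}\beta^{s}$. Plugging in $s=8k\log n$ from \Cref{lemma:multi-fptas-depth} gives the lemma.

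\textbf{Non-branching step (Step 5).} Fix a strongly feasible cut $F=\delta(U)$; it satisfies $c_i(F)\le b_i=b'_i/(1+\varepsilon)\le b'_i$. There is no deletion branch here, so we need a pure-contraction bound. Each $e$ has $|e|<n/(2k)$, and $e\notin F$ keeps $F$ alive by \Cref{prop:ptas-multi-reductions}(iii). Applying \Cref{coro:singleton-infeasibility-ineq} to budgets $b'_j\ge c_j(F)$ gives $\sum_{e\notin F}c_j(e)|I_j\cap e|>\sum_{e\in F}c_j(e)|I_j\setminus e|$. Since $|I_j\setminus e|\ge |I_j|-|e|\ge n/k-n/(2k)=n/(2k)$ for $e\in F$, and $|I_j\cap e|\le |e|$, this yields $\Pr[e\in F]\le \frac{2k}{n}\sum_{e\notin F}p_e|e|$. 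We then need
\[
\sum_{e\notin F}p_e\left(\frac{n}{n-|e|+1}\right)^{4k}\ge 1,
\]
that is, $\sum_{e\notin F}p_e\bigl((n/(n-|e|+1))^{4k}-1\bigr)\ge \Pr[e\in F]$. By \Cref{prop:bi-numeric} with $t=4k$, the left side is at least $\sum_{e\notin F}p_e\,2k|e|/n$, which dominates $\Pr[e\in F]$. The exponent $4k$ instead of $2k$ is exactly what absorbs the factor lost from $|I_j\setminus e|\ge n/(2k)$. The branching budget $s$ is unchanged, and since $\beta\le1$ the induction goes through.

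\textbf{Branching step (Step 6); this is the main obstacle.} Here we lose one factor $\beta$ and must show that one branch keeps strong feasibility with probability at least $\beta$. We split into two cases on $F$.

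\emph{Case $F\cap L=\emptyset$.} Take $X=1$, which happens with probability at least $1/(k+1)$. Then $c_1(L)>0$ or we are redirected to $X=0$. Any sampled $e\in L$ is outside $F$, so contraction preserves $F$. If instead $c_1(L)=0$ and we are redirected to $X=0$, deletion also works, as in the next case.

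\emph{Case $F\cap L\neq\emptyset$.} Consider deletion. Strong feasibility of the residual instance needs $c_i(F\setminus L)\le (b'_i-c_i(L))/(1+\varepsilon)$. This holds whenever $c_i(L\setminus F)\le \varepsilon b_i$ for all $i$, using $c_i(F)\le b_i$. So we sub-case on $i$.
\begin{itemize}
\item If some $i$ has $c_i(L\setminus F)>\varepsilon b_i\ge \varepsilon c_i(L\cap F)$, choose $X=i$ (probability $1/(k+1)$). Then $\Pr[e\notin F]\ge \varepsilon/(1+\varepsilon)$, and contraction preserves $F$.
\item Otherwise, choose $X=0$ (probability $1/(k+1)$) and the deletion instance is strongly feasible. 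Step 6(d)(i) does not fail, because $c_i(L)\le c_i(F)+\varepsilon b_i\le b'_i$.
\end{itemize}

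In every case the success probability is at least $\beta\cdot q(n',s-1)$ with $n'\le n$. This gives $q\ge\beta\, n^{-4k}\beta^{s-1}$, closing the induction. The delicate part is choosing the right "strong" invariant so the deletion branch's additive loss is paid from the $\varepsilon$ slack. The slack must not shrink across levels: the residual budgets are $b'_i-c_i(L)$, and we should check that the normalization $1/(1+\varepsilon)$ relative to the current budgets still holds. If it does not, we would instead carry the invariant ``$F$ satisfies $c_i(F)+\varepsilon b_i^{\mathrm{orig}}$-type slack'' and need a more careful accounting, with the branching-depth bound controlling the total loss.
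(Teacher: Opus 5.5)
There is a genuine gap, in the branching step, at exactly the point your last paragraph flags but leaves open. Your soundness argument and your Step~5 analysis are fine and essentially match the paper's.

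You split on the threshold $\varepsilon b_i$. Your claim that $c_i(L\setminus F)\le\varepsilon b_i$ for all $i$ makes $F\setminus L$ a $(1+\varepsilon)$-strongly budget-feasible cut of $(H-L,\bar{c}|_{E-L},\bar{b}'-\bar{c}(L))$ is false. Write $b'_i=(1+\varepsilon)b_i$. Then the requirement $c_i(F\setminus L)\le (b'_i-c_i(L))/(1+\varepsilon)$ rearranges to $(1+\varepsilon)c_i(F)-\varepsilon c_i(F\cap L)+c_i(L\setminus F)\le (1+\varepsilon)b_i$. So when $c_i(F)=b_i$ you need $c_i(L\setminus F)\le\varepsilon\,c_i(F\cap L)$, not merely $\le\varepsilon b_i$. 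For a concrete failure, take $b_i=c_i(F)=1$, $c_i(F\cap L)=1/2$, $c_i(L\setminus F)=\varepsilon$. This satisfies your condition, yet the residual budget is $1/2$ and $c_i(F\setminus L)=1/2>\frac{1}{2(1+\varepsilon)}$. The child instance is then only budget-feasible, so the induction hypothesis (which needs strong feasibility) does not apply, and the next branching step has no slack.

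Your suggested fallback, charging additive slack against the original budgets, does not give the stated bound either. Each deletion can consume up to $\varepsilon b_i$ of slack and there can be up to $8k\log n$ branching steps. So the threshold would have to shrink to about $\varepsilon b_i/(8k\log n)$, and the per-branch factor would degrade to roughly $\varepsilon/((k+1)\,8k\log n)$, which is weaker than claimed.

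The missing idea is to measure the waste relative to $L$ rather than to the budget. Split on whether $c_i(L\setminus F)\le\varepsilon\,c_i(L\cap F)$ (equivalently $c_i(L\setminus F)\le\frac{\varepsilon}{1+\varepsilon}c_i(L)$) holds for every $i\in[k]$.

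\emph{If it holds for every $i$:} then $c_i(F\cap L)\ge c_i(L)/(1+\varepsilon)$, so $c_i(F\setminus L)\le \frac{b'_i}{1+\varepsilon}-\frac{c_i(L)}{1+\varepsilon}=\frac{b'_i-c_i(L)}{1+\varepsilon}$. The multiplicative invariant is preserved exactly with respect to the current residual budgets, and no slack is ever consumed. Step~6(d)(i) also passes, since $c_i(L)\le(1+\varepsilon)c_i(F)\le b'_i$.

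\emph{If it fails for some $i$:} then $c_i(L)>0$, so no redirection occurs, and under $X=i$ you get $\Pr[e\notin F]>\varepsilon/(1+\varepsilon)$. This is precisely the inequality your own contraction sub-case already relied on, so the $\varepsilon b_i$ threshold was never needed.

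This split also absorbs your separate case $F\cap L=\emptyset$, which is incorrect as written. If $c_1(L)=0$ but $c_i(L)>0$ for some $i\ne 1$, the redirected deletion pays $c_i(L)$, which need not fit in the slack. You should instead choose $X=i$ with $c_i(L)>0$, or note that deletion is free when $c_i(L)=0$ for all $i$.
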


\begin{proof}
Suppose that the algorithm does not return FAIL. 
Every returned set is obtained from a budget-feasible cut returned by a recursive call, or is a singleton cut that is verified explicitly in Step~3.  Hence, by Proposition \ref{prop:ptas-multi-reductions}, every set returned by the algorithm is a budget-feasible cut in $(H, \bar{c},\bar{b}')$.  We next analyze the success probability.

Let $q_{n,m}^{\ell}$ be the least probability of returning a budget-feasible cut of PTAS-Algo over all input instances $(H, \bar{c}, \bar{b}')$ with $n$ vertices and $m$ hyperedges that have a $(1+\varepsilon)$-strongly budget-feasible cut on which PTAS-Algo performs at most $\ell$ branching steps.  We prove that 
\begin{align}\label{inequality:multi-ptas-success-probability}
    q_{n,m}^{\ell}\geq n^{-4k}\cdot \left(\frac{\varepsilon}{(k+1)\cdot (1+\varepsilon)}\right)^{\ell}
\end{align}
by induction on $n+m+\ell$. We fix an input instance $(H,\bar{c},\bar{b}')$ with $n$ vertices and $m$ hyperedges that has a $(1+\varepsilon)$-strongly budget-feasible cut such that the algorithm performs at most $\ell$ branching steps.

For the base case, we consider $n+m+\ell\leq 8k$. In particular, $n\le 8k$ and thus, PTAS-Algo finds all cuts of $H$ by brute-force and returns a budget-feasible cut. Consequently, the algorithm indeed returns a budget-feasible cut in $(H, \bar{c}, \bar{b}')$ with probability $1$.

We now prove the induction step for $n+m+\ell>8k$. If $n\leq 8k$, PTAS-Algo would find all cuts of $H$ by brute-force and return a budget-feasible cut in $(H, \bar{c}, \bar{b}')$ with probability $1$. We now consider $n>8k$.
If $H$ has a loop or a spanning hyperedge, then the claim follows using the induction hypothesis for $n+(m-1)+\ell$ and Proposition \ref{prop:ptas-multi-reductions}. 
Hence, we may assume that the input instance is loopless and has no spanning hyperedges. Thus, the algorithm reaches Step~4. Let $F=\delta_H(U)$ be a $(1+\varepsilon)$-strongly budget-feasible cut in $(H,\bar{c},\bar{b}')$ for some non-empty proper subset $U\subseteq V$, where $c_i(F)\leq \frac{1}{1+\varepsilon}\cdot b'_i$ for every $i\in [k]$. 
We consider the following three cases: 
\begin{enumerate}
    \item $|e|<|V|/(2k)$ for every $e\in E$: If there exists a vertex $v\in V$ such that $c_i(\cut_H(\{v\}))\le b'_i$ for every $i\in[k]$, then the algorithm returns a budget-feasible cut of $(H, \bar{c}, \bar{b}')$ with probability $1$. Otherwise, the singleton infeasible sets satisfy $\bigcup_{i=1}^{k}I_i=V$.  Let $j\in[k]$ be the index chosen in Step~(b), so $|I_j|\ge n/k$. The sampling distribution in Step~(b) is well-defined: Indeed, after Step~(a) there is no budget-feasible singleton cut, so $\bigcup_{i=1}^{k}I_i=V$. The chosen set $I_j$ is nonempty.  For every $v\in I_j$, we have $c_j(\cut_H(\{v\}))>b_j'\ge 0$, and therefore some hyperedge has positive $c_j$-cost. Hence $c_j(E)>0$.

    Let $e\in E$ be the sampled hyperedge. Suppose $e\notin F$. Then, $|e|\in \{2, 3, \ldots, n-1\}$. Moreover, $(H/e,\bar{c}|_{E-e},\bar{b}')$ has a $(1+\epsilon)$-strongly budget-feasible cut: Indeed, since $F=\delta_H(U)$ and $e\not\in \delta_H(U)$, we have that either $e\cap U=\emptyset$ or $e\subseteq U$. Without loss of generality, suppose $e\cap U=\emptyset$. Then, $F=\delta_{H/e}(U)$ is a $(1+\varepsilon)$-strongly budget-feasible cut in $(H/e, \bar{c}|_{E-e}, \bar{b}')$. 
    The contraction instance has $n-|e|+1$ vertices. Thus, the contraction call succeeds with probability at least $q^{\ell}_{n-|e|+1,m-1}$.

    Therefore,
    \[
        q^{\ell}_{n,m}
        \ge
        \sum_{e\in E\setminus F}\left(\frac{c_j(e)}{c_j(E)}\right)q^{\ell}_{n-|e|+1,m-1}.
    \]
    By the induction hypothesis, $q^{\ell}_{n-|e|+1,m-1}\ge (n-|e|+1)^{-4k}\cdot (\frac{\varepsilon}{(k+1)\cdot (1+\varepsilon)})^{\ell}$.  Let $p_e:=c_j(e)/c_j(E)$.  Substituting gives
    \begin{align*}
        q^{\ell}_{n,m}
        &\ge
        \sum_{e\in E\setminus F}p_e(n-|e|+1)^{-4k}\cdot \left(\frac{\varepsilon}{(k+1)\cdot (1+\varepsilon)}\right)^{\ell}\\
        &=
        n^{-4k}\cdot \left(\frac{\varepsilon}{(k+1)\cdot(1+\varepsilon)}\right)^{\ell}\cdot \left(
        \sum_{e\in E\setminus F}p_e\left(\frac{n}{n-|e|+1}\right)^{4k}
        \right)\\
        &=
        n^{-4k}\cdot \left(\frac{\varepsilon}{(k+1)\cdot(1+\varepsilon)}\right)^{\ell}\cdot\left(
        1+
        \sum_{e\in E\setminus F}p_e\left(\left(\frac{n}{n-|e|+1}\right)^{4k}-1\right)
        -
        \sum_{e\in F}p_e
        \right).
    \end{align*}
    By \Cref{prop:bi-numeric} (applied for $n$, $a=|e|$ and $t=4k$) and the fact that $|I_j|\ge n/k$, we have that 
    \[
        \left(\frac{n}{n-|e|+1}\right)^{4k}-1
        \ge \frac{2k|e|}{n}
        \ge \frac{2|e|}{|I_j|}
        \ge \frac{2|e\cap I_j|}{|I_j|}
    \]
    for every hyperedge $e$.  Thus, 
    \[
    q^{\ell}_{n,m}\ge n^{-4k}\cdot \left(\frac{\varepsilon}{(k+1)\cdot(1+\varepsilon)}\right)^{\ell}\cdot \left(1+2\sum_{e\in E\setminus F}p_e\left(\frac{|e\cap I_j|}{|I_j|}\right)
        -
        \sum_{e\in F}p_e\right).
    \]
    Hence, it suffices to show that
    \[
        2\sum_{e\in E\setminus F}p_e\left(\frac{|e\cap I_j|}{|I_j|}\right)
        \ge
        \sum_{e\in F}p_e.
    \]
    Substituting $p_e=c_j(e)/c_j(E)$, it suffices to show that 
    \[
    2\sum_{e\in E\setminus F}c_j(e)|e\cap I_j|\ge \sum_{e\in F}c_j(e)|I_j|,
    \]
    which is correct since by \Cref{coro:singleton-infeasibility-ineq} applied to $I_j$ (recall that $I_j\neq \emptyset$), we have
    \[
        \sum_{e\in E\setminus F}c_j(e)|I_j\cap e|
        >
        \sum_{e\in F}c_j(e)(|I_j\setminus e|)
        >
        \sum_{e\in F}c_j(e)(|I_j|-|e|)
        >
        \sum_{e\in F}c_j(e)\cdot \frac{|I_j|}{2},
    \]
    where the last inequality is because $|e|<|V|/(2k) \leq |I_j|/2$ for all $e\in E$.

    \item There exists a hyperedge $e\in E$ with $|e|\geq |V|/(2k)$ and $c_i(L\setminus F)\leq \frac{\varepsilon}{1+\varepsilon}\cdot c_i(L)$ for every $i\in [k]$: In this case, PTAS-Algo branches to $X=0$ with probability $\frac{1}{k+1}$. For every $i\in [k]$, since $c_i(L\setminus F)\leq \frac{\varepsilon}{1+\varepsilon}\cdot c_i(L)$, we have that $c_i(L\setminus F)\leq \varepsilon\cdot c_i(L\cap F)$ and $c_i(L)\leq (1+\varepsilon)\cdot c_i(L\cap F)$. We observe that $F\setminus L$ contains a cut of the hypergraph $H- L$ and
    \[
        c_i(F\setminus L) = c_i(F) - c_i(F\cap L) \leq \frac{1}{1+\varepsilon}\cdot (b'_i - c_i(L))
        \qquad\text{for every }i\in[k],
    \]
    which implies that the instance $(H - L, \bar{c}|_{E-L}, \bar{b}'-\bar{c}(L))$ is $(1+\varepsilon)$-strongly budget-feasible. Moreover, PTAS-Algo on input $(H-L, \bar{c}|_{E-L}, \bar{b}'-\bar{c}(L))$ would branch at most $\ell-1$ times. 
    By the induction hypothesis, PTAS-Algo returns a budget-feasible cut of $(H - L, \bar{c}|_{E-L}, \bar{b}'-\bar{c}(L))$ with probability at least $q_{n,m-|L|}^{\ell-1}$. Moreover, if such a cut $F_2$ is returned, by \Cref{prop:ptas-multi-reductions}, $F_2\cup L$ contains a budget-feasible cut of $(H,\bar{c},\bar{b}')$ which would be returned by the algorithm. Therefore,
    \begin{align*}
        q^{\ell}_{n,m}
        &\ge
        \frac{1}{k+1}\cdot q^{\ell-1}_{n,m-|L|} \\
        &\ge \frac{1}{k+1}\cdot n^{-4k}\cdot \left(\frac{\varepsilon}{(k+1)\cdot (1+\varepsilon)}\right)^{\ell-1} \ \ \text{(by induction hypothesis)} \\
        &\ge n^{-4k}\cdot \left(\frac{\varepsilon}{(k+1)\cdot (1+\varepsilon)}\right)^{\ell}.
    \end{align*}

    \item There exists a hyperedge $e\in E$ with $|e|\geq |V|/(2k)$ and $c_i(L\setminus F)> \frac{\varepsilon}{1+\varepsilon}\cdot c_i(L)$ for some $i\in [k]$: In this case, PTAS-Algo branches to $X=i$ with probability $\frac{1}{k+1}$. PTAS-Algo contracts a hyperedge $e\in L$, which is sampled with probability proportional to its $c_i$-cost. In particular, the probability that $e\not\in F$ equals $\frac{c_i(L\setminus F)}{c_i(L)}$. Let $e\in L$ be the sampled hyperedge. PTAS-Algo would recurse on the instance $(H/e,\bar{c}|_{E-e},\bar{b}')$. 
    Suppose $e\not\in F$. 
    Then the instance $(H/e,\bar{c}|_{E-e},\bar{b}'))$ is $(1+\varepsilon)$-strongly budget-feasible: indeed, since $F=\delta(U)$ and $e\not\in \delta(U)$, we have that either $e\cap U=\emptyset$ or $e\subseteq U$; without loss of generality, suppose $e\cap U=\emptyset$, then $F=\delta(U)$ is a $(1+\varepsilon)$-strongly budget-feasible cut in  $(H/e,\bar{c}|_{E-e},\bar{b}'))$. 
    Moreover, PTAS-Algo on input $(H/e,\bar{c}|_{E-e},\bar{b}'))$ would branch at most $\ell-1$ times. 
    By the induction hypothesis, PTAS-Algo returns a budget-feasible cut of $(H/e,\bar{c}|_{E-e},\bar{b}')$ with probability at least $q_{n-|e|+1,m-1}^{\ell-1}$. Moreover, if such a budget-feasible cut $F_2$ is returned, then $F_2$ is also a budget-feasible cut of $(H,\bar{c},\bar{b}')$. Therefore,
    \begin{align*}
        q^{\ell}_{n,m}
        &\ge
        \frac{1}{k+1}\cdot \sum_{e\in L\setminus F}\frac{c_i(e)}{c_i(L)}\cdot q^{\ell-1}_{n-|e|+1,m-1} \\
        &\ge
        \frac{1}{k+1}\cdot \sum_{e\in L\setminus F}\frac{c_i(e)}{c_i(L)}\cdot (n-|e|+1)^{-4k}\cdot \left(\frac{\varepsilon}{(k+1)\cdot (1+\varepsilon)}\right)^{\ell-1} \\
        &\quad \quad \quad \quad \quad \quad \quad \quad \ \ \ \ \ \ \ \ \ \ \ \ \ \ \ \ \ \ \ \ \ \ \ \ \ \ \ \ \ \ \ \ \ \ \ \ \ \ \ \text{(by induction hypothesis)} \\
        &\ge
        \frac{1}{k+1}\cdot \sum_{e\in L\setminus F}\frac{c_i(e)}{c_i(L)}\cdot n^{-4k}\cdot \left(\frac{\varepsilon}{(k+1)\cdot (1+\varepsilon)}\right)^{\ell-1} \\
        &=
        \frac{1}{k+1}\cdot \frac{c_i(L\setminus F)}{c_i(L)}\cdot n^{-4k}\cdot \left(\frac{\varepsilon}{(k+1)\cdot (1+\varepsilon)}\right)^{\ell-1} \\
        &\ge \frac{1}{k+1}\cdot \frac{\varepsilon}{1+\varepsilon}\cdot n^{-4k}\cdot \left(\frac{\varepsilon}{(k+1)\cdot (1+\varepsilon)}\right)^{\ell-1} \ \ \text{(since $\frac{c_i(L\setminus F)}{c_i(L)}>\frac{\varepsilon}{1+\varepsilon}$)} \\
        &\ge n^{-4k}\cdot \left(\frac{\varepsilon}{(k+1)\cdot (1+\varepsilon)}\right)^{\ell}.
    \end{align*}
\end{enumerate}

Thus, inequality (\ref{inequality:multi-ptas-success-probability}) holds for every $(1+\varepsilon)$-strongly budget-feasible instance. According to \Cref{lemma:multi-fptas-depth}, $\ell\leq 8k\cdot \log n$, which further implies that one run of PTAS-Algo on instance $(H,\bar{c},\bar{b}')$ returns a budget-feasible cut with probability at least
\[
    n^{-4k}\cdot \left(\frac{\varepsilon}{(k+1)\cdot (1+\varepsilon)}\right)^{\ell} \geq n^{-4k}\cdot \left(\frac{\varepsilon}{(k+1)\cdot (1+\varepsilon)}\right)^{8k\cdot \log n}.
\]

\end{proof}

We now bound the runtime of PTAS-Algo.

\begin{lemma}\label{lemma:multi-ptas-runtime}
    $\text{PTAS-Algo} (H,\bar{c},\bar{b}')$ can be implemented to run in $(m+2^{8k})\cdot p^{O(1)}$ time, where $p$ and $m$ are the size and number of hyperedges of the input hypergraph.
\end{lemma}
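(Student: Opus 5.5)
The plan is to observe that PTAS-Algo makes exactly one recursive call per invocation: Step~5(b), Step~6(d)(ii), or Step~6(e)(ii). So one execution is a single root-to-leaf path, with no branching tree to control. The runtime is then the number of invocations along that path times the cost of one invocation, plus the cost of the brute-force leaf.

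First I will bound the path length. Every recursive call removes at least one hyperedge. A contraction call removes the contracted hyperedge $e$, and the deletion call in Step~6(d) removes all of $L\neq\emptyset$; $L$ is nonempty because Step~6 is reached only when some hyperedge has size at least $|V|/(2k)\ge|V|/(4k)$. Steps~2--3 remove loops and spanning hyperedges but make no recursion. Hence the path has at most $m+1$ invocations.

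Next I will bound the non-recursive work of a single invocation (excluding the brute-force Step~1) by $p^{O(1)}$, using only the fact that the instance size never increases under contraction or deletion. Removing loops and spanning hyperedges, computing all singleton cut costs $c_i(\delta_H(\{v\}))$ for all $i\in[k]$, forming the sets $I_i$ and $L$, and sampling from the distributions $c_j/c_j(E)$ or $c_X/c_X(L)$ each take time $O(kp)$, which is $p^{O(1)}$ for fixed $k$. Forming $H/e$ or $H-L$ takes $O(p)$. The refinement in Step~6(d)(iii) computes a connected component of $H-(F_2\cup L)$ and its cut, which takes polynomial time by \Cref{prop:ptas-multi-reductions}(iv). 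Taking unions with $D$ is linear.

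The brute-force Step~1 is executed at most once, at the leaf, and only when $|V|\le 8k$. There are at most $2^{8k}$ vertex subsets $U$, and evaluating each cut $\delta_H(U)$ against all $k$ budgets costs $O(kp)$. So the leaf costs $2^{8k}\cdot p^{O(1)}$.

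Summing gives $(m+1)\cdot p^{O(1)}+2^{8k}\cdot p^{O(1)}=(m+2^{8k})\cdot p^{O(1)}$. I expect the only real care point to be the one-call-per-invocation observation: each step recurses once, so unlike QPT-Algo there is no tree of calls, and the logarithmic depth bound on branching steps is needed only for the success probability, not for the runtime. A secondary detail is to make sure the $k$-dependence of the per-invocation work stays inside $p^{O(1)}$ (or is absorbed as a factor $k$), so that the brute-force term is the only place where $2^{O(k)}$ appears.
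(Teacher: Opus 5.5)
Your proposal is correct and follows essentially the same route as the paper's proof. Both observe that the recursion is a single path because each invocation makes one recursive call, bound its length by $m+1$ because every call removes at least one hyperedge, and charge $p^{O(1)}$ per node plus $2^{8k}\cdot p^{O(1)}$ for the single brute-force leaf.
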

\begin{proof}
    We emphasize that the recursion tree corresponding to PTAS-Algo is simply a path. Moreover, each recursive call reduces the number of hyperedges. Hence, the total number of nodes in the recursion tree is at most $m+1$. We observe that the last node with $n\leq 8k$ takes $2^{8k}\cdot p^{O(1)}$ time on the brute-force step, which implies that $\text{PTAS-Algo} (H,\bar{c},\bar{b}')$ takes $(m+2^{8k})\cdot p^{O(1)}$ time.
\end{proof}

All steps of PTAS-Algo except the brute-force search can be implemented to run in polynomial time---the only non-trivial step is Step 6(d)(iii), which can be implemented to run in polynomial time by Proposition \ref{prop:ptas-multi-reductions}(iv). Thus, \Cref{lem:multi-ptas-success,lemma:multi-ptas-runtime} together prove \Cref{thm:k-ptas}.

\section{Future Directions}\label{sec:conclusion}
We state some of the natural open questions raised by our work. Firstly, is it possible to solve multi-objective hypergraph min-cut in polynomial time? Or is it hard to design sub-quasi-polynomial-time algorithms under standard complexity-theoretic assumptions? 
Secondly, is it possible to bound the number of pareto-optimal points by a polynomial in the number of vertices? 
The instance that we discussed in the introduction to show that the number of optimal solutions for bi-objective hypergraph min-cut can be exponential has only one pareto-optimal point. Thirdly, is it possible obtain a PTAS for hypergraph connectivity interdiction? We have given a QPTAS via connections to bi-objective hypergraph budgeted-cut and Zenklusen's reduction \cite{Zen14}. It may be possible to obtain a PTAS without relying on tractability of bi-objective hypergraph budgeted-cut.

\paragraph{Acknowledgements.} Karthekeyan thanks Chao Xu and Calvin Beideman for preliminary discussions on the problem and for constructing counterexamples to certain approaches. For AI disclosure, see  \Cref{sec:ai-disc}.

\bibliographystyle{abbrv}
\bibliography{references}

@article{NI92,
author = {Nagamochi, Hiroshi and Ibaraki, Toshihide},
title = {Computing Edge-Connectivity in Multigraphs and Capacitated Graphs},
journal = {SIAM Journal on Discrete Mathematics},
volume = {5},
number = {1},
pages = {54-66},
year = {1992},
}

@inproceedings{FGKLS25,
  title={Fixed-parameter tractability of hedge cut},
  author={Fomin, F. V. and Golovach, P. A. and Korhonen, T. and Lokshtanov, D. and Saurabh, S.},
  booktitle={Proceedings of the 2025 Annual ACM-SIAM Symposium on Discrete Algorithms},
  pages={1402--1411},
  year={2025},
  series = {SODA},
}

@article{AHM2022,
  author  = {Adjiashvili, D. and Hommelsheim, F. and M\"{u}hlenthaler, M.},
  title   = {Flexible Graph Connectivity},
  journal = {Mathematical Programming},
  volume  = {192},
  pages   = {409--441},
  year    = {2022},
}

@INPROCEEDINGS{PY00,
  author={Papadimitriou, C.H. and Yannakakis, M.},
  booktitle={Proceedings 41st Annual Symposium on Foundations of Computer Science}, 
  series = {FOCS},
  title={On the approximability of trade-offs and optimal access of Web sources}, 
  year={2000},
  pages={86-92},
}

@article{Zen14,
author = {Zenklusen, R.}, 
title = {{Connectivity Interdiction}},
journal = {Operations Research Letters},
volume = {42},
year = {2014},
pages = {450--454},
}

@article{HOY26,
author = {Huang, CC. and Obscura Acosta, N. and Yingchareonthawornchai, S.}, 
title = {{An FPTAS for Connectivity Interdiction}},
journal = {Mathematical Programming},
volume = {216},
year = {2026},
pages = {605--626},
}

@inproceedings{KK15,
  title={Sketching cuts in graphs and hypergraphs},
  author={Kogan, Dmitry and Krauthgamer, Robert},
  booktitle={Proceedings of the 2015 Conference on Innovations in Theoretical Computer Science},
  pages={367--376},
  year={2015}
}

@article{NNI97,
 author = {Nagamochi, H. and Nishimura, K. and Ibaraki, T.}, 
 title = {Computing all small cuts in an undirected network}, 
 journal = {SIAM Journal on Discrete Mathematics},
 volume = {10},
 year = {1997}, 
 number = {3}, 
 pages = {469–481},
}

@inproceedings{BCW23,
author = {Beideman, C. and Chandrasekaran, K. and  Wang, W.},
title = {Approximate minimum cuts and their enumeration},
booktitle = {Symposium on Simplicity in Algorithms (SOSA)},
pages = {36-41},
year = {2023},
}

@inproceedings{Kar16,
author = {Karger, David R.},
title = {Enumerating parametric global minimum cuts by random interleaving},
year = {2016},
booktitle = {Proceedings of the Forty-Eighth Annual ACM Symposium on Theory of Computing},
pages = {542–555},
}

@inproceedings{AMR17,
  author       = {Aissi, H. and
                  Mahjoub, A. and
                  R. Ravi},
  title        = {Randomized Contractions for Multiobjective Minimum Cuts},
  booktitle    = {25th Annual European Symposium on Algorithms, {ESA}},
  pages        = {6:1--6:13},
  year         = {2017},
}

@article{BCX23,
author = {Beideman, C. and Chandrasekaran, K. and Xu, C.},
title = {Multicriteria cuts and size-constrained k-cuts in hypergraphs},
journal ={Mathematical Programming},
volume = {197}, 
pages = {27–69},
year = {2023},
}

@article{JLMPTS26,
author = {Jaffke, Lars and T. de Lima, Paloma and Masa\v{r}\'{\i}k, Tom\'{a}\v{s} and Pilipczuk, Marcin and Souza, Uev\'{e}rton S.},
title = {A Tight Quasi-Polynomial Bound for Global Label Min-Cut},
year = {2026},
volume = {22},
number = {2},
journal = {ACM Trans. Algorithms},
articleno = {23},
}

@inproceedings{Kar93,
  title={{Global Min-cuts in RNC, and Other Ramifications of a Simple Min-Cut Algorithm}},
  author={Karger, David R},
  booktitle={Proceedings of the fourth annual ACM-SIAM symposium on Discrete algorithms},
  series = {SODA},
  pages={21--30},
  year={1993}
}

@inproceedings{GKP17,
  title={Random contractions and sampling for Hypergraph and Hedge Connectivity},
  author={Ghaffari, M. and Karger, D. and Panigrahi, D.},
  booktitle={Proceedings of the Twenty-Eighth Annual ACM-SIAM Symposium on Discrete Algorithms},
  pages={1101--1114},
  series = {SODA},
  year={2017}
}

@article{CXY21,
	author = {Chandrasekaran, Karthekeyan and Xu, Chao and Yu, Xilin},
	title = {{Hypergraph $k$-Cut in randomized polynomial time}},
	journal = {Mathematical Programming},
	volume = {186}, 
	pages = {85-113}, 
	year = {2021},
}

@article{FPZ23,
  title={Minimum cut and minimum k-cut in hypergraphs via branching contractions},
  author={Fox, Kyle and Panigrahi, Debmalya and Zhang, Fred},
  journal={ACM Transactions on Algorithms},
  volume={19},
  number={2},
  pages={1--22},
  year={2023},
}

@article{AZ06,
title = {Multicriteria global minimum cuts},
author = {Armon, A. and Zwick, U.},
journal = {Algorithmica},
volume = {46},
number = {1},
pages = {15–26},
year = {2006},
}

@article{BCW23-enumhkcut-j,
	author = {Beideman, Calvin and Chandrasekaran, Karthekeyan and Wang, Weihang},
	title = {{Deterministic enumeration of all minimum $k$-cut-sets in hypergraphs for fixed $k$}},
	year = {2024},
	journal = {Mathematical Programming},
	volume = {207},
	pages = {329--367},
	doi = {10.1007/s10107-023-02013-8},
}

@article{AMMQ15,
	title={Strongly polynomial bounds for multiobjective and parametric global minimum cuts in graphs and hypergraphs},
	author={Aissi, H. and Mahjoub, A. and McCormick, T. and Queyranne, M.},
	journal={Mathematical Programming},
	volume={154},
	number={1-2},
	pages={3--28},
	year={2015},
}

@article{CX18,
  title={Minimum cuts and sparsification in hypergraphs},
  author={Chekuri, C. and Xu, C.},
  journal={SIAM Journal on Computing},
  volume={47},
  number={6},
  pages={2118--2156},
  year={2018},
  publisher={SIAM}
}

@article{Que98,
  title={Minimizing symmetric submodular functions},
  author={Queyranne, M.},
  journal={Mathematical Programming},
  volume={82},
  pages={3--12},
  year={1998},
}

@article{CQX20,
author = {Chekuri, C. and Quanrud, K. and Xu, C.},
title = {{LP} Relaxation and Tree Packing for Minimum $k$-Cut},
journal = {SIAM Journal on Discrete Mathematics},
volume = {34},
number = {2},
pages = {1334-1353},
year = {2020},
}

@article{li2023polylogarithmic,
  title={{Polylogarithmic Approximation for Robust $s$-$t$ Path}},
  author={Li, Shi and Xu, Chenyang and Zhang, Ruilong},
  journal={arXiv:2305.16439},
  year={2023}
}

@inproceedings{chekuri2026polylogarithmic,
  title={A Polylogarithmic Approximation for Buy-at-Bulk Network Design with Protection},
  author={Chekuri, Chandra and Jain, Rhea},
  booktitle={Proceedings of the 58th Annual ACM Symposium on Theory of Computing},
  pages={2230--2241},
  year={2026}
}

@inproceedings{hershkowitz2026planar,
  title={Planar Length-Constrained Minimum Spanning Trees},
  author={Hershkowitz, D Ellis and Huang, Richard Z},
  booktitle={Proceedings of the 58th Annual ACM Symposium on Theory of Computing},
  pages={1528--1539},
  year={2026}
}

@inproceedings{hop_distance21,
author = {Haeupler, Bernhard and Hershkowitz, D. Ellis and Zuzic, Goran},
title = {Tree Embeddings for Hop-Constrained Network Design},
year = {2021},
isbn = {9781450380539},
doi = {10.1145/3406325.3451053},
booktitle = {Proceedings of the 53rd Annual ACM SIGACT Symposium on Theory of Computing},
pages = {356-369},
series = {STOC}
}

@INPROCEEDINGS {filtser22,
author = {A. Filtser},
booktitle = {2021 IEEE 62nd Annual Symposium on Foundations of Computer Science (FOCS)},
title = {Hop-Constrained Metric Embeddings and their Applications},
year = {2022},
volume = {},
issn = {},
pages = {492-503},
doi = {10.1109/FOCS52979.2021.00056},
url = {https://doi.ieeecomputersociety.org/10.1109/FOCS52979.2021.00056},
publisher = {IEEE Computer Society},
address = {Los Alamitos, CA, USA},
month = feb
}

\appendix

\section{Single-objective Algorithm}\label{sec:uni}

We focus on the single-objective budgeted-cut problem as a warm-up towards the $k$-objective budgeted-cut problem.
We state the central problem of interest to this section now. Let $H=(V, E)$ be a hypergraph with non-negative hyperedge-costs $c: E\rightarrow \R_{\ge 0}$ and budget $b\in \R_{\ge 0}$. A subset $F\subseteq E$ is a budget-feasible cut for $(H, c, b)$ if (1) $F$ is a cut, i.e., $F=\delta_H(U)$ for some non-empty proper subset $U$ of vertices and (2) $c(F)\le b$. 

\begin{center}
\fbox{%
\begin{minipage}{0.92\textwidth}
\textbf{Single-objective Hypergraph Budgeted-Cut.}

\textbf{Given:} Hypergraph \(H=(V,E)\), costs \(c:E\to\R_{\ge0}\), and a budget \(b\in \R_{\ge 0}\).

\textbf{Goal:} Return a budget feasible cut for $(H, c,b)$ if one exists. 
\end{minipage}}
\end{center}
The following is the main result of this section. 
\begin{restatable}{theorem}{UniTheorem}\label{thm:uni}
There is a randomized algorithm that takes as input 
an instance $(H, c,b)$ of Single-Objective Hypergraph Budgeted-Cut and has the following properties:
(i) if the instance has a budget-feasible cut, then the algorithm returns a budget-feasible cut with probability at least $\frac{1}{\binom{n}{2}}$ and otherwise, returns FAIL and
(ii) the expected run-time is $n^2p^{O(1)}$, where $p$ and $n$ are the size and number of vertices of the input hypergraph. 
\end{restatable}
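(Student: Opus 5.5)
The plan is to analyze the single-objective specialization of QPT-Algo described in \Cref{sec:techniques}. Remove loops and spanning hyperedges (paying for the latter), and return a budget-feasible singleton cut if one exists. Otherwise, sample $e$ with probability $c(e)/c(E)$ and recurse on $(H/e,c|_{E-e},b)$. If that fails, then with probability $|e|/|V|$ recurse on $(H\setminus e,c|_{E-e},b-c(e))$, and refine the output to a connected component exactly as in Step~7(b). Correctness of returned sets follows verbatim from \Cref{prop:multi-reductions} with $k=1$. It remains to prove (a) success probability at least $\binom{n}{2}^{-1}$ and (b) expected recursion tree size $O(n^2)$ times the number of hyperedges along a path, which gives runtime $n^2p^{O(1)}$.

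For (a), I will induct on $n+m$ with the hypothesis $q_n^m\ge 2/(n(n-1))$, mirroring \Cref{lem:multi-success}. In the nontrivial case every singleton cut is infeasible, so $I=V$ and \Cref{lem:uni-singleton-infeasibility-inequality} gives
\[
\sum_{e\in E\setminus F}c(e)|e|>\sum_{e\in F}c(e)(n-|e|).
\]
The recurrence is
\[
q_n^m\ge\sum_{e\notin F}p_e\, q^m_{n-|e|+1}+\sum_{e\in F}p_e\frac{|e|}{n}q^{m-1}_n .
\]
Factor out $\binom n2^{-1}$. It then suffices to show, for every $a=|e|\in\{2,\dots,n-1\}$,
\[
\frac{n(n-1)}{(n-a+1)(n-a)}-1\ge \frac{a}{n}.
\]
After that, the inequality above closes the argument exactly as in the multiobjective proof. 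This numeric inequality is easy: the left side is at least $\frac{n}{n-a+1}-1=\frac{a-1}{n-a+1}$, and even more via the second factor. The cleanest route is to write the left side as $\frac{(a-1)(2n-a)}{(n-a+1)(n-a)}$ and compare it to $a/n$ directly. The base case $n=2$ is trivial.

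For (b), let $N(n)$ bound the expected number of leaves. The recurrence is $N(n)\le \mathbb{E}_e[N(n-|e|+1)+\frac{|e|}{n}N(n)]$, where the deletion instance keeps $n$ vertices but has one fewer hyperedge. So I induct on $n+m$ with the claim $N\le\binom n2$. I then need
\[
\binom{n-a+1}{2}+\frac an\binom n2\le\binom n2
\quad\text{for } 2\le a\le n-1,
\]
i.e. $(n-a+1)(n-a)\le (n-a)(n-1)$, which holds since $a\ge2$. Multiplying by path length at most $m+1$ and polynomial work per node gives the runtime. Alternatively, I would dominate the tree by that of \cite{FPZ23}, as the text suggests.

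The main obstacle is the numeric step in (a): the choice of potential $\binom n2^{-1}$ must make $\binom{n}{2}/\binom{n-a+1}{2}-1\ge a/n$ hold uniformly in $a$, including $a=2$. I would first verify the boundary case $a=2$, where the left side is $\frac{n}{n-2}-1=\frac{2}{n-2}\ge\frac{2}{n}$, and then argue monotonicity in $a$.
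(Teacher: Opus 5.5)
Your proposal is correct and follows the paper's proof: part (a) is the same induction on $n+m$ with potential $\binom{n}{2}^{-1}$, closed by \Cref{prop:uni-numeric} and \Cref{lem:uni-singleton-infeasibility-ineq}. The only deviation is in (b): the paper bounds the expected number of \emph{nodes} directly by $n(3n-5)/2$, using the same recurrence with a $+1$ term, which gives an $m$-independent $O(n^2)$ bound; you instead bound the leaves by $\binom{n}{2}$ and multiply by the depth $m+1$, as in \Cref{lem:multi-runtime}, and since $m+1\le p$ the extra factor is absorbed into $p^{O(1)}$ and you still get the stated $n^2p^{O(1)}$.
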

Repeating the algorithm \(O(n^2\log(1/\eta))\) times finds a budget-feasible cut with probability at least \(1-\eta\), whenever one exists, in polynomial expected time.

\begin{algorithm}[H]
\caption{Single-objective-Algo\((H=(V,E),c:E\rightarrow \R_{\ge 0},b\in \R_{\ge 0})\)}
\begin{enumerate}[leftmargin=2em]
\item Remove all loops from \(H\).
\item Let $D:=\{e\in E:e=V\}$. Set
\begin{align*}
        H&\leftarrow H-D \text{ and}\\
        b&\leftarrow b-c(D).
\end{align*}
If \(b<0\), return \(\FAIL\).
\item If there exists \(v\in V\) such that \(c(\cut_H(\{v\}))\le b\), return \(\cut_H(\{v\})\cup D\).
\item Sample a hyperedge \(e\) with probability
\[
        \Pr[e]=\frac{c(e)}{c(E)}.
\]
\item \(F_1\leftarrow \text{Single-objective-Algo}(H/e,c|_{E-e},b)\).
\item If \(F_1\) is not \(\FAIL\), return \(F_1\cup D\).
\item With probability \(|e|/|V|\):
\begin{enumerate}[leftmargin=2em]
    \item \(F_2'\leftarrow \text{Single-objective-Algo}(H\setminus e,c|_{E-e},b-c(e))\).
    \item If \(F_2'\) is not \(\FAIL\), find a non-empty proper subset $U\subseteq V$ with $\delta_H(U)\subseteq F_2'\cup \{e\}$ and return $\delta_H(U)\cup D$.
\end{enumerate}
\item Return \(\FAIL\).
\end{enumerate}
\end{algorithm}

See Algorithm 3 for a pseudocode of our algorithm to prove Theorem \ref{thm:uni}. 
The sampling distribution in Step~4 is well-defined whenever that step is reached:  After Step~3, at least one remaining hyperedge has positive cost. Indeed, if no hyperedges have positive cost, then every singleton cut would have cost zero and Step~3 would have returned a budget-feasible cut, since Step~2 already checked that the residual budget is nonnegative.

The algorithm is intentionally sequential.  It always executes the contraction branch first.  The deletion branch is sampled only if the contraction branch fails. This convention helps in the success probability while limiting the number of recursive calls: if the contraction call succeeds, the algorithm is already done and if the contraction call fails, then the deletion branch is executed in a probabilistic manner.

The algorithm deals with the recursive output from the contraction call and the deletion call slightly differently (Steps 6 and 7(b)). The output of the contraction call in Step 6 is returned directly while the output of the deletion call in Step 7(b) is refined before returning (both after  taking the union with $D$). The reason for refining in Step 7(b) is to ensure that the returned set is indeed a cut and not just the superset of a cut.

\subsection{Correctness}

We now analyze Algorithm~3. 
We first record the deterministic and recursive preservation facts that will help in proving correctness. See proof of \Cref{prop:multi-reductions} for a proof of the following facts. 

\begin{proposition}\label{prop:uni-reductions}
Let \((H,c,b)\) be the current instance.
\begin{enumerate}[label=(\roman*),leftmargin=2em]
\item Deleting loops does not affect budget-feasibility.
\item Let \(D\) be the set of spanning hyperedges. 
If \((H,c,b)\) has a budget-feasible cut, then $b-c(D)\ge 0$ and the instance \((H-D,c|_{E-D},b-c(D))\) has a budget-feasible cut.  Conversely, if $b-c(D)\ge 0$ and \(F'\) is a budget-feasible cut in \((H-D,c|_{E-D},b-c(D))\), then \(F'\cup D\) is a budget-feasible cut in \((H,c,b)\).
\item For $e\in E$, if \(F_1\) is a budget-feasible cut in $(H/e, c|_{E-e}, b)$, then \(F_1\) is a budget-feasible cut in $(H, c, b)$.
\item For $e\in E$, given a budget-feasible cut \(F_2'\) in $(H\setminus e,c|_{E-e},b-c(e))$, there exists a non-empty proper subset $U\subseteq V$ with  \(\delta_H(U)\subseteq F_2'\cup\{e\}\) and $\delta_H(U)$ is a budget-feasible cut in $(H,c,b)$ and such a subset $U$ can be found in time $n^{O(1)}$.
\end{enumerate}
\end{proposition}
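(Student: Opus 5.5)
The plan is to mirror the proof of \Cref{prop:multi-reductions}, specialized to $k=1$. Parts (i)--(iii) need no new ideas. For (i), loops do not cross any cut. For (ii), every spanning hyperedge lies in every cut $\delta_H(U)$ with $\emptyset\neq U\subsetneq V$. So if $F$ is budget-feasible, then $c(D)\le c(F)\le b$, and $F\setminus D=\delta_{H-D}(U)$ is a cut of $H-D$ of cost $c(F)-c(D)\le b-c(D)$. Conversely, if $F'=\delta_{H-D}(U)$, then $F'\cup D=\delta_H(U)$ and $c(F'\cup D)=c(F')+c(D)\le b$. For (iii), write $F_1=\delta_{H/e}(U)$ and, by complementing if necessary, assume the contracted vertex is not in $U$. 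Then $U\subseteq V$ avoids $e$, the sets $\delta_{H/e}(U)$ and $\delta_H(U)$ coincide as sets of hyperedge identities, and $e\notin\delta_H(U)$. Hence $F_1$ is a cut of $H$ with the same cost, and it is at most $b$.

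For (iv), write $F_2'=\delta_{H\setminus e}(W)$ for some nonempty proper $W\subseteq V$. Then $H-(F_2'\cup\{e\})$ has no hyperedge crossing between $W$ and $V\setminus W$, so it is disconnected. Take any connected component $U$; it is a nonempty proper subset of $V$. Every hyperedge of $\delta_H(U)$ must be one of the removed hyperedges, since any remaining hyperedge meeting $U$ lies inside the component $U$. Thus $\delta_H(U)\subseteq F_2'\cup\{e\}$, and by nonnegativity of costs
\[
c(\delta_H(U))\le c(F_2')+c(e)\le (b-c(e))+c(e)=b.
\]
A component can be found by BFS/DFS in the hypergraph (or its bipartite incidence graph) in time polynomial in the size, giving the claimed $n^{O(1)}$ (more precisely $p^{O(1)}$) bound.

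The only mildly delicate point is (iv). There one must make sure the returned set is an actual cut and not merely a superset of one, which is why we pass to a component rather than returning $F_2'\cup\{e\}$. One must also note that disconnectedness uses $W$ being a nonempty proper subset. Everything else is bookkeeping about hyperedge identities under contraction, handled by the indexed-multiset convention from the Preliminaries.
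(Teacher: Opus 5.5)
Your proposal is correct and follows essentially the same route as the paper, which proves these facts by specializing the proof of \Cref{prop:multi-reductions} to a single cost function. Two of your additions make explicit what the paper leaves implicit: in (iv) you check that $H-(F_2'\cup\{e\})$ is disconnected, so the chosen component is a non-empty proper subset, and you note that the component search really runs in time polynomial in the size $p$.
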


The lemma below follows from Lemma \ref{lem:uni-singleton-infeasibility-inequality} (apply it to $(H, c, b)$). 

\begin{lemma}[Singleton Infeasible Set Inequality for Single-objective]\label{lem:uni-singleton-infeasibility-ineq}
Consider a loopless input instance $(H=(V, E), c: E\rightarrow \R_{\ge 0}, b\in \R_{\ge 0})$ that has a budget-feasible cut $F\subseteq E$. Suppose $c(\delta_H(v))>b$ for every $v\in V$. 
Then, 
\[
        \sum_{e\in E\setminus F} c(e)|e|
        >
        \sum_{e\in F} c(e)|V\setminus e|. 
\]
\end{lemma}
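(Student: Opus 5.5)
This is a direct specialization of \Cref{lem:uni-singleton-infeasibility-inequality}, and I would prove it by applying that lemma to the instance $(H,c,b)$ with budget $\lambda:=b$. The hypotheses of that lemma are that $H$ is loopless, that $c\ge 0$, and that $F$ is a budget-feasible cut in $(H,c,b)$. All three are assumed here.

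The key step is identifying the set $I$. In \Cref{lem:uni-singleton-infeasibility-inequality} it is defined as $I=\{v\in V: c(\cut_H(\{v\}))>b\}$. By the present assumption $c(\cut_H(\{v\}))>b$ for every $v\in V$, so $I=V$. Since $F=\delta_H(U)$ for some non-empty proper $U$, the vertex set is non-empty, hence $I\neq\emptyset$ and the lemma applies. It gives
\[
\sum_{e\in E\setminus F}c(e)|I\cap e| > \sum_{e\in F}c(e)|I\setminus e|.
\]
Substituting $I=V$, we have $|V\cap e|=|e|$ because every hyperedge is a subset of $V$, and $|I\setminus e|=|V\setminus e|$. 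This is exactly the claimed inequality.

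There is no real obstacle. The only points needing care are checking that $I=V$ is non-empty and that $e\subseteq V$, so that $|V\cap e|=|e|$. Alternatively, one could repeat the summation argument directly: sum $c(\cut_H(\{v\}))>c(F)$ over all $v\in V$, use looplessness to write $\sum_v c(\cut_H(\{v\}))=\sum_e c(e)|e|$, and rearrange the terms for $e\in F$. I would present the one-line reduction, since that lemma is already proved.
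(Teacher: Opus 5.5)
Your proposal is correct and matches the paper's proof, which simply applies \Cref{lem:uni-singleton-infeasibility-inequality} to $(H,c,b)$. Your observation that the hypothesis forces $I=V\neq\emptyset$, so that $|I\cap e|=|e|$ and $|I\setminus e|=|V\setminus e|$, is exactly what that one-line reduction leaves implicit.
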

We need the following binomial inequality. 
\begin{proposition}\label{prop:uni-numeric}
For integers \(n\ge3\) and \(2\le a\le n-1\),
\[
        \frac{\binom n2}{\binom{n-a+1}{2}}-1
        \ge
        \frac{a}{n}.
\]
\end{proposition}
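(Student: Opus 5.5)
We need to show $\binom n2/\binom{n-a+1}{2}-1\ge a/n$ for integers $n\ge 3$ and $2\le a\le n-1$. The plan is to set $r:=n-a+1$, so that $2\le r\le n-1$. Then expand the ratio as a product of two factors and lower bound it by an elementary inequality, in the same spirit as the proof of \Cref{prop:bi-numeric}.

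First, write
\[
\frac{\binom n2}{\binom{r}{2}}=\frac{n(n-1)}{r(r-1)}=\frac{n}{r}\cdot\frac{n-1}{r-1}.
\]
Each factor is at least $1$, and $\frac{n-1}{r-1}\ge \frac{n}{r}$ since $r\le n$. Hence the ratio is at least $(n/r)^2$.

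Next, use $x:=(a-1)/n\in(0,1)$, so that $n/r=1/(1-x)$. As in \Cref{prop:bi-numeric} with $t=2$, we get
\[
(1-x)^{-2}-1\ge 2x=\frac{2(a-1)}{n}.
\]
Finally, $2(a-1)\ge a$ exactly when $a\ge 2$, which holds by hypothesis. Chaining these gives
\[
\frac{\binom n2}{\binom{n-a+1}{2}}-1\ge \frac{2(a-1)}{n}\ge \frac{a}{n}.
\]

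The only step needing care is the comparison $\frac{n-1}{r-1}\ge\frac nr$. Cross-multiplying, it is equivalent to $r(n-1)\ge n(r-1)$, i.e.\ $n\ge r$, which is immediate. Alternatively, we could invoke \Cref{prop:bi-numeric} with $t=2$ directly: that proposition gives $(n/r)^2-1\ge 2a/(2n)=a/n$, which is exactly the desired bound once the ratio is shown to be at least $(n/r)^2$. This is the route I would write up, since it reuses the earlier result verbatim.
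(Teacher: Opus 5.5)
Your proof is correct, but it takes a different route from the paper. The paper clears denominators directly. It rewrites the left-hand side as $\frac{(a-1)(2n-a)}{(n-a+1)(n-a)}$ and reduces the claim to the polynomial inequality $n(a-1)(2n-a)\ge a(n-a+1)(n-a)$, which it checks by expanding the difference.

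You instead factor the ratio as $\frac{n}{r}\cdot\frac{n-1}{r-1}$. You then use $\frac{n-1}{r-1}\ge\frac{n}{r}$ (valid since $2\le r\le n$) to bound it below by $(n/r)^2$, and finish with \Cref{prop:bi-numeric} at $t=2$.

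Your version is more conceptual. It shows that this single-objective inequality is just the $t=2$ case of the multiobjective numeric bound, since replacing $\binom{n}{2}/\binom{r}{2}$ by $(n/r)^2$ only loses in the harmless direction. It also gives the slightly stronger bound $2(a-1)/n$.

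The paper's version is self-contained and gives the exact slack, but it is more error-prone. In fact the difference it displays, $(a-2)n^2+a(a-1)n+a^2(n-a)$, is miscomputed. The correct expansion is $(a-2)n^2+a^2(n-a+1)$; for example, $n=3$, $a=2$ gives $12-4=8$, not $10$. Since the correct expression is still nonnegative, the paper's conclusion stands. Your route avoids that algebra entirely.
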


\begin{proof}
The left-hand side is
\[
        \frac{n(n-1)}{(n-a+1)(n-a)}-1
        =
        \frac{(a-1)(2n-a)}{(n-a+1)(n-a)}.
\]
Thus it suffices to show
\[
        n(a-1)(2n-a)
        \ge
        a(n-a+1)(n-a).
\]
The difference between the two sides equals
\[
        (a-2)n^2+a(a-1)n+a^2(n-a),
\]
which is nonnegative for \(2\le a\le n-1\).
\end{proof}

We now analyze the success probability. 

\begin{lemma}[Single-objective success probability]\label{lem:uni-success}
Every set returned by the algorithm is budget-feasible.
Moreover, if the input instance \((H,c,b)\) admits a budget-feasible cut and \(|V|=n\), then one run of Algorithm 3 returns a budget-feasible cut with probability at least
\[
        \binom n2^{-1}.
\]
\end{lemma}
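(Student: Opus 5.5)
The argument mirrors the proof of \Cref{lem:multi-success}, specialized to $k=1$ with the sharper target $\binom{n}{2}^{-1}$ in place of $n^{-2}$. The first claim, that every returned set is budget-feasible, follows from \Cref{prop:uni-reductions}. A returned set is either a singleton cut verified in Step~3, possibly together with $D$, or it is built from a budget-feasible cut returned by a recursive call. The contraction output is used as is in Step~6, and the deletion output is refined to $\delta_H(U)\subseteq F_2'\cup\{e\}$ in Step~7(b).

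For the probability bound, let $q_n^m$ be the least success probability over instances with $n$ vertices and at most $m$ hyperedges that admit a budget-feasible cut. I will prove $q_n^m\ge \binom{n}{2}^{-1}$ by induction on $n+m$. The base case $n=2$ is immediate, since a singleton cut is the only cut and Step~3 returns it. If the instance has loops or spanning hyperedges, \Cref{prop:uni-reductions}(i),(ii) reduce it to a smaller instance, and induction applies. If some singleton cut is feasible, the algorithm succeeds with probability $1$. So assume the instance is loopless, has no spanning hyperedges, and every singleton cut is infeasible. Then $n\ge 3$ and every sampled hyperedge has size $a\in\{2,\dots,n-1\}$.

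Fix a budget-feasible cut $F=\delta(U)$ and write $p_e=c(e)/c(E)$. If $e\notin F$, then $e$ lies on one side of $U$, so $F$ survives in $H/e$, which has $n-|e|+1$ vertices; this branch succeeds with probability at least $q^m_{n-|e|+1}$. If $e\in F$, the deletion instance has the budget-feasible cut $F\setminus\{e\}$. The algorithm tries contraction first and, if that fails, enters deletion with probability $|e|/n$, so this branch succeeds with probability at least $(|e|/n)\,q^{m-1}_n$. Hence
\[
q_n^m\ge \sum_{e\notin F}p_e\binom{n-|e|+1}{2}^{-1}+\sum_{e\in F}p_e\frac{|e|}{n}\binom{n}{2}^{-1}.
\]
Factoring out $\binom n2^{-1}$ and writing $1=\sum_e p_e$, it suffices to show
\[
\sum_{e\notin F}p_e\left(\frac{\binom n2}{\binom{n-|e|+1}2}-1\right)\ge \sum_{e\in F}p_e\left(1-\frac{|e|}{n}\right).
\]
By \Cref{prop:uni-numeric}, the left-hand side is at least $\sum_{e\notin F}p_e|e|/n$. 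Multiplying by $n\,c(E)$, the required inequality becomes $\sum_{e\notin F}c(e)|e|\ge\sum_{e\in F}c(e)|V\setminus e|$, which is exactly \Cref{lem:uni-singleton-infeasibility-ineq}, applicable since every singleton cut is infeasible.

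The main point needing care is the branch structure in the case $e\in F$. One must note that the success event there is at least the probability of entering the deletion branch times its success probability; the earlier contraction attempt can only help. One should also confirm that induction applies to $H/e$ with $m$ possibly unchanged, which is fine because the vertex count strictly drops when $|e|\ge2$. Everything else is the algebra already packaged in \Cref{prop:uni-numeric}.
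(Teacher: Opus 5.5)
Your proposal is correct and follows essentially the same argument as the paper. It uses the same recurrence for $q^m_n$ (in the case $e\in F$ the earlier contraction attempt can only help), \Cref{prop:uni-numeric} to bound the binomial ratio, and \Cref{lem:uni-singleton-infeasibility-ineq} to close the inequality; the one detail you leave implicit is $c(E)>0$, needed for $p_e$ to be defined, which holds because every singleton cut has cost exceeding $b\ge 0$.
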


\begin{proof}
Every returned set is obtained from a budget-feasible cut returned by a recursive call, or is a singleton cut that is verified explicitly in Step~3. Hence, by \Cref{prop:uni-reductions}, every set returned by the algorithm is a budget-feasible cut. We next analyze the success probability. 

Let $q^m_n$ be the least probability of returning a budget-feasible cut of the algorithm over all input instances with $n$ vertices and $m$ hyperedges that have a budget-feasible cut. We will show that $q^m_n\ge \binom{n}{2}^{-1}$ by induction on $n+m$. The base case is $n+m=2$ and in particular $n=2$ and $m=0$. 
In this case, since the instance has a budget-feasible cut, it should be the case that the budget is non-negative and hence, the singleton cut is budget-feasible; the algorithm indeed returns a budget-feasible singleton cut with probability $1$. 
The rest of the proof shows the induction step. 

Fix a $n$-vertex $m$-hyperedge input instance $(H, c, b)$ that has a budget-feasible cut. If $H$ has a loop or a spanning hyperedge,  
then the claim follows using the induction hypothesis for $n+m-1$ and by \Cref{prop:uni-reductions}. Hence, we may assume that the input instance is loopless and has no spanning hyperedges. 
If it has a vertex $v\in V$ such that $c(\delta(v))\le b$, then the algorithm returns a budget-feasible cut with probability $1$. Hence, we may assume that the input instance is loopless, has no spanning 
hyperedges, and has no budget-feasible singleton cut. Thus, the algorithm reaches Step~4. 

At least one hyperedge in $H$ has positive cost (since the algorithm reaches Step~4, and hence, the residual budget is non-negative and none of the singleton cut is budget-feasible). Consequently, the sampling distribution in Step~4 is well-defined. Moreover, the instance is loopless and 
has no spanning hyperedges.  Therefore every sampled hyperedge has size \(a\in\{2,\ldots,n-1\}\).  In particular, the random step can occur only when $n\ge 3$. 

Let $F$ be a budget-feasible cut in $(H, c, b)$. Let $e\in E $ be the sampled hyperedge. We consider two cases.
\begin{enumerate}
\item Suppose $e\not\in F$. Then, $(H/e, c|_{E-e}, b)$ has a budget-feasible cut by Proposition \ref{prop:uni-reductions} and moreover, has $n-|e|+1$ vertices and at most $m$ hyperedges. Thus, the algorithm executed on the contraction instance $(H/e, c|_{E-e}, b)$ succeeds in returning a budget-feasible cut in $(H/e,c|_{E-e},b)$ with probability at least $q^m_{n-|e|+1}$ and by Proposition \ref{prop:uni-reductions}, the algorithm returns a budget-feasible cut in $(H, c, b)$ with probability at least $q^m_{n-|e|+1}$. 
\item Suppose $e\in F$. Then, $(H\setminus e, c|_{E-e}, b-c(e))$ has a budget-feasible cut by Proposition \ref{prop:uni-reductions} and moreover, has $n$ vertices and $m-1$ hyperedges. 
The algorithm first tries the contraction call.  If it succeeds, then the algorithm is already successful.  If it fails, then the deletion branch is executed with probability $|e|/n$, and, conditioned on executing the deletion branch, succeeds with probability at least $q^{m-1}_n$.  Consequently, the conditional success probability in this case is at least $(|e|/n)q^{m-1}_n$.
\end{enumerate}

Consequently, the success probability \(q^m_n\) satisfies
\[
        q^m_n
        \ge
        \sum_{e\in E\setminus F}\left(\frac{c(e)}{c(E)}\right)q^m_{n-|e|+1}
        +
        \sum_{e\in F}\left(\frac{c(e)}{c(E)}\right)\left(\frac{|e|}{n}\right)q^{m-1}_n.
\]
We show that the right-hand side is at least \(\binom{n}{2}^{-1}\). By induction hypothesis, we know that $q^{m}_{n-|e|+1}\ge \binom{n-|e|+1}{2}^{-1}$ and $q^{m-1}_n\ge \binom{n}{2}^{-1}$. For ease of notation, let $p_e:=c(e)/c(E)$ for each $e\in E$. 
Substituting, we obtain that 
\begin{align*}
    q^m_n 
    &\ge \sum_{e\in E\setminus F}\frac{p_e}{\binom{n-|e|+1}{2}} + \sum_{e\in F}\left(\frac{|e|}{n}\right)\frac{p_e}{\binom{n}{2}}\\
    &\ge \sum_{e\in E\setminus F}\frac{p_e}{\binom{n}{2}}\left({1+\frac{|e|}{n}}\right) + \sum_{e\in F}\left(\frac{|e|}{n}\right)\frac{p_e}{\binom{n}{2}} \quad \quad \text{(by \Cref{prop:uni-numeric})}\\
    &=\frac{1}{\binom{n}{2}}\left(\sum_{e\in E\setminus F}p_e\left({1+\frac{|e|}{n}}\right)+\sum_{e\in F}p_e\frac{|e|}{n}\right)\\
    &=\frac{1}{\binom{n}{2}}\left(\sum_{e\in E}p_e + \sum_{e\in E\setminus F}p_e\left(\frac{|e|}{n}\right)-\sum_{e\in F}p_e\left(1-\frac{|e|}{n}\right)\right)\\
    &\ge \frac{1}{\binom{n}{2}}\left(1 + \sum_{e\in E\setminus F}p_e\left(\frac{|e|}{n}\right)-\sum_{e\in F}p_e\left(1-\frac{|e|}{n}\right)\right).
\end{align*}
It suffices to show that 
\[
\sum_{e\in E\setminus F}p_e\left(\frac{|e|}{n}\right)\ge \sum_{e\in F}p_e\left(1-\frac{|e|}{n}\right). 
\]
Substituting $p_e =c(e)/c(E)$ and $n=|V|$, it suffices to show that 
\[
\sum_{e\in E\setminus F}c(e)|e|\ge \sum_{e\in F}c(e)|V\setminus e|,
\]
which holds by \Cref{lem:uni-singleton-infeasibility-ineq}. 
\end{proof}
\begin{remark}[Why this does not imply a counting bound]\label{rem:uni-not-counting}
\Cref{lem:uni-success} does not imply a bound on the number of budget-feasible cuts in a hypergraph (or even in a graph). 
The proof of \Cref{lem:uni-success} is not target-specific: 
the algorithm may not return a fixed budget-feasible cut $F$. 
In particular, the algorithm may switch to an easier budget-feasible cut in some recursive state owing to budget-feasibility of some singleton cut.
Therefore, the analysis does not imply a cut-counting bound. 
Besides, it is easy to construct a graph instance $(G, c, b)$ in which the number of budget-feasible cuts is exponential in the number of vertices: consider a complete graph with unit edge costs and budget $b=\binom{n}{2}$. 
\end{remark}

\subsection{Runtime}
We bound the expected number of recursive calls made by the algorithm. 
\begin{lemma}[Single-objective recursion tree]\label{lem:uni-runtime}
The expected number of nodes in the recursion tree of 
Single-objective-Algo on an input instance with $n$ vertices is at most 
\[
\frac{n(3n-5)}{2}. 
\]
\end{lemma}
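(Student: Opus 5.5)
We prove by induction on $n+m$ that the expected number of nodes in the recursion tree of Single-objective-Algo, on any instance with $n$ vertices and at most $m$ hyperedges, is at most $T(n):=n(3n-5)/2$. The bound depends only on $n$. This is what lets the deletion branch, which keeps all $n$ vertices, be absorbed.

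\textbf{Base and trivial cases.} Note that $T(2)=1$ and that $T(n)\ge 1$ for all $n\ge 2$. Loop removal and removal of spanning hyperedges happen inside the current call, so they create no new nodes. If the call terminates at Step~2 or Step~3, its tree is a single node and the bound holds. In particular every $n=2$ call terminates there, since after Steps~1 and~2 no hyperedges remain.

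\textbf{Recurrence.} Otherwise Step~4 is reached. Then $n\ge 3$, and the sampled hyperedge $e$ has size $a=|e|\in\{2,\dots,n-1\}$, because there are no loops or spanning hyperedges. The contraction call runs on an instance with $n-a+1$ vertices and fewer hyperedges. The deletion call happens with probability at most $a/n$, on an instance with $n$ vertices and fewer hyperedges. Conditioning on $e$ and applying the induction hypothesis to both calls, the expected number of nodes is at most $1+T(n-a+1)+\frac{a}{n}T(n)$. It therefore suffices to show, for every integer $a\in[2,n-1]$,
\[
g(a):=\Bigl(1-\frac{a}{n}\Bigr)T(n)-T(n-a+1)-1\ \ge\ 0 .
\]

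\textbf{The key inequality.} The first term of $g$ is linear in $a$, and $T(n-a+1)$ is a convex quadratic in $a$. Hence $g$ is concave in $a$, and it suffices to check the two endpoints.
\begin{itemize}[leftmargin=2em]
\item At $a=2$, both $(1-2/n)T(n)$ and $1+T(n-1)$ equal $(3n^2-11n+10)/2$, so $g(2)=0$. This equality explains the specific choice of $T$.
\item At $a=n-1$, we get $\frac{1}{n}T(n)=(3n-5)/2$ and $1+T(2)=2$. So $g(n-1)=(3n-9)/2\ge 0$ for $n\ge 3$.
\end{itemize}
Averaging over the sampling distribution of $e$ then gives the bound $T(n)$.

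The main obstacle is finding a potential function that depends only on $n$ and satisfies this inequality with the deletion term $\frac{a}{n}T(n)$ present. The quadratic $T$ is pinned down by tightness at $a=2$, and concavity handles all the remaining sizes $a$.
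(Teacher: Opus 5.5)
Your proof is correct and follows the paper's argument: the same induction on $n+m$ with a bound depending only on $n$, and the same per-$e$ recurrence $1+T(n-a+1)+\frac{a}{n}T(n)\le T(n)$. The only difference is how this inequality is checked. The paper simplifies the left side directly to $\frac{(n-a)(3n-3a+1)}{2}+\frac{a(3n-5)}{2}$ and then uses $a\ge 2$. You instead use concavity in $a$ and check the two endpoints; in fact the slack is exactly $g(a)=\frac{3(n-a)(a-2)}{2}$.
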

\begin{proof}
Let \(R(n,m)\) be the worst-case expected number of nodes in the recursion tree of Single-objective-Algo for an input instance that has \(n\) vertices and \(m\) hyperedges. We will show the following by induction on $n+m$:
\[
R(n, m) \le \frac{n(3n-5)}{2}. 
\]

The base case corresponds to $n+m=2$ and in particular $n=2$ and $m=0$. In this case, the algorithm terminates without any recursive calls and hence $R(n,m)=1$. We now show the induction step. 

Let $(H, c, b)$ be an input instance with $n$ vertices and $m$ hyperedges. If there exists a singleton cut that is budget-feasible, then $R(n, m)\le 1$. Suppose no singleton cut is budget-feasible. Then, $n\ge 3$. Let $e$ be a sampled hyperedge. We note that $|e|\ge 2$. 
The contraction branch instance has \(n-|e|+1\) vertices because \(|e|\) vertices are identified into one.  The deletion branch instance has the same number of vertices, but one fewer hyperedge. Hence,
\begin{align*}
    &E\left[\text{Number of recursive calls for }(H, c, b)|e\text{ is sampled}\right]\\
    &\quad \quad \le 1+R(n-|e|+1,m) + \frac{|e|}{n}R(n, m-1)\\
    &\quad \quad \le 1+\frac{(n-|e|+1)(3(n-|e|+1)-5)}{2} + \frac{|e|}{n}\left(\frac{n(3n-5)}{2}\right) \quad \quad \text{(by induction hypothesis)}\\
    &\quad \quad = \frac{(n-|e|)(3n-3|e|+1)}{2} + \frac{|e|(3n-5)}{2}\\
    &\quad \quad \le \frac{(n-|e|)(3n-5)}{2} + \frac{|e|(3n-5)}{2} \quad \quad \text{(since $|e|\ge 2$)}\\
    &\quad \quad = \frac{n(3n-5)}{2}. 
\end{align*}
Thus, $R(n,m)\le n(3n-5)/2$. 
\end{proof}

All steps of Single-Objective-Algo can be implemented to run in polynomial time---the only non-trivial step is Step 7(b), which can be implemented to run in polynomial time by Proposition \ref{prop:uni-reductions}(iv). Thus, 
\Cref{lem:uni-success,lem:uni-runtime} together prove Theorem \ref{thm:uni}.  

\begin{remark}\label{rem:hedgegraph-results}
The algorithm in this section leads to a new algorithm that yields two new results for the \emph{hedgegraph min-cut problem}.
We recall that a \emph{hedgegraph} is specified by a vertex set and a collection of hedges, where each hedge is a vertex-disjoint collection of hyperedges. In the hedgegraph min-cut problem, we are given a hedgegraph with a non-negative cost function $c$ on the hedges, and the goal is to find a minimum-cost subset of hedges whose removal disconnects the hedgegraph. Now, consider the budgeted variant of the problem where we are also given a budget $b\in \Z_{\ge 0}$ and the goal is to verify whether there exists a hedge-cut of cost at most $b$. For a hedge $e=\{h_1, h_2, \ldots, h_s\}$, where $h_1, h_2, \ldots, h_s$ are the vertex disjoint hyperedges constituting $e$, we define size$(h):=|\cup_{i=1}^s h_i|$. The algorithm is identical to Algorithm 3 except for Step 7: after sampling a hedge $e$, instead of exploring the deletion branch with probability $|e|/|V|$, we would explore the deletion branch with probability $\text{size}(e)/|V|$. We can show that the success probability of this algorithm is $\binom{n}{2}^{-1}$ while the expected number of nodes in the recursion tree is $\min\{m^{O(\log{n})}, 2^{O(b)}n\}$. Thus, we obtain (i) a quasi-polynomial runtime for hedgegraph min-cut as well as (ii) a fixed-parameter algorithm parameterized by the solution size for hedgegraph min-cut in the unit-cost setting. 
We note that our run-time is strongly quasi-polynomial time while the known quasi-polynomial time algorithm of Ghaffari, Karger, and Panigrahi \cite{GKP17} is only weakly quasi-polynomial time (i.e., depends on the cost function). Moreover, our algorithm is relatively simpler and more natural than the known FPT parameterized by the solution size \cite{FGKLS25}. Another interesting feature is that the same algorithm leads to both results. 
\end{remark}

\end{document}